\RequirePackage{fix-cm}
\documentclass[a4paper,UKenglish]{lipics}                     
\hypersetup{%
   breaklinks,%
   colorlinks=true,%
   linkcolor=[rgb]{0.45,0.0,0.0},%
   urlcolor=[rgb]{0.05,0.390,0.0},%
   citecolor=[rgb]{0,0,0.45}                                          
}%
                                                                      
\usepackage{microtype}
\usepackage{lmodern} 

\usepackage{amssymb,latexsym,amsthm}
\usepackage{adjustbox}
\usepackage{algorithm}

\usepackage{enumerate}
\usepackage{appendix}
\usepackage{amsmath}
\usepackage{enumitem} 
\usepackage[numbers]{natbib}

\usepackage{etoolbox}
\makeatletter
    \pretocmd{\NAT@citexnum}{\@ifnum{\NAT@ctype>\z@}{\let\NAT@hyper@\relax}{}}{}{}
\makeatother

\newcommand{\CCite}[1]{\citeauthor{#1}~\cite{#1}}

\usepackage{xspace} 
\usepackage{color}%

\usepackage{subfig}
\usepackage{pgf,tikz}

\usepackage{thmtools, thm-restate} 
\usepackage[nameinlink]{cleveref} 

\usepackage[noend]{algorithmic}
\Crefname{ALC@unique}{Line}{Lines} 

\crefname{ineq}{inequality}{inequalities}
\creflabelformat{ineq}{#2{\upshape(#1)}#3} 
\crefname{claim}{claim}{claims}           
\crefname{defn}{definition}{definitions}  
\crefname{lemma}{lemma}{lemmas}

\newlist{clenum}{enumerate}{1} 
\setlist[clenum]{label=(\alph*),ref=\textup{\theclaim~(\alph*)}}
\crefalias{clenumi}{claim} 
\newlist{thenum}{enumerate}{1} 
\setlist[thenum]{label=(\alph*),ref=\textup{\thetheorem~(\alph*)}}
\crefalias{thenumi}{theorem} 

\usetikzlibrary{arrows,shapes,calc,intersections,through,backgrounds,patterns}
\newtheorem{lem}{Lemma}
\newtheorem{claim}{Claim}[section]
\newtheorem{defn}{Definition}[section]

\newcommand{\Real}{\ensuremath{\mathbb{R}}\xspace}%
\newcommand{\Realp}{\ensuremath{\mathbb{R}^{+}}\xspace}%

\newcommand{\mmc}{\text{MetricMultiCover}\allowbreak\ensuremath{(X, Y, k, \alpha)}}
\newcommand{\spec}{\text{ComputeServerSubsets}\allowbreak\ensuremath{(X, Y, k)}}
\newcommand{\specn}{\text{ComputeServerSubsets}\allowbreak\ensuremath{(X, Y, \kappa)}}
\newcommand{\mmcn}{\text{NonUniformCover}\allowbreak\ensuremath{(X, Y, \kappa, \alpha)}}
\newcommand{\lgt}{\text{largest}}

\newcommand{\yd}[2]{\ensuremath{y_{#1}({#2})}}

\newcommand{\oX}{\ensuremath{\overline{X}}}

\newcommand{\cost}{\texttt{cost}}
\newcommand{\thr}{\ensuremath{\texttt{th}}}

\newcommand{\rhi}[1]{\ensuremath{\rho_{i}(#1)}}
\newcommand{\RHI}[1]{\ensuremath{\rho_{\lambda_{#1}}}}

\newcommand{\YN}[2]{\ensuremath{N_{#2}(#1)}}

\newcommand{\GL}[1]{\ensuremath{G_{\lambda_{#1}}}}
\newcommand{\HL}[1]{\ensuremath{H_{\lambda_{#1}}}}
\newcommand{\XL}[1]{\ensuremath{X_{\lambda_{#1}}}}
\newcommand{\A}[2]{\ensuremath{A_{#2}{(#1)}}}
\newcommand{\K}[1]{\ensuremath{\lambda_{#1}}}
\newcommand{\F}{\ensuremath{{\cal{F}}}\xspace}
\newcommand{\Ys}[1]{\ensuremath{Y^s_{\lambda_{#1}}}}
\newcommand{\Yp}[1]{\ensuremath{Y^p_{\lambda_{#1}}}}
\newcommand{\Cover}{\text{Cover}}
\newcommand{\shat}[1]{\ensuremath{{#1}_1}}
\newcommand{\ns}[2]{\texttt{nn}({#1}, {#2})}
\newcommand{\NN}[2]{\texttt{NN}({#1}, {#2})}
\newcommand{\MU}[1]{\ensuremath{\mu_{\lambda_{#1}}}}
\newcommand{\DOWNTO}{\textbf{downto }}

\def\disk{\delta}

\begin{document}
\title{On Metric Multi-Covering Problems}

\author{Santanu Bhowmick\thanks{santanu-bhowmick@uiowa.edu}}
\author{Tanmay Inamdar\thanks{tanmay-inamdar@uiowa.edu}}
\author{Kasturi Varadarajan\thanks{kasturi-varadarajan@uiowa.edu}}
\affil{
  Department of Computer Science\\
  University of Iowa, Iowa City, USA\\
  }
  
\authorrunning{S.\,Bhowmick, T.\,Inamdar and K.\,Varadarajan} 

\Copyright{Santanu Bhowmick, Tanmay Inamdar and Kasturi Varadarajan}

\subjclass{I.3.5 Computational Geometry and Object Modeling}
\keywords{Approximation Algorithms, Set Cover}
  
\maketitle
\begin{abstract}
   In the metric multi-cover problem (MMC), we are given two
   point sets $Y$ (servers) and $X$ (clients) in an arbitrary metric space $(X \cup Y, d)$, a
   positive integer $k$ that represents the coverage demand of each client, and
   a constant $\alpha \geq 1$. Each server can have a single ball of arbitrary
   radius centered on it. Each client $x \in X$ needs to be covered by at least
   $k$ such balls centered on servers. The objective function that we wish to
   minimize is the sum of the $\alpha$-th powers of the radii of the balls. 

   In this article, we consider the MMC problem as well as some non-trivial generalizations, such
   as (a) the {\em non-uniform MMC}, where we allow client-specific demands, and (b) the {\em $t$-MMC}, where we require the number of 
   open servers to be at most some given integer $t$. For each of these
   problems, we present an efficient algorithm that reduces the problem to
   several instances of the corresponding $1$-covering problem, where the
   coverage demand of each client is $1$.  Our reductions preserve optimality
   up to a multiplicative constant factor.

   Applying known constant factor approximation algorithms for $1$-covering, we obtain the first constant approximations for the MMC and these generalizations.
\end{abstract}
\section{Introduction}
\label{sec:intro}
In the metric multi-cover problem (MMC), the input consists of two point sets $Y$ (servers) and $X$ (clients) in an arbitrary metric
space $(X \cup Y, d)$, a positive integer $k$ that represents the coverage
demand of each client, and a constant $\alpha \geq 1$. 

Consider an assignment $r: Y \rightarrow \Real^+$ of {\em radii} to each server
in $Y$. This can be viewed as specifying a ball of radius $r(y)$ at each server $y \in Y$. If, for each client $x \in X$, at least $j$ of the corresponding server balls contain $x$, then we say that $X$ is $j$-covered by $r$. That is, $X$ is $j$-covered if for each $x \in X$, 
\[ 
    | \{ y \in Y \ | \ d(x, y) \leq r(y) \} | \geq j.
\]
The {\em cost} of assignment $r$ is defined to be the sum of the $\alpha$-th powers of radii of the corresponding balls. That is, $\cost(r) = \sum_{y \in Y} \left ( r(y)
\right )^{\alpha}$.

Any assignment  $r: Y \rightarrow \Real^+$ that $k$-covers $X$ is a feasible
solution to the MMC problem. The objective to be minimized is the cost  $\sum_{y
\in Y} \left ( r(y) \right )^{\alpha}$.  We assume that $k \leq |Y|$, for
otherwise there is no feasible solution. 

In this article, we consider the MMC as well as some more general variants. In
one variant, the {\em non-uniform MMC}, we allow each client to specify its own
coverage requirement. In another, called the {\em $t$-MMC problem}, we require
that the number of servers used is at most some input integer $t$. Here, the
algorithm will have to determine which subset $Y' \subseteq Y$ containing at
most $t$ servers to use, and then $k$-cover $X$ using that subset. These
problems are \textsf{NP}-hard in general, as we point out below, and we are
interested in approximation algorithms that run in polynomial time.  

\subsection{Prior Work}
\label{ssec:related}
We review some of the previous works which considered the MMC problem and its
variants. A version of the MMC problem arises naturally in fault-tolerant
wireless sensor network design. The clients ($X$) and the servers ($Y$) are
points in the plane, and the requirement is to construct disks centered at the
servers such that each client is covered by at least $k$ distinct server disks.
A server disk corresponds to the area covered by some wireless antenna placed at
the server, whose power consumption is proportional to the area being serviced
by the antenna. The objective is to minimize the power consumption of
the antennas placed at the servers while meeting the coverage requirement of
each client. This problem is thus a special case of the MMC problem, with $X,
Y$ in the plane and $\alpha = 2$. This special case has been studied in several
recent works~\cite{Abu-AffashCKM11,BhowmickVX13,BhowmickVX15}. 

~\CCite{Abu-AffashCKM11} considered the special case $\alpha = 2$ of the MMC problem where
$X$ and $Y$ are subsets of $\Real^2$ and the metric is the Euclidean distance.
They gave an $O(k)$ approximation for the problem. (Throughout the paper, it is
implicit that we only refer to polynomial time algorithms.) Following their
work, ~\CCite{BhowmickVX13} gave an $O(1)$ approximation, thus obtaining a
guarantee that is independent of the coverage demand. Their approximation
guarantee also holds for a {\em non-uniform} generalization of the MMC problem
where each client can have an arbitrary demand. The algorithm
in~\cite{BhowmickVX13} was further generalized in the journal article~\cite{BhowmickVX15}, in which
an approximation guarantee of $4 \cdot (27\sqrt{2})^{\alpha}$ was achieved for
the MMC problem in the plane, for any $\alpha \geq 1$. 

In $\Real^d$, their approximation guarantee is $(2d) \cdot (27
\sqrt{d})^{\alpha}$, which depends on the dimension. Motivated by this,
\CCite{BhowmickVX15} ask whether an $O(1)$ guarantee is possible in an arbitrary metric
space. This is one of the questions considered in this article. The metric space setting generalizes not only the Euclidean distance in any dimension, but also the
shortest path distance amidst polygonal obstacles in $\Real^2$ or $\Real^3$, and
in graphs.

~\CCite{Bar-YehudaR13} were the first to give an approximation algorithm for the
MMC problem in which $X,Y$ are points in an arbitrary metric space.
They presented a $3^{\alpha}\cdot k$ approximation guarantee, using the
local-ratio technique. They also consider the non-uniform version of the
problem, where the coverage demand of each client is an arbitrary integer that is not necessarily related to the demands of the other clients. They obtain a $3^{\alpha}\cdot k_{\texttt{max}}$ approximation for this version, where $k_{\texttt{max}}$ is now the maximum client demand. Their guarantees also hold for minimizing a more general objective function  $\sum_{y
\in Y} \left ( w_y r(y) \right )^{\alpha}$, where weight $w_y \geq 0$ is specified for each server $y$ as part of the input; we do not address this objective function here.

The case $k = 1$ for the MMC problem is a traditional covering problem, and has
a much longer history. For the general metric setting (with $k =
1$), the primal-dual method has been successful in obtaining constant factor
approximations, as demonstrated in~\cite{CharikarP04,FreundR03}. In $\Real^d$ for any fixed $d$, a polynomial-time
approximation scheme (PTAS) using dynamic programming exists, as shown
by~\CCite{BiloCKK05} (improving on the work of~\CCite{Lev-TovP05} who obtained a
PTAS for the plane and $\alpha = 1$). The case $\alpha = 1$ is somewhat special -- as shown by \CCite{gibson2012clustering}, the problem can be solved exactly in polynomial time if the underlying metric is $\ell_{\infty}$ or $\ell_1$.  

The MMC problem is known to be \textsf{NP}-hard even when $X, Y$ are point sets in the
plane and $k = 1$, for any $\alpha > 1$. This was established
by~\CCite{BiloCKK05} for $\alpha \geq 2$, and subsequently for $\alpha > 1$
by~\CCite{AltABEFKLMW06}.

\noindent {\bf MMC with cardinality constraints.} To our knowledge, the $t$-MMC  problem, where we are given a bound $t$ on the number of servers that can be
opened, has not been studied in its generality. The special case of $1$-covering ($k = 1$) has,
however, received considerable attention. Here, one wants to find $t$ server
balls to cover the clients, and minimize the sum of (the $\alpha$-th powers of) the radii of the balls; this
may be compared to the $t$-center problem, where one wants instead to minimize
the maximum radius. For this special case of $t$-MMC, Charikar and Panigrahy
\cite{CharikarP04} address the metric setting and give an $O(1)$ approximation.
Although they explicitly address only the case $\alpha = 1$, their guarantee
generalizes to any $\alpha \geq 1$. Exploiting the special structure for the case
$\alpha = 1$,~\CCite{gibson2012clustering} give a polynomial time algorithm for solving the problem exactly in $\Real^d$ if the underlying metric is $\ell_{\infty}$ or $\ell_1$; for the $\ell_2$ metric they obtain a polynomial time approximation scheme.     

\noindent {\bf Related Results.} Fault tolerant versions of other related problems have also been studied in the literature -- facility location \cite{guha2003constant, swamy2008fault, byrka2010fault}, $t$-median \cite{hajiaghayi2016constant}, and $t$-center \cite{khuller2000fault}. Constant factor approximations are known for all these problems in the metric setting. In particular, the results for facility location solve the natural LP-relaxation and perform a clever rounding. These rounding methods do not readily extend to our setting. One reason for this is the fact that we are dealing with a covering problem; another reason is the additional constraint that one has to write in the LP saying that each server can house at most one ball.
  
Some recent results that involve geometric set multi-covering problems can be found in~\cite{ChekuriCH12, BansalP12}. Reducing to a set multi-covering problem does not seem to be an effective way to deal with the MMC, partly because in a feasible solution to the MMC each server can contribute only one ball. This issue is discussed in greater detail in~\cite{BhowmickVX15}.

\subsection{Our Results and Techniques}
We present a polynomial-time algorithm that reduces the MMC to several instances
of the $1$-covering version, where $k = 1$. This reduction preserves optimality
to within a constant multiplicative factor. More specifically, our reduction
outputs {\em pairwise disjoint} subsets $Y_1, Y_2, \ldots, Y_k$ of servers such
that computing an optimal $1$-cover of the clients $X$ using each $Y_i$ and
combining the $1$-covers results in a solution whose cost is $O(1)$ of the optimal cost. 

Using a known constant factor approximation algorithm for computing a $1$-cover,
we obtain an $O(1)$ approximation for the MMC problem in any metric space,
achieving a guarantee that is independent of the coverage demand $k$. This
resolves a problem left open by Bhowmick et al.~\cite{BhowmickVX15}, whose
approximation guarantee in the Euclidean setting depends on the dimension.
Concretely, our approximation guarantee is $2 \cdot (108)^{\alpha}$. We have not attempted to optimize the constants, as our focus is on answering the question of whether a guarantee independent of $k$ and the dimension is possible.

Using the same paradigm of reducing to several $1$-covering instances, we obtain the first $O(1)$ approximation for the non-uniform MMC in the metric setting, as well as the first $O(1)$ approximation for the $t$-MMC. 

We now explain some key ideas in this paper. For a client $x$, and any $1 \leq i
\leq |Y|$, let us define the $i$-neighborhood of $x$, $\YN{x}{i}$, to be the set
consisting of the $i$ nearest servers of $x$. At the core of our reduction is an
analysis of the neighborhoods of the clients that may be of independent
interest. In order to motivate this analysis, we first need to explain our high
level plan for the server subsets $Y_1, Y_2, \ldots, Y_k$ in the MMC. As
observed in \cite{BhowmickVX15}, the optimal MMC solution can be viewed, up to a
constant factor approximation, as a sequence $\rho_{1}, \rho_{2}, \ldots, \rho_{k}$, where each $\rho_{i}$ is a cover of $X$. In particular, $\rho_{i}$ is a special type of cover, called an outer cover of level $i$. This means that for each client $x$, there is a large ball in 
$\rho_{i}$ that contains $x$ -- a ball whose radius is at least as large as 
the distance from $x$ to its $i$-th nearest server.  
 
Our plan for the server subsets $Y_1, Y_2, \ldots, Y_k$ is that for each $1 \leq
i \leq k$, $Y_i$ shall be ``almost'' a hitting set for $\rho_i$. If this can be achieved, then we can obtain a cover of $X$ using just the servers in $Y_i$ by moving each ball in $\rho_{i}$ to a server in $Y_i$ that hits it, and expanding the ball slightly. The cost of this cover is within a constant of that of $\rho_{i}$. Doing this for each $1 \leq i \leq k$, we get $k$ covers of $X$ whose total cost is within a constant of the optimal MMC solution. Furthermore, the fact that the subsets $Y_1, Y_2, \ldots, Y_k$ are pairwise disjoint implies that these $k$ covers together form a valid MMC solution.   

Thus, we would like each $Y_i$ to be a hitting set for the corresponding outer cover $\rho_{i}$. Note, however, that we do not know anything about $\rho_{i}$, as it comes from the unknown MMC optimum. Therefore, we aim for an equivalent goal -- we would like $Y_i$ to be a hitting set for the $i$-neighborhoods of the clients. More concretely, we ask: can we extract $k$ pairwise disjoint server subsets $Y_1, Y_2, \ldots, Y_k$ such that for each $1 \leq i \leq k$ and each client $x$, $\YN{x}{i} \cap Y_i \neq \varnothing$?

This specification is too stringent, and the answer to this question is ``no''. Let $k = 2$, and suppose there are two clients at distance $1$ from each other, one server that is co-located with the first client, and a second server that is co-located with the second client. In this example, both servers would have to be in $Y_1$, leaving no server for $Y_2$.

Thus, we need a weaker specification for the $Y_i$ that is still sufficient for our purposes. To describe it, we need one more notion.  Let $G_i= (X, E_i)$ be the intersection graph of $i$-neighborhoods of $X$ i.e.\
$(x_1, x_2) \in E_i$ iff $\YN{x_1}{i} \cap \YN{x_2}{i} \neq \varnothing$. What we are able to show is the following.

\begin{lem}
\label{lem:spec}
Assume $k$ is even. We can efficiently compute a set
\[ \left(\bigcup_{i = \frac{k}{2} + 1}^k Y_i^s\right) \cup \left(\bigcup_{i = \frac{k}{2} + 1}^k Y_i^p\right)  \]
of $k$ pairwise disjoint server subsets such that for each $\frac{k}{2} + 1 \leq i \leq k$ and each client $x \in X$, there is a client $x'$ within two hops of $x$ in $G_{i}$ such that $\YN{x'}{i} \cap Y_i^{s} \neq \varnothing$ (resp. $\YN{x'}{i} \cap Y_i^{p} \neq \varnothing$).
\end{lem}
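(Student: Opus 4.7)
My plan is as follows. For each level $i \in \{k/2+1, \dots, k\}$, I first compute a maximal independent set $I_i$ in the graph $\G{i}$. By the definition of $\G{i}$ together with independence, the neighborhoods $\{\YN{x}{i} : x \in I_i\}$ are pairwise disjoint; by maximality, every client in $X$ lies within one hop in $\G{i}$ of some member of $I_i$. Consequently, if I could allocate to every $x \in I_i$ two distinct servers from $\YN{x}{i}$, one contributing to $Y_i^s$ and one to $Y_i^p$, with all such allocations pairwise disjoint across levels, then the one-hop witness condition holds---and hence the two-hop condition stated in the lemma holds immediately for both $s$ and $p$.

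To realize the allocation, I would process the levels in bottom-up order $i = k/2+1, k/2+2, \ldots, k$, maintaining a pool $U$ of already-used servers, and at each level greedily assign two servers from $\YN{x}{i} \setminus U$ to each $x \in I_i$ in some order. Bottom-up is natural because $\YN{x}{i} \subseteq \YN{x}{j}$ for $j > i$, so higher levels carry more slack to absorb losses incurred at lower levels. A client $x \in I_i$ for which fewer than two unused servers remain in $\YN{x}{i}$ is marked \emph{deferred} at level $i$; the challenge is to handle such clients while still satisfying the two-hop witness condition.

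The main obstacle is precisely this deferral analysis: showing that for every deferred $x \in I_i$ there is a successful level-$i$ representative within a few hops of $x$ in $\G{i}$. The central structural observation is that each $y \in U \cap \YN{x}{i}$ was allocated at some earlier level $j < i$ for some $x'' \in I_j$, and since $\YN{x''}{j} \subseteq \YN{x''}{i}$, we have $y \in \YN{x}{i} \cap \YN{x''}{i}$, forcing the edge $(x, x'') \in E_i$. Thus a deferred $x$ is surrounded in $\G{i}$ by many prior-level representatives $x''$. A pigeonhole/counting argument applied within the two-hop $\G{i}$-ball around $x$ should then show that not every $I_i$-member in that ball can itself be deferred: otherwise the prior-level allocations responsible for the failures would have to exceed the combined supply of $(k/2+1)$-sized neighborhoods available inside the ball, which in turn contradicts the disjointness enforced by the independent sets $I_j$ at each level $j < i$.

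Once this invariant is established, pairwise disjointness of the output $Y_i^s, Y_i^p$ follows directly from the invariant on $U$, and the whole procedure runs in polynomial time, since it amounts to $O(k)$ MIS computations plus greedy allocations at each level. The hardest single step will be the pigeonhole analysis inside the two-hop $\G{i}$-ball: it must delicately balance the accumulating number of lower-level allocations against the growing neighborhood sizes $i \geq k/2+1$ and the level-wise disjointness supplied by each $I_j$, and must thread carefully between the various levels to rule out cascades of simultaneous deferrals.
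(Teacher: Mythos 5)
Your plan diverges from the paper's in three load-bearing ways, and the step you yourself flag as ``the hardest'' is exactly where it breaks down; the paper's construction is engineered precisely so that a deferral never occurs. First, the paper uses a \emph{hierarchy} of $3$-nets $X_k \subseteq X_{k-1} \subseteq \cdots \subseteq X_l$ (not a fresh maximal independent set per level) and processes levels \emph{top-down}, from $i=k$ down to $l$. The $3$-net property gives \Cref{cl:chosen-unchosen}: for a client $x$ not yet in the net at level $j$, at most one net client of $X_j$ has a $j$-neighborhood meeting $\YN{x}{j}$, so $x$ loses at most one server per iteration before it enters the net, yielding the quantitative invariant $|\A{x}{i}| \geq 2i-k \geq 2$ at $i=\thr(x)$ (\Cref{cl:loop-invariant}). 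With an independent MIS at each level and bottom-up processing there is no such per-level bound: at a level $j<i$, arbitrarily many members of $I_j$ can have $j$-neighborhoods intersecting $\YN{x}{i}$ (they are only pairwise independent in $G_j$, not sparse around $x$ in $G_i$), so a single earlier level can consume all of $\YN{x}{i}$, and the pigeonhole you hope for inside the two-hop ball has no disjointness to lean on. Second, you draw both servers for $x$ from $\YN{x}{i}$, whereas the paper draws the ``private'' server from the $l$-neighborhood $\YN{x}{l}$ and the ``shared'' server as the \emph{farthest} available one; these two rules are what make \Cref{cl:subset} and \Cref{cl:atmost-two} go through, namely that private choices never deplete the neighborhoods of clients entering the net later, and that a private server remains findable for $x$ itself in its own later iterations.

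Third, even granting your deferral analysis, the conclusion would be too weak for the statement. A maximal independent set only guarantees that every client is one hop from $I_i$, and two members of $I_i$ are at least two hops apart; so if a client $z$'s unique nearby representative $x\in I_i$ is deferred and the nearest successful representative is two hops from $x$, then $z$ is three hops from any witness, violating the two-hop requirement. Your own phrasing (``within a few hops'') concedes this. The paper avoids the issue entirely by proving (\Cref{cl:available}) that available servers always exist when sought, so every net client succeeds and the $3$-net radius of two hops is exactly what the lemma needs. To salvage your route you would have to both replace the per-level MIS by a hierarchical $3$-net (or otherwise prove deferrals cannot occur) and adopt selection rules that protect the neighborhoods of clients entering the net later; as written, the proposal has a genuine gap.
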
  

Note that we have weakened the original specification in two ways. First, instead of considering $i$-neighborhoods for each $1 \leq i \leq k$, we only consider
$i$-neighborhoods for each $\frac{k}{2} + 1 \leq i \leq k$, but now require two
hitting sets for each such $i$. Second, for a fixed $\frac{k}{2} + 1 \leq i \leq
k$, we do not require that $Y_i^{s}$ hits the $i$-neighborhood of \emph{every} client in $X$. We only require that for any client $x$, there is some client $x'$ that is `near' $x$ such that $Y_i^{s}$ intersects the $i$-neighborhood of $x'$. The requirement for $Y_i^p$ is also relaxed in this way. The notion of `near' is a natural one -- that of being within a distance of $2$ in the intersection graph $G_i$ of the $i$-neighborhoods.

The proof of~\Cref{lem:spec}, which is given in~\Cref{sec:spec}, is delicate. We
construct the family $Y_k^{s}, Y_k^{p}, Y_{k-1}^{s}, Y_{k-1}^{p}, \ldots, Y_{\frac{k}{2} +
1}^{s}, Y_{\frac{k}{2} + 1}^{p}$ in that order, but we have to be careful while
picking the earlier subsets to ensure that there are suitable servers left for
building the later subsets.

The algorithm for the MMC, which builds on \Cref{lem:spec} as outlined above,
is given in \Cref{sec:algo}. In the case of the non-uniform MMC, the situation
is complicated by the fact that clients can have different demands. Nevertheless, we are able to extend the scheme of extracting disjoint server subsets and reducing to suitable $1$-covering instances.  Because of the varying coverage demands, a generated $1$-covering instance may only involve a subset of the clients. The algorithm for the non-uniform MMC is described in \Cref{sec:conc}.

For the $t$-MMC, which is addressed in \Cref{sec:tMMC}, the computation of
disjoint server subsets is identical to that of the MMC. However, the reduction
to $1$-covering is subtler as we have to worry about how many open servers are
allowed for each $1$-covering instance. One tool we develop to address this
issue is the extraction of $k$ outer covers from the optimal solution with
additional guarantees on the number of servers opened in each outer cover. It is
worth pointing out that in the case of $1$-covering in the context of $t$-MMC,
the only known approximation is the somewhat involved algorithm of Charikar and
Panigrahy \cite{CharikarP04}. Thus, it is especially fortuitous that we are able
to deal with the $t$-MMC by reducing to the $1$-covering case, for which we can
use their algorithm as a black box.

\section{Preliminaries}
\label{sec:terminology}
In this section, we define some notation and some needed tools from prior work.

Let $\disk(p,r)$ denote the ball of radius $r$ centered at $p$, i.e., $\disk(p,
r) = \{ u \in X \cup Y \mid d(p, u) \leq r \}$. For brevity, we slightly abuse
the notation and write $\disk(p, d(p,q))$ as $\disk(p, q)$. The {\em cost} of a
set $B$ of balls, denoted $\cost(B)$, is defined to be the sum of the
$\alpha$-th powers of the radii of the balls. 

Any assignment $r: Y \rightarrow \Real^+$ corresponds to the set of balls $\{
\disk(y, r(y)) \mid y \in Y \}$. Note that the cost of assignment $r$ is the same
as the cost of the corresponding set of balls. Instead of saying that $r$
$j$-covers $X$, we will often say that the corresponding set of balls $j$-covers
$X$. We will say that a set of balls \emph{covers} $X$ instead of saying it $1$-covers $X$. 

For each
$x \in X$ and $1 \leq j \leq |Y|$, we define $\yd{j}{x}$ to be the $j$-th
closest point in $Y$ to $x$ using distance $d$. The ties are broken arbitrarily. For any $x \in X$, we define the $i$-neighborhood ball of a
client $x$ as $\disk(x, \yd{i}{x})$. We define the
$i$-neighborhood of $x$, $\YN{x}{i}$,  as $\{ \yd{j}{x} \mid 1 \leq j \leq i
\}$. 

For $1 \leq i \leq k$, let $G_i= (X, E_i)$ be the intersection graph of $i$-neighborhoods of $X$ i.e.\
$(x, x') \in E_i$ iff $\YN{x}{i} \cap \YN{x'}{i} \neq \varnothing$.

\subsection{Computing \texorpdfstring{$1$}{1}-covers}
We will need as a black-box an algorithm that, given subsets $X' \subseteq X$ and $Y' \subseteq Y$, computes a $1$-cover of $X'$ using servers in $Y'$. That is, the algorithm must return an assignment of
radii $r: Y' \rightarrow \Realp$ such that each client $x \in X'$ is contained in at least one ball centered on a server in $Y'$. Computing a  $1$-cover of minimum cost is thus the
special case for the MMC problem where $k = 1$. As mentioned
in~\Cref{ssec:related}, even this version is \textsf{NP}-hard, but it does admit
constant-factor approximations~\cite{CharikarP04,Bar-YehudaR13}. Let
$\text{Cover}(X', Y', \alpha)$ denote an algorithm that returns a $1$-cover of 
$X'$ using servers in $Y'$ with cost at most
$3^{\alpha}$ times the cost of an optimal $1$-cover.  

\subsection{Outer Cover}
Our work also relies on the notion of an \textit{outer cover}, which is
described in~\CCite{BhowmickVX15}. We adopt the definition of an outer cover
from~\cite{BhowmickVX15} as follows:
\begin{defn}
\label{def:OC}
Given point sets $X, Y$ in a metric space $(X \cup Y, d)$, positive integer $i$
and $\alpha \geq 1$, an outer cover of level $i$ is an assignment $\rho_i: Y
\rightarrow \Real^+$ of radii to the servers such that for each client $x \in
X$, there is a server $y \in Y$ such that
\begin{enumerate}
        \item The ball $\delta(y, \rhi{y})$ contains $x$ i.e.\ $d(y, x) \leq
            \rhi{y}$.
        \item Radius of the ball at $y$ is large, that is, $\rhi{y} \geq d(x, \yd{i}{x})$
\end{enumerate}
\end{defn}

Given a level $i$ outer cover $\rho_i$, and a client $x \in X$, any server $y$ that 
satisfies the two conditions in the definition above is said to {\em serve} $x$;
we also say that the corresponding ball $\delta(y, \rhi{y})$ serves $x$.

To appreciate why outer covers play an important role, consider any $k$-cover of
the set $X$ of clients. Fix $1 \leq i \leq k$. Form a set $B$ of balls by
adding, for each client in $X$, the $i$-th largest ball in the $k$-cover that
covers the client. The set $B$ thus constructed is seen to be a level $i$ outer
cover.

The sum of the costs of the optimal $i$-th level outer covers, for $1 \leq i \leq k$,
gives a lower bound on the cost of the optimal solution to the MMC. This is
stated precisely in the theorem below. The proof can be found in
\cite{BhowmickVX15}, but for completeness, it is extracted and given
in~\Cref{sec:outer-cover}.

\begin{restatable}{theorem}{outercover}
    \label{lem:lbound}
    Let $r': Y \rightarrow \Real^+$ be any assignment that constitutes a
    feasible solution to the MMC problem. For each 
    $1 \leq i \leq k$, let $\mu_i$ denote the cost of an optimal outer cover of level $i$. Then
 \[ \sum\limits_{i = 1}^k \mu_i \leq 3^{\alpha} \cdot
            \cost(r').\]

\end{restatable}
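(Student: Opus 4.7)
The plan is to produce, for each $1 \le i \le k$, an explicit level-$i$ outer cover $\rho_i$ directly from the feasible assignment $r'$, and then bound $\sum_{i=1}^k \cost(\rho_i) \le 3^{\alpha} \cost(r')$. Since $\mu_i \le \cost(\rho_i)$ by optimality of each $\mu_i$, this yields the claim.

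The main pointwise observation driving the construction is the following. Fix a client $x$ and consider the at-least-$k$ servers whose $r'$-balls contain $x$; sort their centers in non-decreasing order of distance to $x$, and write $y^{(i)}_x$ for the $i$-th one. A pigeonhole step---only $i-1$ servers in $Y$ can lie strictly closer to $x$ than $\yd{i}{x}$---forces $d(x, y^{(i)}_x) \ge d(x, \yd{i}{x})$. In particular $r'(y^{(i)}_x) \ge d(x, y^{(i)}_x) \ge d(x, \yd{i}{x})$, so the ball $\disk(y^{(i)}_x, r'(y^{(i)}_x))$ both contains $x$ and has radius large enough to qualify as a level-$i$ outer-cover ball for $x$. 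Collecting these witnesses over all clients gives, for each $i$, a level-$i$ outer cover whose balls are drawn directly from $r'$.

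The heart of the argument is the cost bound. A crude tally over the $k$ levels gives only $\sum_i \cost(\rho_i) \le k \cdot \cost(r')$, which is too weak; to reach $3^{\alpha} \cost(r')$ the construction must be refined and the accounting must be more careful. The refinement amounts to allowing, for each $(x, i)$ pair, either the fallback ball at $y^{(i)}_x$ described above or a ball centered at $\yd{i}{x}$ whose radius is enlarged by a factor of at most three (using the triangle inequality to absorb the required lower bound $d(x, \yd{i}{x})$). One then groups the opened servers by dyadic radius scale, and charges the cost of each selected outer-cover ball back to a unique server in $r'$ at a multiplicative overhead of at most $3^{\alpha}$. Summing the charges recovers the claimed inequality.

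The principal obstacle is precisely the charging scheme: one must ensure that although a server of $r'$ can in principle be selected for many distinct $(x, i)$ pairs, its total contribution across all $k$ outer covers is bounded by $3^{\alpha} r'(y)^{\alpha}$, not by $k \cdot r'(y)^{\alpha}$. I would model this bookkeeping on the argument in the journal version of \cite{BhowmickVX15}, where the analysis is carried out for the Euclidean case, adapting each step to the arbitrary-metric setting of the present theorem. Once this combinatorial accounting is verified, the theorem follows by summing over all $y \in Y$.
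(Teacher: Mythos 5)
Your first step is fine and matches the observation already made in the body of the paper: since the at-least-$k$ servers whose $r'$-balls contain a client $x$ form a subset of $Y$, the $i$-th closest of them is at least as far from $x$ as $\yd{i}{x}$, so its ball contains $x$ and has radius at least $d(x,\yd{i}{x})$, i.e.\ it serves $x$ at level $i$. But, as you yourself note, this only yields $\sum_i \mu_i \leq k\cdot\cost(r')$, and the entire content of the theorem lies in the step you leave unexecuted. You correctly identify the obstacle --- a single server of $r'$ may be selected as a witness on many levels --- but you never supply the mechanism that overcomes it: the ``dyadic radius scale'' grouping is not defined, the alternative ball centered at $\yd{i}{x}$ is never actually used, and the decisive assertion that each server is charged only once across all $k$ levels (at overhead $3^{\alpha}$) is precisely the statement to be proved, which you defer to a citation. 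As written, this is a genuine gap, not a proof.

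For comparison, the paper closes the gap by constructing the $k$ outer covers from \emph{pairwise disjoint} subfamilies of the balls of $r'$. Working top-down for $i=k,\dots,1$, one takes for each client the largest \emph{remaining} ball of $r'$ containing it, then greedily retains a maximal subfamily $B_i$ of pairwise non-intersecting balls among these (always keeping the largest and discarding everything it meets), and deletes $B_i$ from the pool. Two structural facts then do all the work. First, the $B_i$ are disjoint as subsets of the balls of $r'$, so $\sum_i \cost(B_i) \leq \cost(r')$ with no factor of $k$. Second, because the balls within each $B_j$ are pairwise non-intersecting, at most one ball of each previously extracted $B_j$ ($j>i$) contains a given client $x$; hence at least $i$ balls of $r'$ containing $x$ survive to iteration $i$, the largest surviving one has radius at least $d(x,\yd{i}{x})$, and tripling the radii of $B_i$ turns it into a level-$i$ outer cover (the tripled larger ball that eliminated $x$'s witness in the greedy step covers $x$ and is at least as large). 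This is exactly where the $3^{\alpha}$ comes from. Without some such disjoint decomposition of the balls of $r'$ across levels, your charging argument cannot be completed.
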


\section{Partitioning Servers}
\label{sec:spec}
Suppose that we are given two point sets $Y$ (servers) and
$X$(clients) in an arbitrary metric space $(X \cup Y, d)$, and a positive 
integer $k$ that represents the coverage demand of each client, and the constant $\alpha \geq 1$. In this section, we establish the following result, which is \Cref{lem:spec} restated so as to also address the case where $k$ is odd.

\begin{lem}
\label{lem:spec-full}
Let $l = \lceil k/2 \rceil$. We can efficiently compute a family $\F$ of $k$ server subsets such that 
\begin{enumerate}
\item $\F$ contains two subsets $Y_{i}^{s}$ and $Y_{i}^{p}$ for each $l + 1 \leq i \leq k$, and, if $k$ is odd, one additional subset $Y_{l}^{p}$.
\item $\F$ is a pairwise disjoint family, i.e., any two subsets in $\F$ are disjoint.
\item Suppose that (a) $l + 1 \leq i \leq k$ and $Y_i$ is either $Y_{i}^{s}$ or
$Y_{i}^{p}$, or (b) $k$ is odd, $i = l$, and $Y_i = Y_{i}^{p}$. For any client $x \in X$, there is a client $x'$ within two hops of $x$ in $G_{i}$ such that $\YN{x'}{i} \cap Y_i \neq \varnothing$.
\end{enumerate}
\end{lem}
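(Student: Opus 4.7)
The plan is to build the family $\F$ incrementally, processing the index $i$ in decreasing order from $k$ down to $l + 1$ (and, when $k$ is odd, a final round at $i = l$ contributing only $Y_{l}^{p}$). I maintain a set $R \subseteq Y$ of still-available servers, initialized to $R = Y$ and shrunk at the end of each round by removing the servers just chosen; this alone guarantees the pairwise disjointness required by item (2) of the lemma. At round $i$, I compute a maximal family $C_i \subseteq X$ of representative clients such that the $i$-neighborhoods $\YN{x}{i}$ for $x \in C_i$ are pairwise disjoint and every $x \in C_i$ satisfies $|\YN{x}{i} \cap R| \ge 2$. For each $x \in C_i$, I pick two distinct servers from $\YN{x}{i} \cap R$, placing one in $Y_i^{s}$ and the other in $Y_i^{p}$ (only one in $Y_{l}^{p}$ for the special final round). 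Because distinct $x, x' \in C_i$ have $\YN{x}{i} \cap \YN{x'}{i} = \varnothing$, picks within a round never clash.

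For the 2-hop guarantee, fix $i$ and $x \in X$. If $x \in C_i$, then $\YN{x}{i}$ intersects $Y_i$ by construction, so $x$ is its own witness. Otherwise the maximality of $C_i$ fails for $x$ for one of two reasons: either there exists $x' \in C_i$ with $\YN{x}{i} \cap \YN{x'}{i} \neq \varnothing$, which yields a $1$-hop witness in $G_i$; or $|\YN{x}{i} \cap R| < 2$ at the moment round $i$ is processed. The first case is immediate because $x' \in C_i$ has $\YN{x'}{i} \cap Y_i \neq \varnothing$. The second case is the delicate one: here almost all of $\YN{x}{i}$ has been consumed in earlier rounds $j > i$, and for every such consumed server $y \in \YN{x}{i}$ there is a representative $x_j \in C_j$ with $y \in \YN{x_j}{j}$. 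The argument is to exhibit, from among these earlier representatives, some $x_j$ that is itself adjacent in $G_i$ to a member of $C_i$; this produces a path $x$--$x_j$--$x''$ in $G_i$ with $x'' \in C_i$, hence a 2-hop witness.

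The main obstacle is exactly this last step: turning the evidence $y \in \YN{x}{i} \cap \YN{x_j}{j}$ (which a priori gives only a $G_j$-edge, since $y$ may lie in $\YN{x_j}{j} \setminus \YN{x_j}{i}$) into a $G_i$-edge, and ensuring that at least one such $x_j$ sees a $C_i$ representative one hop away in $G_i$. I expect to address this via two coordinated ingredients: a careful pick policy at each round $j$ -- preferring servers at small positions of $\YN{x_j}{j}$ so that, whenever a picked server $y$ lands in $\YN{x}{i}$ for some later client $x$ and smaller index $i$, we also have $y \in \YN{x_j}{i}$ and hence a genuine $G_i$-edge between $x$ and $x_j$ -- together with an inductive invariant bounding $|\YN{x}{i} \cap R|$ from below as a function of $i$ and of the rounds already processed.

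Balancing these two requirements -- leaving enough servers in $\YN{x}{i}$ for future (smaller-$i$) rounds, while simultaneously producing enough $G_i$-adjacency evidence to support the 2-hop argument -- is precisely what forces the construction to proceed in decreasing order of $i$ and explains the emphasis on how earlier picks must be made. The quantitative relationship between $l$ and $k$ in the lemma (only handling $i \ge l + 1 = \lceil k/2 \rceil + 1$) is likely exactly what this accounting needs: the sizes $\YN{x}{i}$ are at least $l + 1$, which leaves enough slack after all rounds above $i$ have taken their share. Once the invariant is in place, the above case analysis completes the proof, and the special additional round for $Y_{l}^{p}$ in the odd case is handled identically, relying on the same invariant applied one level deeper.
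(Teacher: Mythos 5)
Your proposal has the right high-level shape (process $i$ from $k$ down to $l+1$, pick two disjoint servers per representative, exploit $l=\lceil k/2\rceil$ in a counting argument), but the critical step is left unproven, and the way you set up the representatives makes that step harder than it needs to be. By defining $C_i$ greedily at round $i$ subject to the availability constraint $|\YN{x}{i}\cap R|\geq 2$, you create the ``depleted client'' case, and your treatment of it is explicitly only an expectation: you need a $G_i$-edge from $x$ to some earlier representative $x_j$, but $y\in\YN{x}{i}\cap\YN{x_j}{j}$ with $j>i$ only certifies a $G_j$-edge (nothing forces $y\in\YN{x_j}{i}$); you then further need $x_j$ to be one $G_i$-hop from a member of $C_i$, which nothing in your construction guarantees ($x_j$ could itself be depleted and isolated from $C_i$ in $G_i$). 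Your proposed fix of ``preferring servers at small positions'' is not worked out and is in tension with the counting: the invariant you gesture at also fails as stated, because with unrelated representative sets across rounds, a single representative $x_j$ at round $j$ can remove \emph{two} servers from $\YN{x}{i}$ (both of its picks), so the neighborhood of a future representative can shrink by $2$ per round, not $1$, and $2(k-i)$ losses exhaust an $i$-neighborhood well before $i$ reaches $l+1$.

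The paper resolves both problems with one structural device you are missing: a \emph{nested} hierarchy of $3$-nets $X_k\subseteq X_{k-1}\subseteq\cdots\subseteq X_l$ of the graphs $G_i$, computed up front from the graph structure alone, with no reference to server availability. The $2$-hop property is then immediate from the definition of a $3$-net (every $x$ is within two hops of $X_i$ in $G_i$, and every member of $X_i$ deposits a server from its $i$-neighborhood into each of $Y_i^s,Y_i^p$); the entire burden shifts to proving the algorithm never runs out of servers. There the two pick rules are the opposite of what you guessed for one of them: the ``shared'' pick is the \emph{farthest} available server in $\YN{x_c}{i}$, while the ``private'' pick comes from $\YN{x_c}{l}$. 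Nesting gives the key disjointness (\Cref{cl:subset}): for distinct $x_i\in X_i$, $x_j\in X_j$ with $i<j$, both lie in $X_i$, so $\YN{x_i}{i}\cap\YN{x_j}{l}=\varnothing$ — hence private picks never touch the neighborhoods of clients that enter the net later, and the $3$-net property caps the shared-pick damage at one server per round. That yields $|\A{x}{i}|\geq 2i-k\geq 2$ for $i\geq l+1$, and the farthest-server rule is what guarantees a private server survives for the second pick. Without the nesting and the private/farthest dichotomy, your accounting and your $2$-hop argument both have genuine gaps.
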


Before describing the algorithm for computing the family $\F$, we introduce needed concepts. For a positive integer $r$, an $r$-net of a graph $G = (V, E)$ is a set $S \subseteq V$
such that every path in $G$ between any two vertices in $S$ has at least
$r$ edges in it, and for every $u \in V \setminus S$, there exists a vertex
$v \in S$ such that $u$ is reachable from $v$ using a path in $G$ having at most
$r - 1$ edges. An $r$-net is a fairly well-known concept; for instance, a $2$-net is simply an independent set that is maximal by inclusion.

We note that for any $G_i$, $G_j$ such that $l \leq i < j \leq k$, $G_i$
is a sub-graph of $G_j$ since the $i$-neighborhood of any client is 
contained within its $j$-neighborhood. Motivated by the statement of \Cref{lem:spec-full}, we would like to compute a $3$-net $X_i$ of $G_i$, for each $1 \leq i \leq k$. This would ensure that for any client $x \in X$, there is a client $x' \in X_i$ that is within two hops of $x$ in $G_i$.
For the rest of this section, we refer to a $3$-net as simply a net.

\begin{claim}
\label{cl:ruling}
There is a polynomial time algorithm that, given $X$, $Y$, and $k$, computes a
hierarchy
\[
    X_k \subseteq X_{k-1} \subseteq \dots \subseteq X_{2} \subseteq X_1,
\]
where each $X_i \subseteq X$ is a  $3$-net of $G_i$. 
\end{claim}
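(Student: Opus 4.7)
The plan is to construct the nets top-down, starting with $X_k$ and extending each $X_i$ from $X_{i+1}$. The crucial structural fact is that $G_i$ is a subgraph of $G_{i+1}$ for $i < i+1$: if $\YN{x}{i} \cap \YN{x'}{i} \neq \varnothing$, then since $\YN{x}{i} \subseteq \YN{x}{i+1}$ and similarly for $x'$, the same intersection witnesses an edge in $G_{i+1}$. Consequently, graph distances in $G_i$ are pointwise at least graph distances in $G_{i+1}$ (removing edges can only lengthen paths).

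First I would compute $X_k$ by a standard greedy procedure: initialize $X_k := \varnothing$, and while there exists a client $v$ whose $G_k$-distance to $X_k$ is at least $3$ (vacuously, if $X_k$ is empty, pick any vertex), add such a $v$ to $X_k$. When the loop terminates, every client not in $X_k$ is within $2$ hops of some vertex in $X_k$, and by construction any two vertices of $X_k$ are at $G_k$-distance at least $3$, so $X_k$ is a $3$-net of $G_k$.

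Next, for $i = k-1, k-2, \ldots, 1$, I would set $X_i := X_{i+1}$ initially and then greedily extend: while there is a client $v \in X \setminus X_i$ whose $G_i$-distance to $X_i$ is at least $3$, add $v$ to $X_i$. To see that $X_i$ is a valid $3$-net of $G_i$ at the end of this loop, I need to verify both net conditions. The covering condition (every $u \notin X_i$ is within $2$ hops of some vertex in $X_i$ in $G_i$) is immediate from the termination condition. For the independence condition, note that by the inductive hypothesis the vertices inherited from $X_{i+1}$ are pairwise at $G_{i+1}$-distance at least $3$; by the subgraph observation above, they are also pairwise at $G_i$-distance at least $3$. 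Each subsequent vertex added is, by the loop guard, at $G_i$-distance at least $3$ from every current member of $X_i$. Hence $X_i$ is a $3$-net of $G_i$ and by construction contains $X_{i+1}$, so the hierarchy $X_k \subseteq X_{k-1} \subseteq \cdots \subseteq X_1$ is produced.

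The total running time is polynomial since each loop iteration either adds a vertex to some $X_i$ (at most $|X|$ times per level, $k$ levels) or terminates, and each iteration needs only polynomially many $G_i$-distance computations, each of which can be carried out on the pre-computed intersection graphs. The only mildly delicate point is the inheritance of the independence property when moving from $G_{i+1}$ to $G_i$, which is precisely where the subgraph relation $G_i \subseteq G_{i+1}$ plays its role; every other part of the argument is a direct consequence of the greedy construction.
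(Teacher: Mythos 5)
Your proof is correct and follows essentially the same route as the paper: build $X_k$ greedily, then extend each net downward, using the fact that $G_{i}$ is a subgraph of $G_{i+1}$ (so hop distances only grow) to see that the inherited vertices stay pairwise at distance at least $3$. The paper phrases the greedy step via marking vertices within two hops rather than via a distance-$\geq 3$ loop guard, but the construction and the justification are the same.
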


\begin{proof}
Given a net $X_i$ of $G_i$, we describe how to compute a net $X_{i-1}$ of
$G_{i-1}$ such that $X_i \subseteq X_{i-1}$. Since $G_{i-1}$ is a subgraph of
$G_i$, we have that the (hop) distance in $G_{i-1}$ between any two vertices  in
$X_i$ is at least $3$. We initialize $X_{i-1}$ with $X_i$ and assume that all
vertices in $G_{i-1}$ are initially unmarked. We repeat the following process
till $G_{i-1}$ does not contain any unmarked vertices: mark all vertices in
$G_{i-1}$ within distance $2$ of $X_{i-1}$, and then add an arbitrary unmarked
vertex from $G_{i-1}$ to $X_{i-1}$. 

We can construct the hierarchy of nets by starting with an arbitrary net $X_k$ of the graph $G_k$, and then constructing the successive nets in the hierarchy by the process described above. To construct $X_k$ itself, we apply the above method after initializing $X_k$ to be the singleton set consisting of any vertex in $G_k$. 
\end{proof}

\subsection{Computing Disjoint Server Subsets}
\label{ssec:algo}
   
Our algorithm for computing the family $\F$ of server subsets, as claimed in \Cref{lem:spec-full}, is described in~\Cref{alg:HC}. We begin by
setting parameter $l$ to be $\lceil k/2 \rceil$, just as in the statement of \Cref{lem:spec-full}. We then use \Cref{cl:ruling}
to compute a hierarchy of nets, truncating it at $l$: $X_k \subseteq X_{k-1} \subseteq \cdots
\subseteq X_l$. Any client that belongs to
$\bigcup\limits_{i = l}^k X_i$ is termed as a {\em net client}. For each
client $x$, we denote the $l$-neighborhood $\YN{x}{l}$ as the {\em
private servers} of $x$.

\begin{algorithm*}[hbt]
 \caption{\spec}
 \label{alg:HC}
 \begin{algorithmic}[1]
 \setcounter{ALC@unique}{0}
    \STATE For each $y \in Y$, mark $y$ as available.
    \STATE $l \leftarrow \lceil k / 2 \rceil$ \label{line:l}
    \STATE Compute $X_k \subseteq X_{k-1} \subseteq \cdots \subseteq X_l$ using Claim~\ref{cl:ruling}.
    \FOR {$i = k$ \DOWNTO $l$} \label{lin:outer-for-start}
        \STATE Let $Y^s_i \leftarrow \varnothing, Y^p_i \leftarrow \varnothing$.
        \FORALL {$x_c \in X_i$} \label{lin:inner-for-start}
           \IF {$i > l$}
              \STATE $y_s \leftarrow \mbox{ farthest available server in }
                      \YN{x_c}{i}$. \label{lin:shared-server}
              \STATE $Y^s_i \leftarrow Y^s_i \cup \{y_s\}$. Mark $y_s$ as not
                 available.
           \ENDIF
           \IF {$i > l$ or ($i = l$ and $k$ is odd)}
              \STATE $y_p \leftarrow \mbox{ any available server in }
                     \YN{x_c}{l}$. \label{lin:priv-server}
              \STATE $Y^p_i \leftarrow Y^p_i \cup \{y_p\}$. Mark $y_p$ as not
                 available. \label{lin:phase3-end}
           \ENDIF 
        \ENDFOR 
    \ENDFOR

    \STATE $\F \leftarrow \varnothing$.
    \FOR {$i = k$ \DOWNTO $l + 1$} 
       \STATE $\F \leftarrow \F \cup \{ Y^s_i, Y^p_i \}$.
    \ENDFOR

    \IF {$k$ is odd}
       \STATE $\F \leftarrow \F \cup \{ Y^p_l \}$.
    \ENDIF
    \RETURN The family $\F$
 \end{algorithmic}
\end{algorithm*} 

The disjoint server subsets are computed in 
\Crefrange{lin:outer-for-start}{lin:phase3-end} of~\Cref{alg:HC} -- the for
loop, whose index $i$ goes down from $k$ to $l$. In each iteration $i \geq l + 1$, we extract two disjoint sets of servers $Y^p_i$ and $Y^s_i$, and if $k$ is odd, we extract one server set $Y^p_l$ in iteration $l$.  Notice that when 
summed over all $i$ from $k$ to $l$, we get $k$ disjoint server sets. The algorithm then adds all these server subsets to 
$\F$ and returns it. 

Observe that in iteration $i$ of \Cref{lin:outer-for-start}, we go through each client in $x_c \in X_i$, and use a carefully designed rule to pick two available servers from the $i$-neighborhood $\YN{x_c}{i}$ of $x_c$ to add to $Y^p_i$ and $Y^s_i$. Observe that we add the farthest available server from the $i$-neighborhood $\YN{x_c}{i}$ to $Y^s_i$, whereas we pick an available server from 
$\YN{x_c}{l} \subseteq \YN{x_c}{i}$, i.e., a private server of $x_c$, to add to $Y^p_i$. These choices -- farthest and private -- are crucial to our algorithm. The two  added servers are immediately made unavailable. The fact that $X_i$ is a net of $G_i$
is useful in controlling the impact on server availability for later iterations
of the algorithm. The subsequent section is devoted to establishing the crucial
fact that such available servers can be found in iteration $i$.

Assuming that servers are available whenever the algorithm looks for them, we can now establish \Cref{lem:spec-full}. Fix an $i$ such that $l + 1 \leq i \leq k$, and consider any client $x \in X$. Since $X_i$ is a net of $G_i$, there is a client $x' \in X_i$ that is within two hops of $x$ in $G_i$.  From the inner loop (\Cref{lin:inner-for-start}) in iteration $i$ of the outer loop (\Cref{lin:outer-for-start}), it is evident that for each $x_c \in X_i$, there is (at least) one server in $Y^p_i$ (resp. $Y^s_i$) that belongs to the $i$-neighborhood $\YN{x_c}{i}$. In particular, $\YN{x'}{i} \cap Y_i^p \neq \varnothing$, and
$\YN{x'}{i} \cap Y_i^s \neq \varnothing$. If $k$ is odd, a similar argument 
can be made for $i = l$ and $Y_l^p$. This establishes \Cref{lem:spec-full}, assuming server availability.

\subsection{Server Availability}
\label{sec:sa}
Fix an iteration $i$ of the for loop in \Cref{lin:outer-for-start} in \Cref{alg:HC}. In such an iteration, the algorithm considers each $x_c \in X_i$ in the inner for loop in \Cref{lin:inner-for-start}. For each $x_c$, it looks for up 
to two available servers within $\YN{x_c}{i}$ and uses them. In order
for the algorithm to be correct, such available servers must exist when the
algorithm looks for them. In this section, which is the core of our analysis, we show that this is indeed the case.

Let us begin with a roadmap of this argument. Consider a client $x \in X_k$ that
belongs to the net for $G_k$. Since the nets form a hierarchy, the client $x$
also belongs to the net $X_i$ for each $i < k$. Since the $i$-neighborhoods of
clients in $X_i$ are disjoint, for each $i$, the server choices made by other
net clients do not affect $x$ at all, and so $x$ will be able to find available
servers within $\YN{x}{i}$ for each $i$. Now consider a server $x'$ that first
appears in the net $X_j$ for some $j < k$. That is, $x'$ is not in $X_i$ for any
$i > j$ but is in $X_i$ for every $i \leq j$. What we argue is that at the
beginning of iteration $j$ of the for loop in \Cref{lin:outer-for-start}, the
$j$-neighborhood of $x'$ is, from the perspective of available servers, similar
to that of the $j$-neighborhood of $x$. It is in this argument that we use the
fact that a private server is chosen in \Cref{lin:priv-server}.

\vspace{0.1in}

\noindent {\bf Properties of Nets.} We now state some straightforward 
properties concerning the hierarchy of nets $X_k \subseteq X_{k-1} \subseteq \cdots
\subseteq X_l$. 

     
\begin{claim}
   Let $x, x'$ be two distinct clients in $X_i$. Then $\YN{x}{i}
   \cap \YN{x'}{i} = \varnothing$.
   \label{cl:chosen-chosen}
\end{claim}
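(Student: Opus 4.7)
The claim is essentially immediate from the definitions; the plan is to unpack them in sequence. Since $X_i$ is a $3$-net of $G_i$ (produced by \Cref{cl:ruling}), every path in $G_i$ between two distinct vertices of $X_i$ contains at least $3$ edges. In particular, any two distinct clients $x, x' \in X_i$ must be non-adjacent in $G_i$, since an edge would constitute a path of length $1 < 3$.

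Next, I would invoke the definition of $G_i$ from~\Cref{sec:terminology}: $(x, x') \in E_i$ if and only if $\YN{x}{i} \cap \YN{x'}{i} \neq \varnothing$. Contrapositively, $x$ and $x'$ being non-adjacent in $G_i$ is the same statement as $\YN{x}{i} \cap \YN{x'}{i} = \varnothing$, which is exactly what we want to prove.

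There is really no obstacle here; the strength of a $3$-net (as opposed to the minimum $2$-net that would also suffice for this particular claim) is not used, but it is consistent with the stronger hierarchy built in \Cref{cl:ruling} and needed elsewhere. I expect this proof to fit in two or three sentences in the actual writeup.
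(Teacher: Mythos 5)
Your proof is correct and follows exactly the same route as the paper's: use the $3$-net property to rule out a path of length one (i.e., an edge) between $x$ and $x'$ in $G_i$, then translate non-adjacency back into disjointness of the $i$-neighborhoods via the definition of $E_i$. Nothing to add.
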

\begin{proof}
   Since $X_i$ is a $3$-net of $G_i$, any path between $x$ and $x'$ in $G_i$
   has at least three edges. Recall that the condition $\YN{x}{i} \cap
   \YN{x'}{i} \neq \varnothing$ is equivalent to $(x,x')$ being an edge in
   $G_i$. 
\end{proof} 

\begin{claim}
   Let $x \in X \setminus X_i$. Then there is at most one $x' \in X_i$ such
   that $\YN{x}{i} \cap \YN{x'}{i} \neq \varnothing$.
   \label{cl:chosen-unchosen}
\end{claim}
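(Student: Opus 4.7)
The plan is a short argument by contradiction, leveraging the definition of a $3$-net together with the characterization of edges in $G_i$ already recorded in the previous claim's proof.

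First I would recall the key equivalence: for any two clients $u, v \in X$, the condition $\YN{u}{i} \cap \YN{v}{i} \neq \varnothing$ is, by definition of $G_i$, exactly the condition that $(u, v) \in E_i$. This reduces the statement of the claim to saying that $x \in X \setminus X_i$ has at most one neighbor in $X_i$ within the graph $G_i$.

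Next, I would assume for contradiction that there exist two distinct clients $x', x'' \in X_i$ with $\YN{x}{i} \cap \YN{x'}{i} \neq \varnothing$ and $\YN{x}{i} \cap \YN{x''}{i} \neq \varnothing$. By the equivalence above, both $(x', x)$ and $(x, x'')$ are edges of $G_i$, so concatenating them yields a path of length $2$ from $x'$ to $x''$ in $G_i$. But $X_i$ is a $3$-net of $G_i$, so any path between two distinct vertices of $X_i$ must use at least $3$ edges, a contradiction. Therefore, at most one such $x'$ exists.

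There is no real obstacle here: the argument is immediate from the $3$-net property, and structurally mirrors the proof of \Cref{cl:chosen-chosen}, the only difference being that one of the two endpoints of the forbidden short path is now an arbitrary client $x \in X \setminus X_i$ rather than a net client. Consequently, the proof should be only a couple of lines long.
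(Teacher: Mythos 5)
Your proposal is correct and is essentially identical to the paper's proof: both observe that two neighbors $x', x''$ of $x$ in $X_i$ would yield a path of at most two edges between them in $G_i$, contradicting the $3$-net property of $X_i$. No gaps.
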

\begin{proof}
   If there are two clients $x_1$ and $x_2$ in $X_i$ such that $\YN{x}{i} \cap
   \YN{x_1}{i} \neq \varnothing$ and $\YN{x}{i} \cap \YN{x_2}{i} \neq
   \varnothing$, then there is a path in $G_i$ with at most two edges connecting
   $x_1$ and $x_2$. Since the clients $x_1$ and $x_2$ belong to $X_i$, this
   would contradict the fact that $X_i$ is a $3$-net.
\end{proof} 
  
\begin{claim}
\label{cl:subset}
   Let $x_i \in X_i$ and $x_j \in X_j$ be any two distinct clients for $l
   \leq i < j \leq k$. Then, $\YN{x_i}{i} \cap \YN{x_j}{l} = \varnothing$.
\end{claim}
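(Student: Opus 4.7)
The plan is to reduce this to \Cref{cl:chosen-chosen} by exploiting the hierarchy $X_k \subseteq X_{k-1} \subseteq \cdots \subseteq X_l$ together with the monotonicity of neighborhoods in their size parameter. The key observation is that although $x_j$ is introduced as an element of $X_j$, the hierarchy immediately gives $X_j \subseteq X_i$ (since $i < j$), so in fact both $x_i$ and $x_j$ are distinct elements of the same net $X_i$.

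First I would note the inclusion $\YN{x_j}{l} \subseteq \YN{x_j}{i}$, which follows from $l \leq i$ and the definition of $t$-neighborhood as the $t$ closest servers. Next, using $X_j \subseteq X_i$, I would remark that $x_i, x_j \in X_i$ and $x_i \neq x_j$, so \Cref{cl:chosen-chosen} applies to this pair with respect to $G_i$ and yields $\YN{x_i}{i} \cap \YN{x_j}{i} = \varnothing$. Combining the two facts,
\[
  \YN{x_i}{i} \cap \YN{x_j}{l} \;\subseteq\; \YN{x_i}{i} \cap \YN{x_j}{i} \;=\; \varnothing,
\]
which is exactly the claim.

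There is no real obstacle here, as the proof is essentially bookkeeping: the direction of the hierarchy inclusions (nets higher in index are smaller) must be used correctly, and the monotonicity of $\YN{\cdot}{\cdot}$ in its second argument must be invoked in the right direction. The only subtlety worth a sentence of comment in the written proof is the direction of the inclusion $X_j \subseteq X_i$ for $i < j$, since the naive reading might confuse it; after that the argument is a one-line application of \Cref{cl:chosen-chosen}.
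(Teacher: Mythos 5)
Your proof is correct and follows exactly the paper's own argument: use the hierarchy to place both clients in $X_i$, apply \Cref{cl:chosen-chosen} to get $\YN{x_i}{i} \cap \YN{x_j}{i} = \varnothing$, and conclude via the inclusion $\YN{x_j}{l} \subseteq \YN{x_j}{i}$ from $l \leq i$.
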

\begin{proof}
   Since $i < j$, we have $X_j \subseteq X_i$ and hence the clients $x_i$ and
   $x_j$ both belong to the net $X_i$, implying that $\YN{x_i}{i} \cap
   \YN{x_j}{i} = \varnothing$. Since $l \leq i$, the claim follows, as
   $\YN{x_j}{l}$, the $l$-neighborhood of $x_j$, is contained in $\YN{x_j}{i}$.
\end{proof}



We now proceed to the actual argument for server availability, beginning with some notation. For $x \in X$, let $\A{x}{i}$
denote the set of available servers within $\YN{x}{i} = \{\yd{j}{x} \ | 1
\leq j \leq i\}$ at the {\em beginning} of iteration $i$. Thus, $|\A{x}{k}| =
k$. Furthermore, $\A{x}{i-1} \subseteq \A{x}{i}$ for $l + 1 \leq i \leq k$.
Obviously, $\A{x}{i} \subseteq \YN{x}{i}$. 

The {\em threshold level} of a net client $x$ (denoted by $\thr(x)$) is
defined as:
\[
  \forall x \in \bigcup\limits_{i = l}^k X_i, \qquad \thr(x) = 
     \begin {cases}
        k, & \mbox{if } x \in X_k \\
        j, & \mbox{if } x \in X_j \setminus X_{j+1}, \quad l \leq j < k 
     \end {cases}
\]
The threshold level of $x$ denotes the iteration of the outer loop of the
algorithm in which client $x$ first enters the net. In any iteration $k \geq j \geq \thr(x) + 1$, the client $x$
can lose neighboring servers because of the server choices made by (the
algorithm for) other clients, i.e., clients in the net $X_j$. On the other hand, for $l+1 \leq j \leq \thr(x)$, $x$ is itself part
of the net $X_j$. In these iterations, it can only lose neighboring
servers because of its own server choices. The next two claims address these two
phases.  

We now show that any net client $x$ has enough available
servers in its $\thr(x)$ neighborhood at the iteration $i = \thr(x)$ of the
outer loop of~\Cref{alg:HC}.

\begin{claim}
\label{cl:loop-invariant}
Let $x$ be any net client, and let $i = \thr(x)$. Then 
\begin{clenum}
    \item $|\A{x}{i} \cap \YN{x}{l}| \geq l - (k-i)$. \label{cl:loop-invariant.1}
    \item $|\A{x}{i}| \geq 2i - k = k - 2(k-i)$. \label{cl:loop-invariant.2}
\end{clenum}
\end{claim}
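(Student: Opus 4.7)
The plan is, for each iteration $j > i$ of the outer for-loop of~\Cref{alg:HC}, to bound the number of servers in $\YN{x}{i}$ (respectively $\YN{x}{l}$) that can be made unavailable during iteration $j$, and then sum these losses over $j = k, k-1, \ldots, i+1$. Since $\thr(x)=i$ by assumption, the client $x$ is itself absent from $X_j$ for every such $j$, so all losses are caused by picks made on behalf of \emph{other} clients $x_c \in X_j$.

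Fix such a $j$ and a client $x_c \in X_j$. Every server chosen for $x_c$ in iteration $j$ lies in $\YN{x_c}{j}$. If this server also lies in $\YN{x}{i} \subseteq \YN{x}{j}$, then $\YN{x}{j} \cap \YN{x_c}{j} \neq \varnothing$, which means $(x, x_c) \in E_j$. Because $x \notin X_j$, \Cref{cl:chosen-unchosen} guarantees that at most one such neighbor $x_c$ exists; call it $x_c^j$ (it may not exist, in which case iteration $j$ removes nothing from $\YN{x}{i}$ and the bound only improves).

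The main observation is that the private pick made for $x_c^j$ is automatically harmless to $x$. Indeed, applying \Cref{cl:subset} to the distinct clients $x \in X_i$ and $x_c^j \in X_j$ with indices $l \leq i < j \leq k$, one gets $\YN{x}{i} \cap \YN{x_c^j}{l} = \varnothing$. Since the server that is added to $Y_j^p$ is drawn from $\YN{x_c^j}{l}$, it cannot lie in $\YN{x}{i}$. Therefore the only server that iteration $j$ can possibly remove from $\YN{x}{i}$ is the single pick added to $Y_j^s$ for $x_c^j$, so at most one server is lost per iteration. Summing over the $k-i$ iterations $j \in \{i+1,\ldots,k\}$ gives
\[
|\A{x}{i}| \;\geq\; |\YN{x}{i}| - (k-i) \;=\; i - (k-i) \;=\; 2i - k,
\]
which proves (b).

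For (a) I would run exactly the same accounting restricted to $\YN{x}{l} \subseteq \YN{x}{i}$. The unique potential consumer in iteration $j$ is still $x_c^j$; its $Y_j^p$-pick already lies outside $\YN{x}{i}$ and hence outside $\YN{x}{l}$; and its $Y_j^s$-pick removes at most one server from $\YN{x}{l}$. Since $|\YN{x}{l}| = l$, this gives $|\A{x}{i} \cap \YN{x}{l}| \geq l - (k-i)$. The only delicate step in the whole argument is recognizing why the private picks made at higher levels cannot encroach on $\YN{x}{i}$; once \Cref{cl:subset} is applied this is immediate, and it is precisely the reason the algorithm draws $Y_j^p$ from the smaller set $\YN{x_c}{l}$ instead of from $\YN{x_c}{j}$. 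Everything else is bookkeeping.
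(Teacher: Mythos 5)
Your proof is correct and follows essentially the same route as the paper's: both arguments use \Cref{cl:chosen-unchosen} to isolate at most one client of $X_j$ whose picks can touch $\YN{x}{i}$ and \Cref{cl:subset} to show that the private pick on \Cref{lin:priv-server} never lands in $\YN{x}{i}$, yielding a loss of at most one server per iteration $j > i$. The only difference is the order in which the two lemmas are invoked, which is immaterial.
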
  
\begin{proof}
    We look at the servers chosen during iteration $j$ of the outer loop, for 
    $i < j \leq k$. Note that $x$ didn't belong to the net $X_j$. Consider any
    client $x_j \in X_j$. By~\Cref{cl:subset}, $\YN{x}{i}$ does not intersect
    the $l$-neighborhood ball $\YN{x_j}{l}$. Hence, during the execution
    of~\Cref{lin:priv-server} in the inner for loop corresponding to $x_j$, no
    server is made unavailable from $\YN{x}{i}$. This is because the server
    chosen in~\Cref{lin:priv-server} belongs to $\YN{x_j}{l}$.

    Thus, during iteration $j$, servers from $\YN{x}{i}$ can become unavailable
    only during the execution of~\Cref{lin:shared-server} of the inner for loop.
    We note that by~\Cref{cl:chosen-unchosen}, there is at most one client
    $x_j \in X_j$ such that $\YN{x_j}{j} \cap \YN{x}{j} \neq \varnothing$.
    Thus, at most one server from  $\YN{x}{i}$ is made unavailable in iteration $j$. 

    We conclude that across the $k - i$ iterations before iteration $i$, there
    can be at most $k - i$ servers from
    $\YN{x}{i}$ that have been made unavailable.  Hence, $|\A{x}{i}| \geq i - (k - i) = 2i - k$.  Since
    $|\YN{x}{i} \cap \YN{x}{l}| = l$ and at most $k - i$ servers are made
    unavailable from the $i$-neighborhood ball $\YN{x}{i}$, $|\A{x}{i} \cap
    \YN{x}{l}| \geq l - (k - i)$.
\end{proof}

For any net client $x$, \Cref{cl:loop-invariant} shows that in iteration $i =
\thr(x)$, when $x$ first enters the net, there are enough available
servers in $\YN{x}{i}$. The following claim aids in asserting this for
subsequent iterations, by arguing that in any iteration $i \leq \thr(x)$, at
most $2$ available servers are made unavailable from $\YN{x}{i}$. 

\begin{claim}
\label{cl:atmost-two}
Let $x$ be any net client and $l + 1 \leq i \leq \thr(x)$. Then 
\begin{clenum}
    \item $|\A{x}{i - 1}| \geq |\A{x}{i}| - 2$
        \label{cl:atmost-two.1}
    \item If $|\A{x}{i - 1}| = |\A{x}{i}| - 2$, then one of the servers 
        in $\A{x}{i} \setminus \A{x}{i-1}$ is the farthest server in $\A{x}{i}$
        from $x$. \label{cl:atmost-two.2}
\end{clenum}
\end{claim}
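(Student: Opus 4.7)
My plan is to analyze iteration $i$ of the outer for loop in \Cref{alg:HC}. Since $i \leq \thr(x)$ and the nets are nested, $x$ itself lies in $X_i$, so during iteration $i$ the algorithm processes $x$ in the inner loop and makes exactly two picks from $\YN{x}{i}$: a farthest available $y_s \in \YN{x}{i}$ on \Cref{lin:shared-server} and a private server $y_p \in \YN{x}{l} \subseteq \YN{x}{i}$ on \Cref{lin:priv-server}. The proof then has two halves: first, show that only these two picks cause availability changes within $\YN{x}{i}$ during iteration $i$; second, carry out a short bookkeeping calculation to bound $|\A{x}{i-1}|$ and identify the farthest removed server.

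For the first half I invoke \Cref{cl:chosen-chosen}: for every $x_c \in X_i$ with $x_c \neq x$, the neighborhoods $\YN{x_c}{i}$ and $\YN{x}{i}$ are disjoint, and both the shared and private servers picked for $x_c$ lie in $\YN{x_c}{i}$. Consequently, no pick made for any other client removes a server from $\YN{x}{i}$, and only $y_s, y_p$ become unavailable in $\YN{x}{i}$ during iteration $i$. Combined with the identity $\YN{x}{i-1} = \YN{x}{i} \setminus \{\yd{i}{x}\}$, this yields
\[
\A{x}{i-1} \;=\; \bigl(\A{x}{i} \cap \YN{x}{i-1}\bigr) \setminus \{y_s, y_p\}.
\]

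For the second half I case-split on whether $\yd{i}{x} \in \A{x}{i}$. If $\yd{i}{x} \in \A{x}{i}$, then $\yd{i}{x}$ is the farthest available server, so $y_s = \yd{i}{x}$; since $y_p \in \YN{x}{l}$ and $l < i$ force $y_p \neq \yd{i}{x}$, a direct count gives $|\A{x}{i-1}| = |\A{x}{i}| - 2$, with the farthest element of $\A{x}{i}$, namely $\yd{i}{x}$, lying in $\A{x}{i} \setminus \A{x}{i-1}$. If $\yd{i}{x} \notin \A{x}{i}$, then $\A{x}{i} \cap \YN{x}{i-1} = \A{x}{i}$, so again $|\A{x}{i-1}| = |\A{x}{i}| - 2$; here the farthest server in $\A{x}{i}$ is $y_s$ by the defining property on \Cref{lin:shared-server}, and $y_s \in \{y_s, y_p\} = \A{x}{i} \setminus \A{x}{i-1}$.

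The only delicate point, and the only place the proof could go wrong, is the accounting for $\yd{i}{x}$: it belongs to $\YN{x}{i} \setminus \YN{x}{i-1}$, so a naive count might blame it for a third loss on top of $y_s$ and $y_p$. The algorithm's choice of $y_s$ as the \emph{farthest} available server in $\YN{x}{i}$ is precisely what collapses $y_s$ onto $\yd{i}{x}$ whenever $\yd{i}{x}$ is still available, keeping the total drop at two; this same property also forces the farthest element of $\A{x}{i}$ into $\A{x}{i} \setminus \A{x}{i-1}$, giving part (b) for free.
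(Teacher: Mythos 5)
Your proof is correct and follows essentially the same route as the paper's: you use \Cref{cl:chosen-chosen} to show that only $x$'s own two picks $y_s, y_p$ affect availability within $\YN{x}{i}$, and then observe that choosing $y_s$ as the \emph{farthest} available server makes $y_s$ coincide with $\yd{i}{x}$ whenever $\yd{i}{x}$ is still available, so the shrinkage from $\YN{x}{i}$ to $\YN{x}{i-1}$ never costs a third server. Your explicit case split on whether $\yd{i}{x} \in \A{x}{i}$ just spells out what the paper states in one line; the substance is identical.
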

\begin{proof}
    Note that $x \in X_i$ since $i \leq \thr(x)$. Consider any $x_c \in X_i \setminus \{x\}$. In the iteration
    of the inner for loop (\Cref{lin:inner-for-start}) corresponding to $x_c$,
    any servers that are made unavailable belong to $\YN{x_c}{i}$ and are
    therefore not in $\YN{x}{i}$, by \Cref{cl:chosen-chosen} (since $x,x_c \in
    X_i$). Thus, if any servers in $\A{x}{i} \subseteq
    \YN{x}{i}$ become unavailable in iteration $i$, then this can happen only in
    the iteration of the inner for loop corresponding to $x$. In this iteration
    of the inner for loop, the servers that become unavailable are $y_s$, the
    farthest server from $x$ in $\A{x}{i}$, and $y_p$, a different server that
    is chosen from the available servers in $\YN{x}{l}$. Note that $\{y_p, y_s\}
    \subseteq \A{x}{i}$. Thus, only the two servers $y_s, y_p$ in $\A{x}{i}$
    become unavailable in iteration $i$. Furthermore, if $\yd{i}{x} \in
    \A{x}{i}$ then $\yd{i}{x}$ is the farthest server in $\A{x}{i}$ from $x$,
    and thus $\yd{i}{x} = y_s$. Thus $\A{x}{i} \setminus \A{x}{i-1} = \{y_s,
    y_p\}$, and~\Cref{cl:atmost-two.1} holds. Since $y_s$ is the farthest
    server in $\A{x}{i}$,~\Cref{cl:atmost-two.2} holds as well.
\end{proof}

The following two claims show that our algorithm always succeeds in finding available
servers.
\begin{restatable}{claim}{available}
    \label{cl:available}
    For any $l + 1 \leq i \leq k$, and any $x_c \in X_i$: 
    \begin{clenum}
        \item There is an available server in $\YN{x_c}{i}$ when the algorithm
            executes \Cref{lin:shared-server} in the iteration of the inner for
            loop (\Cref{lin:inner-for-start}) corresponding to $x_c$. 
            \label{cl:available.1}
        \item There is an available server in $\YN{x_c}{l}$ when the algorithm
            executes \Cref{lin:priv-server} in the iteration of the inner for
            loop (\Cref{lin:inner-for-start}) corresponding to $x_c$.
            \label{cl:available.2}
    \end{clenum}
\end{restatable}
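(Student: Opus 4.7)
The plan is to control both $|\A{x_c}{i}|$ and $|\A{x_c}{i} \cap \YN{x_c}{l}|$ at the moment $x_c$ is processed, by taking \Cref{cl:loop-invariant} as a base bound at iteration $t := \thr(x_c) \geq i$ and peeling off the effects of iterations $t, t-1, \ldots, i+1$ via \Cref{cl:atmost-two}. As a preliminary, using \Cref{cl:chosen-chosen} I would observe that within iteration $i$ the inner-loop passes for the other net clients $x_c' \in X_i \setminus \{x_c\}$ cannot touch $\YN{x_c}{i}$ (their $\YN{x_c'}{i}$ are disjoint from $\YN{x_c}{i}$), so at the moment $x_c$ is considered, the available servers in $\YN{x_c}{i}$ are precisely $\A{x_c}{i}$ and those in $\YN{x_c}{l}$ are precisely $\A{x_c}{i} \cap \YN{x_c}{l}$.

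For part~(a), starting from $|\A{x_c}{t}| \geq 2t - k$ (\Cref{cl:loop-invariant.2}) and applying \Cref{cl:atmost-two.1} for $j = t, t-1, \ldots, i+1$, I obtain $|\A{x_c}{i}| \geq 2i - k$. Since $l = \lceil k/2 \rceil$ forces $2l \geq k$ and $i \geq l+1$, this is at least $2$, so a server is available when Line~\ref{lin:shared-server} runs.

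For part~(b), after Line~\ref{lin:shared-server} removes $y_s$ I need $(\A{x_c}{i} \cap \YN{x_c}{l}) \setminus \{y_s\}$ to be nonempty, and I would case-split on whether $\A{x_c}{i} \subseteq \YN{x_c}{l}$. If yes, then $y_s$ does lie in $\YN{x_c}{l}$, but $|\A{x_c}{i} \cap \YN{x_c}{l}| = |\A{x_c}{i}| \geq 2$ by the bound from part~(a), so some server in $\YN{x_c}{l}$ distinct from $y_s$ survives. If no, then $y_s$, being the farthest available server in $\YN{x_c}{i}$, must lie outside $\YN{x_c}{l}$ (since every server in $\YN{x_c}{i} \setminus \YN{x_c}{l}$ is farther from $x_c$ than every server in $\YN{x_c}{l}$), and it suffices to show that $|\A{x_c}{i} \cap \YN{x_c}{l}|$ itself is at least $1$. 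For this I would use the monotonicity $\A{x_c}{j-1} \subseteq \A{x_c}{j}$ (implicit in the proof of \Cref{cl:atmost-two}): the condition ``$\A{x_c}{j} \subseteq \YN{x_c}{l}$'' then propagates to all smaller $j$, so its failure at $i$ implies $\A{x_c}{j} \not\subseteq \YN{x_c}{l}$ for every $j \in \{i+1, \ldots, t\}$. In each such earlier iteration, the same farthest argument puts the chosen $y_s$ outside $\YN{x_c}{l}$, so only $y_p$ is removed from $\A{x_c}{j} \cap \YN{x_c}{l}$. Combining this one-per-iteration accounting with \Cref{cl:loop-invariant.1} yields
\[
|\A{x_c}{i} \cap \YN{x_c}{l}| \;\geq\; (l - (k - t)) - (t - i) \;=\; l - k + i \;\geq\; 1,
\]
where the final inequality uses $i \geq l + 1$ and $2l \geq k$.

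The main obstacle is this second subcase of part~(b): the straightforward ``two per iteration'' bound from \Cref{cl:atmost-two.1} is too lossy to lower-bound $|\A{x_c}{i} \cap \YN{x_c}{l}|$ directly, and one must separate the $y_s$ and $y_p$ removals from $\YN{x_c}{l}$ using the farthest property of $y_s$ together with the upward monotonicity of the ``not-a-subset'' condition.
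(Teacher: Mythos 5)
Your proof is correct and follows essentially the same route as the paper: part (a) is the identical peeling of \Cref{cl:loop-invariant.2} via \Cref{cl:atmost-two.1}, and part (b) uses the same ingredients (\Cref{cl:loop-invariant.1}, the farthest-server property of $y_s$, and the one-private-server-per-iteration accounting). The only difference is cosmetic: you case-split on whether $\A{x_c}{i} \subseteq \YN{x_c}{l}$, while the paper splits on whether some iteration $j > i$ removed two servers from $\YN{x_c}{l}$ --- these are the two sides of the same dichotomy, linked by exactly the monotonicity argument you give.
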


The proof of this claim, which follows from the previous claims, is given in
\Cref{sec:server-avail-claim}.

If $k$ is even, the algorithm does not look for available servers in iteration $i =
l$. If $k$ is odd, the algorithm will look for available servers in iteration
$i = l$, in \Cref{lin:priv-server}. The following claim extends the previous one
to handle this. The proof is a straightforward extension of the proof of the previous claim, and is therefore omitted. 

\begin{claim}
Suppose $k$ is odd. For iteration $i = l$, and any $x_c \in X_i$, there is
an available server in $\YN{x_c}{l}$ when the algorithm executes
\Cref{lin:priv-server} in the iteration of the inner for loop
(\Cref{lin:inner-for-start}) corresponding to $x_c$.
\label{cl:available-2}
\end{claim}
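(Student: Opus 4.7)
The plan is to establish this claim by an almost verbatim extension of the proof of \Cref{cl:available}(b), using only \Cref{cl:loop-invariant}(b), \Cref{cl:atmost-two}(a), and \Cref{cl:chosen-chosen}. I would fix an arbitrary $x_c \in X_l$ and set $i_0 = \thr(x_c)$, which satisfies $i_0 \geq l$ since $x_c \in X_l$. The strategy is to track $|\A{x_c}{i}|$ from the threshold level down to $i = l$, and then argue that processing other clients in $X_l$ within iteration $l$ does not interfere.

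First, I would apply \Cref{cl:loop-invariant}(b) at $i = i_0$ to get $|\A{x_c}{i_0}| \geq 2i_0 - k$. Next, I would apply \Cref{cl:atmost-two}(a) iteratively at each $j = i_0, i_0 - 1, \ldots, l+1$ (note that the hypothesis $l+1 \leq j \leq \thr(x_c)$ is satisfied at every step), yielding
\[
   |\A{x_c}{l}| \;\geq\; |\A{x_c}{i_0}| - 2(i_0 - l) \;\geq\; (2i_0 - k) - 2(i_0 - l) \;=\; 2l - k.
\]
Since $k$ is odd, $l = \lceil k/2 \rceil = (k+1)/2$, so $2l - k = 1$. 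Hence at the beginning of iteration $l$, the set $\A{x_c}{l} \subseteq \YN{x_c}{l}$ of available servers is non-empty.

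To conclude, I would verify that the available server found above is still available when the algorithm executes \Cref{lin:priv-server} for $x_c$ within iteration $l$. For any other $x_c' \in X_l$ processed earlier in iteration $l$, \Cref{cl:chosen-chosen} gives $\YN{x_c}{l} \cap \YN{x_c'}{l} = \varnothing$; since the server $y_p$ chosen for $x_c'$ in \Cref{lin:priv-server} lies in $\YN{x_c'}{l}$, it cannot lie in $\YN{x_c}{l}$. Therefore no server of $\YN{x_c}{l}$ is removed before $x_c$ is processed, and an available private server exists for $x_c$ as claimed.

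There is no real obstacle here; the proof is essentially a bookkeeping argument, and the only substantive observation is that the extra slack of $1$ in the bound $2l - k$ comes precisely from the ceiling in $l = \lceil k/2 \rceil$. In the even case this bound degenerates to $0$, which is exactly why the algorithm stops picking $y_p$ at iteration $l$ when $k$ is even and why the claim is needed only for odd $k$.
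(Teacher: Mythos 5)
Your proof is correct and is exactly the ``straightforward extension'' of the proof of \Cref{cl:available} that the paper invokes without writing out: the same chain of \Cref{cl:loop-invariant.2} and \Cref{cl:atmost-two.1} down to iteration $l$ gives $|\A{x_c}{l}| \geq 2l - k = 1$, and since $\A{x_c}{l} \subseteq \YN{x_c}{l}$ by definition, no case analysis in the style of \Cref{cl:available.2} is needed. Your closing observation that the slack of $1$ comes from the ceiling in $l = \lceil k/2 \rceil$, and that the bound degenerates in the even case, correctly identifies why the claim is restricted to odd $k$.
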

 
This completes the proof of \Cref{lem:spec-full}.

\section{Solving The MMC Problem}
\label{sec:algo}
In this section, we present \Cref{alg:MMC}, a constant factor approximation for the MMC
problem. Recall that our input consists of two point sets $Y$ (servers) and
$X$( clients) in an arbitrary metric space $(X \cup Y, d)$, a positive integer
$k$ that represents the coverage demand of each client, and the constant $\alpha
\geq 1$.

Our algorithm first computes a family $\F$ consisting of $k$ pairwise disjoint
subsets of $Y$, using the algorithm of \Cref{lem:spec-full}. It then invokes
$\text{Cover}(X,Y',\alpha)$, for each $Y' \in \F$, to compute a near-optimal
$1$-cover of $X$ using only the servers in $Y'$. Since there are $k$ server
subsets in $\F$, we obtain $k$ 1-covers of $X$. The algorithm then returns $r$, the
union of the $k$ covers. Because server subsets in $\F$ are disjoint, this union
yields a $k$-cover of $X$.

\begin{algorithm*}[hbt]
 \caption{\mmc}
 \label{alg:MMC}
 \begin{algorithmic}[1]
 \setcounter{ALC@unique}{0}
    \STATE For each $y \in Y$, assign $r(y) \leftarrow 0$.
    \STATE $\F \leftarrow \spec$.
    \FORALL {$Y' \in \F$} 
       \STATE $\bar{r} \leftarrow \text{Cover}(X, Y', \alpha)$.
       \STATE Let $r(y') \leftarrow \bar{r}(y')$ for each $y' \in Y'$.
    \ENDFOR
    \RETURN The assignment $r: Y \rightarrow \Real^+$.
 \end{algorithmic}
\end{algorithm*}

\subsection{Approximation Guarantee}
\label{sec:ag}
Note that \Cref{alg:MMC} computes the family $\F = \{Y_k^s, Y_k^p, Y_{k-1}^s, Y_{k-1}^p, \ldots \}$ as detailed in \Cref{lem:spec-full}. Let $Y_i \in \F$ be one such subset, where $Y_i$ may be either $Y_i^p$ or $Y_i^s$. $Y_i$ has the property that for any $x \in X$, there is an $x' \in X$ that is within two hops of $x$ in $G_i$ such that $\YN{x'}{i} \cap Y_i \neq \varnothing$. The following claim uses this property to argue that there is an inexpensive $1$-cover of $X$ that only uses servers from $Y_i$. The $1$-cover is constructed by using the servers in $Y_i$ to ``host'' the balls in the outer cover $\rho_i$.

\begin{claim} 
   Assume that either (a) $l + 1 \leq i \leq k$ and $Y_i$ is either $Y_i^p$ or $Y_i^s$, or (b) $k$ is odd, $i = l$, and $Y_i = Y_i^p$. Let $\rho_i$ be any 
outer cover of level $i$ for $X$ using servers from $Y$. There is a $1$-cover of $X$ that uses servers from $Y_i$ and has cost at most $12^{\alpha} \cdot \cost(\rho_i)$.
\label{cl:Bi}
\label{cl:cost-lambda}
\end{claim}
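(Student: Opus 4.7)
My plan is to build the $1$-cover by ``hosting'' each ball of $\rho_i$ on a server from $Y_i$, using the two-hop property in $G_i$ that defines $Y_i$.

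Fix any ball $B = \delta(y_B, r_B)$ of $\rho_i$ with $r_B > 0$ and let $x^{*}_B$ be any client served by $B$ in $\rho_i$; by the outer-cover definition, $d(y_B, x^{*}_B) \le r_B$ and $\eta(x^{*}_B) \le r_B$, where I write $\eta(x) := d(x,\yd{i}{x})$. Applying the $Y_i$ hypothesis to $x^{*}_B$ yields a client $x_c$ within two hops of $x^{*}_B$ in $G_i$ -- which by \Cref{cl:chosen-unchosen} may be taken to be the unique net client of $X_i$ within two hops -- together with a server $\hat{y}_B \in \YN{x_c}{i} \cap Y_i$. Write the path $x^{*}_B = a_0 \sim a_1 \sim a_2 = x_c$, and pick shared $i$-neighbors $z_0 \in \YN{a_0}{i} \cap \YN{a_1}{i}$ and $z_1 \in \YN{a_1}{i} \cap \YN{a_2}{i}$.

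Next I would control $d(y_B, \hat{y}_B)$. Using the bound $d(a,z) \le \eta(a)$ whenever $z \in \YN{a}{i}$, and chaining the triangle inequality along $y_B \to x^{*}_B \to z_0 \to a_1 \to z_1 \to x_c \to \hat{y}_B$, I get
\[
d(y_B, \hat{y}_B) \;\le\; r_B + \eta(x^{*}_B) + 2\,\eta(a_1) + 2\,\eta(x_c) \;\le\; 2\,r_B + 2\,\eta(a_1) + 2\,\eta(x_c).
\]
Assigning radius $R_B := r_B + d(y_B, \hat{y}_B) \le 3\,r_B + 2\,\eta(a_1) + 2\,\eta(x_c)$ at $\hat{y}_B$ creates a host ball $\delta(\hat{y}_B, R_B) \supseteq \delta(y_B, r_B)$, so it covers every client served by $B$. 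Taking at each $\hat{y}\in Y_i$ the maximum radius placed there produces an assignment $r':Y_i\to\Real^{+}$. Because every $x\in X$ is served by some ball of $\rho_i$, and that ball's host covers $x$, the assignment $r'$ is a valid $1$-cover of $X$ using only servers in $Y_i$.

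Finally, to establish $\sum_{\hat y} r'(\hat y)^\alpha \le 12^\alpha \cdot \cost(\rho_i)$, I would bound $\sum_B R_B^\alpha$. Since $\eta(a_1)$ and $\eta(x_c)$ are each majorised by the radius of some ball of $\rho_i$ that serves them, each ``extra'' term in $R_B$ can be charged against another ball in $\rho_i$. The $3$-net structure of $X_i$, together with the uniqueness statements in \Cref{cl:chosen-chosen,cl:chosen-unchosen,cl:subset}, limits how many times the same ball of $\rho_i$ can receive such a charge, and a power-sum inequality of the form $(a+b+c)^{\alpha} \le 3^{\alpha-1}(a^{\alpha}+b^{\alpha}+c^{\alpha})$ converts the linear bound on each $R_B$ into the desired multiplicative $12^\alpha$ factor on the total cost.

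The step I expect to be the main obstacle is precisely this last amortisation: no pointwise inequality $R_B \le 12\,r_B$ can hold in general, because the intermediate quantities $\eta(a_1)$ and $\eta(x_c)$ can greatly exceed $r_B$ (the balls of $\rho_i$ serving $a_1$ or $x_c$ may be much larger than $B$). The $12^\alpha$ factor must therefore emerge from an aggregate accounting argument, and landing on $12$ rather than a looser constant requires careful use of the 3-net uniqueness properties to cap the per-ball charging multiplicity.
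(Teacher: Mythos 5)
There is a genuine gap, and it is exactly the one you flag in your last paragraph: the amortisation you defer to is the entire difficulty, and as set up it does not go through. In your scheme each ball $B$ of $\rho_i$ is hosted at a server $\hat y_B$ whose distance from $y_B$ is bounded by $2r_B + 2\,d(a_1,\yd{i}{a_1}) + 2\,d(x_c,\yd{i}{x_c})$, and you propose to charge the last two terms to the balls of $\rho_i$ that serve $a_1$ and $x_c$, with the $3$-net structure capping the charging multiplicity. But the net structure only constrains the net clients; it says nothing about the intermediate vertex $a_1$, which is an arbitrary client of $X$, nor about which ball of $\rho_i$ serves it. A single very large ball of $\rho_i$ can serve the middle vertices of the paths chosen for arbitrarily many distinct balls $B$, so the per-ball charging multiplicity is unbounded, and no constant --- let alone exactly $12$ --- emerges from this accounting. (A minor side issue: \Cref{cl:chosen-unchosen} gives uniqueness of the net client whose $i$-neighborhood \emph{intersects} that of $x$, i.e.\ one hop, not uniqueness within two hops; this does not affect the argument.)

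The paper avoids amortisation altogether by reversing the roles of balls and clients. Fix a client $x$, take the two-hop path $\pi$ in $G_i$ from $x$ to a vertex $\bar x$ with $\YN{\bar x}{i}\cap Y_i\neq\varnothing$, and let $\delta(y,\rho_i(y))$ be the \emph{largest} ball of $\rho_i$ serving \emph{some} vertex of $\pi$ --- not the ball serving $x$. By this maximality, $d(x',\yd{i}{x'})\le\rho_i(y)$ for \emph{every} vertex $x'$ of $\pi$, so all the neighborhood-radius terms in your chain are dominated by $\rho_i(y)$, yielding the pointwise bounds $d(y,x)\le 5\rho_i(y)$ and $d(y,\bar y)\le 6\rho_i(y)$. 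Expanding each ball of $\rho_i$ by a factor of $6$, then translating each expanded ball that contains a server of $Y_i$ to such a server and doubling its radius (keeping only the largest among concentric duplicates), produces a $1$-cover in which each ball of $\rho_i$ contributes at most one ball of at most $12$ times its radius; the $12^{\alpha}\cdot\cost(\rho_i)$ bound is then immediate, with no power-sum inequality and no charging. To salvage your route, replace ``the ball serving $x^{*}_B$'' by ``the largest ball serving any vertex of the path'': that single change converts your aggregate problem into a pointwise one.
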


\begin{proof}
Consider the set $B$ of balls obtained by expanding each ball in the outer cover
$\rho_i$ to $6$ times its original radius. We claim
\begin{claim}
For any client $x \in X$, there is some ball in $B$ that contains $x$ as well as at least one server in $Y_i$.
\label{cl:assert}
\end{claim}

Before proving \Cref{cl:assert}, we first prove \Cref{cl:Bi} using it. We construct a set $B'$ of balls as follows. Consider any ball $b \in B$. If it does not contain a server from $Y_i$, we ignore it. If it does contain a server in $Y_i$, pick an arbitrary such server $y$, translate $b$ so that it is centered at $y$, double its radius, and add the resulting ball to $B'$. 

It is possible at this stage that for a server $y \in Y_i$, there are several balls in $B'$ centered at $y$. From each such concentric family, discard from $B'$ all but the largest of the concentric balls.  It follows from \Cref{cl:assert} that $B'$ covers each client in $X$. Since each ball in $B'$ is obtained by translating and scaling some ball in the outer cover $\rho_i$ by a factor of $12$, the cost of $B'$ is at most
$12^{\alpha} \cdot \cost(\rho_i)$. This establishes \Cref{cl:Bi}.

We now turn to the proof of \Cref{cl:assert}. From the definition of $G_i$, we have that for any edge $(x',x'')$ in $G_i$, 
\begin{equation}
    \label[ineq]{ineq:edge-in-Gi}
    d(x',x'') \leq d(x', \yd{i}{x'}) + d(x'',\yd{i}{x''}). 
\end{equation} 

Now consider an arbitrary client $x \in X$. By \Cref{lem:spec-full}, there is a path $\pi$ in $G_i$ with at most $2$ edges (and $3$ vertices) that connects $x$ to some vertex $\bar{x}$, with $\YN{\bar{x}}{i} \cap Y_i \neq \varnothing$.

 Let $\delta(y, \rho_i(y))$ be the largest ball in outer cover $\rho_i$ that
 serves at least one vertex on path $\pi$. Suppose that it serves vertex
 $\hat{x} \in \pi$. ($\hat{x}$ could be the same as $x$ or $\bar{x}$.)
 See~\Cref{fig:factor12} for an illustration. Using the definition of an outer cover of level $i$,
 and the way we pick the ball $\delta(y, \rho_i(y))$, it follows that for any
 vertex $x' \in \pi$,
\begin{equation}
\label[ineq]{ineq:rhi}
d(x', \yd{i}{x'}) \leq \rho_i(y). 
\end{equation}

\begin{figure}[ht]
\centering
\includegraphics[scale=0.7]{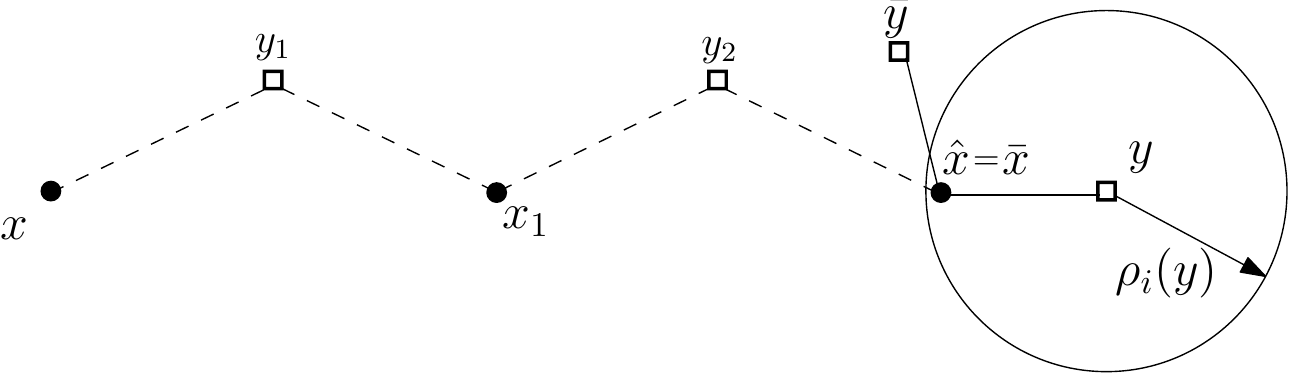}
\caption{Illustration for the proof of~\Cref{cl:assert}. For the client $x \in X$,
   the dashed edges correspond to a path $\pi$ in $G_i$ from $x$ to $\bar{x}$.
   Here, $y_1 \in N_i(x) \cap N_i(x_1)$, $y_2 \in N_i(x_1) \cap N_i(\bar x)$,
   and $\bar{y} \in Y_i \cap N_i(\bar{x})$. The ball $\delta(y, \rho_i(y))$
   serves $\hat{x}$. Here, $\hat{x}$ happens to be $\bar{x}$. Note that we can get
   from $y$ to $x$ using $5$ edges of the figure, and from $y$ to $\bar{y}$
   using $2$ edges. Therefore, expanding the ball at $y$ by a factor of $6$ will
   cover both $x$ and $\bar{y}$. (In this example, even a factor of $5$
   suffices.) }
\label{fig:factor12}
\end{figure}

Thus, 
\[ d(y,x) \leq d(y,\hat{x}) + \left( \sum\limits_{(x',x'') \in \pi[\hat{x},x]} d(x',x'') \right)  \leq 5 \rho_i(y).\]
Here, we denote by $\pi[\hat{x},x]$ the sub-path of $\pi$ from $\hat{x}$ to $x$, and use Inequalities~\ref{ineq:edge-in-Gi} and~\ref{ineq:rhi} in the second step.

Now, $\YN{\bar{x}}{i} \cap Y_i \neq \varnothing$. Let $\bar{y} \in \YN{\bar{x}}{i} \cap Y_i$ be chosen arbitrarily. Clearly, $d(\bar{x},\bar{y}) \leq d(\bar{x}, \yd{i}{\bar{x}}) \leq \rho_i(y).$

We calculate that 
\[ d(y,\bar{y}) \leq d(y,\hat{x}) + \left( \sum\limits_{(x',x'') \in \pi[\hat{x},\bar{x}]} d(x',x'') \right) + d(\bar{x},\bar{y}) \leq 6 \rho_i(y).\]

Thus, the ball $\delta(y, 6 \rho_i(y))$ contains both $x$ and $\bar{y} \in Y_i$, completing the proof of \Cref{cl:assert}.

\end{proof}
\begin{remark} 
With a more detailed argument, the factor $12^{\alpha}$ can be improved. For instance, a bound of  $11^{\alpha}$ is almost immediate from the proof.
\end{remark}

We can now establish the approxmation guarantee for \Cref{alg:MMC} and our main result.
 
\begin{theorem}
    \label{thm:result}
    Given point sets $X$ and $Y$ in a metric space $(X \cup Y, d)$ and a 
    positive integer $k \leq |Y|$, \Cref{alg:MMC} runs in polynomial 
    time and returns a $k$-cover of $X$ with cost at most $2 \cdot {(12 \cdot 9)}^{\alpha}$ times that of an optimal $k$-cover.
\end{theorem}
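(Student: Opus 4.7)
The plan is to assemble the theorem from three pieces already in place: the pairwise disjointness of the server subsets in $\F$ produced by \Cref{lem:spec-full}, the per-level translation of an outer cover into a cheap $1$-cover given by \Cref{cl:Bi}, and the lower bound on the optimum in terms of the sum of outer-cover costs provided by \Cref{lem:lbound}. Once these three are in hand, the argument reduces to tracking constants.

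First I would argue feasibility. By \Cref{lem:spec-full} the $k$ subsets in $\F$ are pairwise disjoint, so the successive calls to $\text{Cover}(X, Y', \alpha)$ never reassign the same server: each $y \in Y$ lies in at most one $Y' \in \F$, so $r(y)$ is set at most once. Each call returns a $1$-cover of $X$ using the servers in $Y'$; since there are $k$ pairwise disjoint such subsets, the union of these $1$-covers gives a $k$-cover of $X$, so $r$ is a valid MMC solution.

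Next I would bound the cost. For each $Y' \in \F$, let $i \in \{l, l+1, \ldots, k\}$ be its associated level and let $\mu_i$ be the cost of an optimal level-$i$ outer cover. By \Cref{cl:Bi} there is a $1$-cover of $X$ using only servers in $Y'$ with cost at most $12^\alpha \cdot \mu_i$, and $\text{Cover}(X, Y', \alpha)$ returns a $1$-cover within a factor $3^\alpha$ of the best such cover; hence the cost charged to $Y'$ is at most $36^\alpha \cdot \mu_i$. Each level $i$ with $l+1 \leq i \leq k$ contributes exactly two subsets ($Y_i^s$ and $Y_i^p$) to $\F$, and when $k$ is odd level $l$ additionally contributes $Y_l^p$, so summing over $\F$ gives
\[
\cost(r) \; \leq \; 2 \cdot 36^\alpha \cdot \sum_{i=1}^{k} \mu_i.
\]
Applying \Cref{lem:lbound} to an optimal $k$-cover $r^*$ yields $\sum_{i=1}^{k} \mu_i \leq 3^\alpha \cdot \cost(r^*)$, and combining the two inequalities gives $\cost(r) \leq 2 \cdot 108^\alpha \cdot \cost(r^*) = 2 \cdot (12 \cdot 9)^\alpha \cdot \cost(r^*)$, which is exactly the claimed bound. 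Polynomial runtime is immediate, since \Cref{lem:spec-full} produces $\F$ in polynomial time and the Cover subroutine is polynomial by assumption.

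There is no substantive obstacle remaining at this stage: all of the difficulty has been absorbed by \Cref{lem:spec-full} (the construction of disjoint subsets with the two-hop hitting property) and by \Cref{cl:Bi} (which exploits that two-hop property to pay only a constant factor when re-hosting outer-cover balls on servers in $Y'$). The main care needed is purely bookkeeping: making sure each level index $i$ on the server-subset side is matched with the correct outer-cover cost $\mu_i$ on the lower-bound side, and observing that no level $i$ is counted more than twice when summing across $\F$.
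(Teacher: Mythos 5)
Your proof is correct and follows essentially the same route as the paper's: feasibility from the pairwise disjointness of the subsets in $\F$, the cost bound per subset from \Cref{cl:Bi} combined with the $3^\alpha$ guarantee of the Cover black box, and the final comparison to the optimum via \Cref{lem:lbound}. The only (immaterial) difference is that you charge each subset against the optimal level-$i$ outer-cover cost $\mu_i$ directly, whereas the paper charges against a fixed family of outer covers $\rho_i$ whose total cost is bounded by $3^\alpha\cdot\cost(r')$; both yield the same constant.
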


\begin{proof}
It is evident that the algorithm runs in polynomial time, and we have already noted that it returns a $k$-cover $r$. Let $r'$ be any optimal assignment. By \Cref{lem:lbound}, there exist outer covers $\rho_i$, for $1 \leq i \leq k$, such that 
\[\sum\limits_{i = 1}^k  \cost(\rho_i) \leq 3^{\alpha} \cdot \cost(r').\]

Assume that either (a) $l + 1 \leq i \leq k$ and $Y_i$ is either or $Y_i^p$ or $Y_i^s$, or (b) $k$ is odd, $i = l$, and $Y_i = Y_i^p$. From \Cref{cl:Bi}, we conclude that there is a $1$-cover for $X$ that uses servers $Y_i$ and has cost at
most $12^{\alpha} \cdot \cost(\rho_i)$. Since $\Cover(X,Y_i,\alpha)$, which is invoked in \Cref{alg:MMC}  returns 
a $3^{\alpha}$ approximation, the cost of the $1$-cover it returns is at most
$(12 \cdot 3)^{\alpha} \cost(\rho_i)$. 

At most two $1$-covers are computed for each $i$, once with server set $Y_i^s$ and once with server set $Y_i^p$. Thus,
 
\[
    \cost(r) 
    \leq 2 \cdot (12 \cdot 3)^{\alpha} \cdot \sum\limits_{i = l}^k \cost(\rho_i)
    \leq 2 \cdot (12 \cdot 3)^{\alpha} \cdot \sum\limits_{i = 1}^k \cost(\rho_i)
         \leq 2 \cdot {(12 \cdot 9)}^{\alpha} \cdot \cost(r').
\]
\end{proof}

\section{The \texorpdfstring{$t$}{t}-MMC Problem}
\label{sec:tMMC}
In this section, we describe a natural generalization of the MMC problem, called the $t$-MMC problem. The input to this problem is similar to the MMC problem -- the two point sets $Y$ (servers) and $X$ (clients) in an arbitrary metric space $(X \cup Y, d)$, a positive integer $k$ that represents the coverage demand of each client, a constant $\alpha$. There is an additional input, an integer $t$, that represents the upper bound on the number servers that can be opened or \emph{used} in the solution.

A \emph{$k$-cover using at most $t$ servers} is a subset $Y' \subseteq Y$ such that $|Y'| \le t$, together with an assignment $r: Y' \to \mathbb{R}^+$ that $k$-covers $X$. Here, the cost of the solution is defined as $\cost(r) = \sum_{y \in Y'} (r(y))^\alpha$. Intuitively, the restriction of $t$ is analogous to the cardinality restrictions imposed on the solutions in problems like $t$-center, $t$-median and so on.

Now, the goal of the $t$-MMC problem is to compute a minimum cost $k$-cover using at most $t$ servers. In comparison to the MMC problem, the additional complexity arises from having to decide which $t$ servers to use for $k$-covering $X$. We give an $O(1)$ approximation for this problem. Here, we assume that $k \le |Y|$ and $k \le t$, so that the given instance is feasible. 

\subsection{Algorithm}
\label{subsec:tMMC-alg}
The $O(1)$ approximation algorithm for the $t$-MMC problem consists of the following steps.
\begin{enumerate}
\item We first compute a family $\F = \{Y_k^s, Y_k^p, Y_{k-1}^s, Y_{k-1}^p, \ldots \}$ consisting of $k$ pairwise disjoint subsets of $Y$, using the algorithm of \Cref{lem:spec-full}. For convenience, let us rename this family of servers as $\F = \{V_1, V_2, \cdots, V_k\}$ respectively.
\item For each $1 \le i \le k$, and for each $1 \le t_i \le t$, we compute a $1$-cover of $X$, using at most $t_i$ servers from $V_i$. Here, we use the polynomial time approximation algorithm of Charikar and Panigrahy \cite{CharikarP04} for computing $1$-cover using at most $t_i$ servers. Let us denote the solution returned by their algorithm by $S(V_i, t_i)$.
Even though their algorithm is stated for the case of $\alpha = 1$, it generalizes to any $\alpha \ge 1$. It can be shown that the approximation guarantee of their algorithm is $5^\alpha$. 

\item Let us call a $k$-tuple $(t_1, t_2, \ldots, t_k)$ a \emph{valid $k$-tuple} if $1 \le t_i \le t$ for each $i$, and $\sum_{i = 1}^k t_i \le t$.

We compute a valid $k$-tuple $(t_1^*, t_2^*, \ldots, t_k^*)$ that minimizes $\sum_{i = 1}^k \cost(S(V_i, t_i))$, over all valid $k$-tuples $(t_1, t_2, \cdots, t_k)$. Such a valid $k$-tuple can be computed in polynomial time using dynamic programming. We return $\bigcup_{i = 1}^k S(V_i, t_i^*)$ as our solution.
\end{enumerate}

\subsection{Approximation Guarantee}
\label{subsec:tMMC-approx}
It is easy to see that the algorithm described above runs in polynomial time. Also, for each $1 \le i \le k$, $S(V_i, t_i^*)$ $1$-covers $X$ using disjoint servers. Since the final solution $\bigcup_{i = 1}^k S(V_i, t_i^*)$ obtained using dynamic programming is a valid $k$-tuple,  the algorithm computes a $k$-cover of $X$ that uses at most $t$ servers.

For proving the approximation guarantee, we extract from the optimal solution to the $t$-MMC problem, the outer covers $\rho_i$ for each $1 \le i \le k$ with some special properties. The following is an analogue of \Cref{lem:lbound}, however some new ideas are needed to handle the restriction on the number of servers that can be used in the resultant outer covers. The proof of the following theorem is given in \Cref{subsec:outercover-for-tMMC}.

\begin{restatable}{theorem}{toutercover}
\label{thm:tMMC-outercovers}
Let $r': Y' \to \mathbb{R}^+$ be an assignment that constitutes an optimal solution to the $t$-MMC problem, where $Y' \subseteq Y$ with $|Y'| \le t$. For each $l \le i \le k$, we can find level $i$ outer cover $\rho_i$ that uses $t_i'$ servers, such that
\begin{itemize}
\item If $k$ is even, then\ \vspace{0.1cm}\\
\begin{tabular}{cc}

\textbf{1.} $\sum_{i = l + 1}^k \cost(\rho_i) \le 2 \cdot (3 \cdot 3)^\alpha \cdot \cost(r')$, and\qquad  &\textbf{2. } $\sum_{i = l + 1}^k 2 \cdot t_i' \le t$. \\ 
\end{tabular}\vspace{0.1cm}
 
\item If $k$ is odd, then\ \vspace{0.1cm}\\
\begin{tabular}{cc}

\textbf{1.} $\sum_{i = l\ \ }^k \cost(\rho_i) \le 2 \cdot (3 \cdot 3)^\alpha \cdot \cost(r')$, and\qquad &\textbf{2. } $t_l' + \sum_{i = l + 1}^k 2 \cdot t_i' \le t$. \\ 
\end{tabular}

\end{itemize}
\end{restatable}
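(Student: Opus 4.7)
The plan is to adapt the proof of \Cref{lem:lbound} while carefully controlling the number of servers used in each outer cover. Let $r': Y' \to \mathbb{R}^+$ be the optimal $t$-MMC solution with $|Y'| \le t$. I would first apply the argument of \Cref{lem:lbound} to produce candidate outer covers $\rho_i^\ast$ for each $1 \le i \le k$, satisfying $\sum_{i=1}^k \cost(\rho_i^\ast) \le 3^\alpha \cdot \cost(r')$. The cost bound of the theorem is essentially in hand at this point (up to constant factors and restricted to the top half of levels), but each $\rho_i^\ast$ may use as many as $|Y'| \le t$ servers, so the aggregate server count across the top-half levels could be as large as $\Theta(k t)$, far exceeding the budget $t$.

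Next, I would sparsify the candidate outer covers at levels $l+1, \ldots, k$ (and level $l$ when $k$ is odd). The sparsification processes the balls of $\rho_i^\ast$ in decreasing order of radius, keeping a ball whose center is not yet inside a constant-factor (say, factor $3$) expansion of some previously kept ball. By the triangle inequality, the expanded selected balls still form a valid level $i$ outer cover, and the cost grows by at most $3^\alpha$, so that $\sum_i \cost(\rho_i) \le 9^\alpha \cdot \cost(r')$ over the retained levels. The extra factor of $2$ in the theorem's cost bound arises from a partitioning step that makes the support of $\rho_i$ disjoint from the supports at other levels---essentially duplicating each representative to hedge against a server being needed at an adjacent level.

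Finally, for the server-count bound, I would argue that this cross-level disjointness yields $\sum_i t_i' \le |Y'|/2 \le t/2$, so that $\sum 2 t_i' \le t$; the factor of $2$ in $2 t_i'$ absorbs the two-fold use of $\rho_i$ in the algorithm of \Cref{subsec:tMMC-alg}, once for $Y_i^s$ and once for $Y_i^p$. The hard part will be simultaneously achieving cross-level server disjointness and the outer cover property at each level; this calls for an argument reminiscent of the server-availability analysis in \Cref{sec:sa}, exploiting the monotonicity of the required radius $d(x, \yd{i}{x})$ in $i$ and the nesting of the intersection graphs $G_i$ to ensure that a server reserved as a representative at one level does not need to be reused at a lower level.
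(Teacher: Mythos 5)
Your first step matches the paper's: extract outer covers $\bar\rho_1,\dots,\bar\rho_k$ from $r'$ via \Cref{lem:lbound}. But two things go wrong after that. First, you overstate the difficulty in one place and understate it in another: the construction in the proof of \Cref{lem:lbound} already produces pairwise disjoint subsets $B_i$ of the optimal solution's balls, so the aggregate server count is automatically $\sum_i \bar t_i \le |Y'| \le t$, not $\Theta(kt)$; your greedy radius-ordered sparsification is essentially the while-loop already inside that construction and buys nothing new. Second, and this is the genuine gap, your route to the bound $\sum_{i=l+1}^k 2t_i' \le t$ is the assertion that cross-level disjointness yields $\sum_i t_i' \le |Y'|/2$. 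Disjointness only gives $\sum_i t_i' \le |Y'| \le t$, and nothing forces the top-half levels to use at most half the servers: the greedy construction could place nearly all of $Y'$ into $B_k,\dots,B_{l+1}$, in which case $2\sum_{i>l} t_i'$ approaches $2t$ and the budget is violated. The "duplication to hedge against adjacent levels" step is not a mechanism that produces a factor-$2$ saving; it is where the factor $2$ is \emph{spent}.

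The paper closes this gap with two ideas you are missing. (i) A reassignment/sorting argument: relabel the $k$ extracted outer covers in nondecreasing order of server count as $r_k, r_{k-1}, \dots, r_1$, so the $k-l \le l$ sparsest ones are assigned to levels $k$ down to $l+1$; then $2\sum_{i=l+1}^k t_i' \le \sum_{i=1}^k t_i' \le t$ by a simple averaging comparison. (ii) A re-hosting lemma (\Cref{claim:bounded-outercover}): given the original level-$i$ outer cover $\bar\rho_i$ and any $1$-cover $r$ on $t'$ servers (and every outer cover is in particular a $1$-cover), one can build a level-$i$ outer cover supported on the centers of $r$, hence using at most $t'$ servers, with cost at most $3^\alpha(\cost(\bar\rho_i)+\cost(r))$; this is proved by grouping the balls of $\bar\rho_i$ by which ball of $r$ covers a commonly served client and tripling radii via the triangle inequality. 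Applying (ii) to the pairs $(\bar\rho_i, r_i)$ transfers the sparsity of $r_i$ to level $i$ while only doubling the total cost (since each $\bar\rho_j$ appears once as a cover to be re-hosted and once as a host), which is exactly where the factor $2\cdot 9^\alpha$ comes from. Your proposal needs something playing the role of both (i) and (ii); the "server-availability-style" argument you gesture at does not supply either.
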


Given an outer cover $\rho_i$ that uses at most $t'_i$ servers, the following claim constructs an inexpensive $1$-cover of $X$ using at most $t'_i$ servers from $Y_i$. This will help us bound the cost of the solution returned by the algorithm from \Cref{subsec:tMMC-alg}. This claim strengthens \Cref{cl:cost-lambda}, but the proof generalizes easily. 


\begin{claim}
\label{claim:cost-size-lambda}
Assume that either (a) $l + 1 \le i \le k$ and $Y_i$ is either $Y_i^p$ or $Y_i^s$, or (b) $k$ is odd, $i = l$ and $Y_i = Y_i^p$. Let $\rho_i$ be an outer cover of level $i$ using at most $t_i'$ servers from servers from $Y$. Then there is a $1$-cover of $X$ that uses at most $t_i'$ servers from $Y_i$, and has cost at most $12^\alpha \cdot \cost(\rho_i)$ .
\end{claim}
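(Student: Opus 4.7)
The plan is to follow the construction in the proof of \Cref{cl:cost-lambda} verbatim and then append a short cardinality-counting step. Recall that construction: form $B$ by scaling up each ball in $\rho_i$ by a factor of $6$; form $B'$ by processing each $b \in B$ that contains at least one server in $Y_i$, picking an arbitrary such server $y$, translating $b$ to be centered at $y$, and doubling its radius; then, for each server $y \in Y_i$ that ends up with several concentric balls in $B'$, discard all but the largest of the concentric balls. The argument of \Cref{cl:assert} already shows that the resulting $B'$ covers $X$, and each ball of $B'$ is obtained from a ball of $\rho_i$ by a translation and a scaling by $12$, so $\cost(B') \le 12^\alpha \cdot \cost(\rho_i)$ exactly as before. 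None of this needs to be redone.

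The new ingredient is to bound $|B'|$. By hypothesis $\rho_i$ has nonzero radius at only $t_i'$ servers, so $|B| \le t_i'$. The construction of $B'$ inserts at most one new ball per ball of $B$, hence initially $|B'| \le t_i'$; the concentric-deduplication step only decreases $|B'|$. After deduplication the balls of $B'$ are centered at pairwise distinct servers of $Y_i$, so the corresponding assignment $r : Y_i \to \Realp$ opens at most $t_i'$ servers from $Y_i$. This gives a $1$-cover of $X$ drawn from $Y_i$ that uses at most $t_i'$ servers and has cost at most $12^\alpha \cdot \cost(\rho_i)$, as required.

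I do not expect any real obstacle here: the delicate geometric step (showing that, for every $x \in X$, some ball of $B$ contains both $x$ and a server in $Y_i$) is exactly \Cref{cl:assert} and is insensitive to how many servers $\rho_i$ uses, so it transfers verbatim. The only content beyond \Cref{cl:cost-lambda} is the trivial observation that the construction is ``cardinality-preserving'' in the sense that $|B'| \le |B|$, which yields the desired bound on the number of opened servers.
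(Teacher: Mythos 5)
Your proposal is correct and is exactly the ``easy generalization'' the paper alludes to: the paper gives no separate proof of \Cref{claim:cost-size-lambda}, merely noting that the proof of \Cref{cl:cost-lambda} carries over, and your cardinality count ($|B'|\le|B|\le t_i'$, with deduplication leaving at most one ball per distinct server of $Y_i$) is precisely the missing observation.
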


Now, we establish the approximation guarantee for the algorithm described in \Cref{subsec:tMMC-alg}.

\begin{theorem}
\label{thm:tMMC-main}
Given point sets $X$ and $Y$ is a metric space $(X \cup Y, d)$, and positive integers $k$ and $t$ such that $k \le |Y|$ and $k \le t$, the algorithm described in \Cref{subsec:tMMC-alg} runs in polynomial time, and returns a $k$-cover of $X$ using at most $t$ servers from $Y$, and with cost at most $4 \cdot (540)^\alpha$ times that of an optimal $k$-cover that uses at most $t$ servers from $Y$.
\end{theorem}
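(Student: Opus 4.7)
The plan is to combine the three tools already in place: the pairwise disjoint family $\F = \{V_1, \ldots, V_k\}$ produced by \Cref{lem:spec-full}, the decomposition of the optimum into outer covers with controlled server budgets given by \Cref{thm:tMMC-outercovers}, and the per-level ``hosting'' bound from \Cref{claim:cost-size-lambda}. The approximation guarantee will then follow by comparing the output of the dynamic program to one specific feasible tuple that the DP is entitled to choose.

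Feasibility is quick: each $S(V_i, t_i^\ast)$ uses only servers from $V_i$, the sets in $\F$ are pairwise disjoint, so the $k$ one-covers combine over disjoint servers into a $k$-cover, and the DP restriction $\sum_i t_i^\ast \le t$ gives the total server count at most $t$. For the cost bound, I would apply \Cref{thm:tMMC-outercovers} to the optimal assignment $r'$ to obtain outer covers $\rho_i$ (for $l+1 \le i \le k$, plus $\rho_l$ when $k$ is odd), each using $t_i'$ servers, with $\sum_i \cost(\rho_i) \le 2 \cdot 9^\alpha \cdot \cost(r')$ and the budget condition $\sum_{i=l+1}^k 2 t_i' + [k \text{ odd}] \cdot t_l' \le t$. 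I would then define a candidate valid $k$-tuple by assigning capacity $t_i'$ to each of $Y_i^s$ and $Y_i^p$ for $l+1 \le i \le k$, and capacity $t_l'$ to $Y_l^p$ when $k$ is odd. The second conclusion of \Cref{thm:tMMC-outercovers} is precisely the statement that this is a valid tuple, and the fact that $\F$ has exactly $k$ subsets matches this assignment one-to-one.

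Given this candidate tuple, \Cref{claim:cost-size-lambda} guarantees that for each $V_j$ matched to level $i$, a $1$-cover of $X$ exists using at most $t_i'$ servers from $V_j$ and costing at most $12^\alpha \cdot \cost(\rho_i)$. Invoking Charikar--Panigrahy's $5^\alpha$-approximation on the instance $(X, V_j, t_i')$ thus yields $\cost(S(V_j, t_i')) \le 5^\alpha \cdot 12^\alpha \cdot \cost(\rho_i) = 60^\alpha \cdot \cost(\rho_i)$. Summing over the $k$ subsets, each $i \in \{l+1, \ldots, k\}$ contributes the bound twice, and $i = l$ contributes once in the odd case, yielding at most $2 \cdot 60^\alpha \cdot \sum_{i = l}^k \cost(\rho_i) \le 2 \cdot 60^\alpha \cdot 2 \cdot 9^\alpha \cdot \cost(r') = 4 \cdot 540^\alpha \cdot \cost(r')$. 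Since the DP returns the minimum-cost valid tuple, the algorithm's output inherits this upper bound.

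The real obstacle is entirely absorbed into \Cref{thm:tMMC-outercovers}: a naive reduction would let each of the $k$ one-cover instances use up to $t$ servers, overshooting the global budget by a factor of $k$. Once the outer-cover decomposition internalizes the server-count structure of the optimum (with the doubling needed for $Y_i^s$ and $Y_i^p$ built into the budget condition), the remaining proof is the routine composition $5 \cdot 12 \cdot 9 = 540$ and a factor of two from hosting each $\rho_i$ twice.
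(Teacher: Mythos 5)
Your proposal is correct and follows essentially the same route as the paper's own proof: extract budget-controlled outer covers via \Cref{thm:tMMC-outercovers}, host each $\rho_i$ in $Y_i^s$ and $Y_i^p$ via \Cref{claim:cost-size-lambda}, apply the $5^\alpha$ Charikar--Panigrahy bound, and compare the DP output against the candidate valid tuple $(t_k', t_k', \ldots)$ to obtain $4\cdot 540^\alpha$. The only cosmetic difference is that you carry the odd-$k$ case explicitly while the paper treats only the even case and declares the other similar.
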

\begin{proof}
We focus on the case where $k$ is even. The case where $k$ is odd is similar, and is therefore omitted.

We have already argued that the algorithm runs in polynomial time, and the solution produced by the algorithm $k$-covers $X$ using at most $t$ servers. 

Let $r': Y' \to \mathbb{R}^+$ be any optimal assignment that $k$-covers $X$, where $Y' \subseteq Y$, with $|Y'| \le t$. By \Cref{thm:tMMC-outercovers}, there exist outer covers $\rho_i$ that use $t_i'$ servers such that $\sum_{i = l+1}^k \cost(\rho_i) \le 2 \cdot (3 \cdot 3)^\alpha \cdot \cost(r')$, with $\sum_{i = l+1}^k 2 \cdot t_i' \le t$.

For each of $(Y_i^p, t_i', \rho_i)$ and $(Y_i^s, t_i', \rho_i)$, we use \Cref{claim:cost-size-lambda}, to argue that there exist two $1$-covers from $Y_i^p$ and $Y_i^s$ respectively. These $1$-covers have cost at most $(12)^\alpha \cdot \cost(\rho_i)$ each, and each uses at most $t_i'$ servers. Since the $5^\alpha$ approximation of Charikar and Panigrahy \cite{CharikarP04} is used to get two $1$-covers $S(Y_i^p, t_i')$ and $S(Y_i^s, t_i')$, we have that $\cost(S(Y_i^p, t_i')) \le (12 \cdot 5)^\alpha \cdot \cost(\rho_i)$ and $\cost(S(Y_i^s, t_i')) \le (12 \cdot 5)^\alpha \cdot \cost(\rho_i)$.

Note however that $\sum_{i = l+1}^k 2\cdot t_i' \le t$, so $(t_k', t_k', \cdots, t_l', t_l')$ is a valid $k$-tuple, and so the dynamic program of step 3 must have considered the solution
$$\left(\bigcup_{i = l+1}^k S(Y_i^p, t_i')\right) \cup \left(\bigcup_{i = l+1}^k S(Y_i^s, t_i')\right).$$
Since the cost of the solution output by the dynamic program is at most the cost of solution corresponding to this tuple, we have that,
\begin{align*}
\sum_{i = 1}^k \cost(S(V_i, t_i^*)) &\le \sum_{i = l+1}^k \left(\cost(S(Y_i^p, t_i')) + \cost(S(Y_i^s, t_i'))\ \right)
\\&\le 2 \cdot (12 \cdot 5)^\alpha \cdot \sum_{i = l+1}^k  \cost(\rho_i)
\\&\le 2 \cdot (12 \cdot 5)^\alpha \cdot 2 \cdot 9^\alpha \cdot \cost(r') = 4 \cdot (540)^\alpha \cdot \cost(r').
\end{align*} 
\end{proof}

\bibliographystyle{plainnat}
\bibliography{mmc_ref}

\begin{thebibliography}{17}
\providecommand{\natexlab}[1]{#1}
\providecommand{\url}[1]{\texttt{#1}}
\expandafter\ifx\csname urlstyle\endcsname\relax
  \providecommand{\doi}[1]{doi: #1}\else
  \providecommand{\doi}{doi: \begingroup \urlstyle{rm}\Url}\fi

\bibitem[Abu-Affash et~al.(2011)Abu-Affash, Carmi, Katz, and
  Morgenstern]{Abu-AffashCKM11}
A.~Karim Abu-Affash, Paz Carmi, Matthew~J. Katz, and Gila Morgenstern.
\newblock Multi cover of a polygon minimizing the sum of areas.
\newblock \emph{Int. J. Comput. Geometry Appl.}, 21\penalty0 (6):\penalty0
  685--698, 2011.

\bibitem[Alt et~al.(2006)Alt, Arkin, Br{\"o}nnimann, Erickson, Fekete, Knauer,
  Lenchner, Mitchell, and Whittlesey]{AltABEFKLMW06}
Helmut Alt, Esther~M. Arkin, Herv{\'e} Br{\"o}nnimann, Jeff Erickson,
  S{\'a}ndor~P. Fekete, Christian Knauer, Jonathan Lenchner, Joseph S.~B.
  Mitchell, and Kim Whittlesey.
\newblock Minimum-cost coverage of point sets by disks.
\newblock In \emph{Symposium on Computational Geometry}, pages 449--458, 2006.

\bibitem[Bansal and Pruhs(2012)]{BansalP12}
Nikhil Bansal and Kirk Pruhs.
\newblock Weighted geometric set multi-cover via quasi-uniform sampling.
\newblock In \emph{ESA}, pages 145--156, 2012.

\bibitem[Bar{-}Yehuda and Rawitz(2013)]{Bar-YehudaR13}
Reuven Bar{-}Yehuda and Dror Rawitz.
\newblock A note on multicovering with disks.
\newblock \emph{Comput. Geom.}, 46\penalty0 (3):\penalty0 394--399, 2013.

\bibitem[Bhowmick et~al.(2013)Bhowmick, Varadarajan, and Xue]{BhowmickVX13}
Santanu Bhowmick, Kasturi~R. Varadarajan, and Shi-Ke Xue.
\newblock A constant-factor approximation for multi-covering with disks.
\newblock In \emph{Symposium on Computational Geometry}, pages 243--248, 2013.

\bibitem[Bhowmick et~al.(2015)Bhowmick, Varadarajan, and Xue]{BhowmickVX15}
Santanu Bhowmick, Kasturi~R. Varadarajan, and Shi-Ke Xue.
\newblock A constant-factor approximation for multi-covering with disks.
\newblock \emph{JoCG}, 6\penalty0 (1):\penalty0 220--234, 2015.

\bibitem[Bil{\`o} et~al.(2005)Bil{\`o}, Caragiannis, Kaklamanis, and
  Kanellopoulos]{BiloCKK05}
Vittorio Bil{\`o}, Ioannis Caragiannis, Christos Kaklamanis, and Panagiotis
  Kanellopoulos.
\newblock Geometric clustering to minimize the sum of cluster sizes.
\newblock In \emph{ESA}, pages 460--471, 2005.

\bibitem[Byrka et~al.(2010)Byrka, Srinivasan, and Swamy]{byrka2010fault}
Jaroslaw Byrka, Aravind Srinivasan, and Chaitanya Swamy.
\newblock Fault-tolerant facility location: a randomized dependent lp-rounding
  algorithm.
\newblock In \emph{International Conference on Integer Programming and
  Combinatorial Optimization}, pages 244--257. Springer, 2010.

\bibitem[Charikar and Panigrahy(2004)]{CharikarP04}
Moses Charikar and Rina Panigrahy.
\newblock Clustering to minimize the sum of cluster diameters.
\newblock \emph{J. Comput. Syst. Sci.}, 68\penalty0 (2):\penalty0 417--441,
  2004.

\bibitem[Chekuri et~al.(2012)Chekuri, Clarkson, and Har{-}Peled]{ChekuriCH12}
Chandra Chekuri, Kenneth~L. Clarkson, and Sariel Har{-}Peled.
\newblock On the set multicover problem in geometric settings.
\newblock \emph{{ACM} Transactions on Algorithms}, 9\penalty0 (1):\penalty0 9,
  2012.

\bibitem[Freund and Rawitz(2003)]{FreundR03}
Ari Freund and Dror Rawitz.
\newblock Combinatorial interpretations of dual fitting and primal fitting.
\newblock In \emph{WAOA}, pages 137--150, 2003.

\bibitem[Gibson et~al.(2012)Gibson, Kanade, Krohn, Pirwani, and
  Varadarajan]{gibson2012clustering}
Matt Gibson, Gaurav Kanade, Erik Krohn, Imran~A Pirwani, and Kasturi
  Varadarajan.
\newblock On clustering to minimize the sum of radii.
\newblock \emph{SIAM Journal on Computing}, 41\penalty0 (1):\penalty0 47--60,
  2012.

\bibitem[Guha et~al.(2003)Guha, Meyerson, and Munagala]{guha2003constant}
Sudipto Guha, Adam Meyerson, and Kamesh Munagala.
\newblock A constant factor approximation algorithm for the fault-tolerant
  facility location problem.
\newblock \emph{Journal of Algorithms}, 48\penalty0 (2):\penalty0 429--440,
  2003.

\bibitem[Hajiaghayi et~al.(2016)Hajiaghayi, Hu, Li, Li, and
  Saha]{hajiaghayi2016constant}
Mohammadtaghi Hajiaghayi, Wei Hu, Jian Li, Shi Li, and Barna Saha.
\newblock A constant factor approximation algorithm for fault-tolerant
  k-median.
\newblock \emph{ACM Transactions on Algorithms (TALG)}, 12\penalty0
  (3):\penalty0 36, 2016.

\bibitem[Khuller et~al.(2000)Khuller, Pless, and Sussmann]{khuller2000fault}
Samir Khuller, Robert Pless, and Yoram~J Sussmann.
\newblock Fault tolerant k-center problems.
\newblock \emph{Theoretical Computer Science}, 242\penalty0 (1):\penalty0
  237--245, 2000.

\bibitem[Lev-Tov and Peleg(2005)]{Lev-TovP05}
Nissan Lev-Tov and David Peleg.
\newblock Polynomial time approximation schemes for base station coverage with
  minimum total radii.
\newblock \emph{Computer Networks}, 47\penalty0 (4):\penalty0 489--501, 2005.

\bibitem[Swamy and Shmoys(2008)]{swamy2008fault}
Chaitanya Swamy and David~B Shmoys.
\newblock Fault-tolerant facility location.
\newblock \emph{ACM Transactions on Algorithms (TALG)}, 4\penalty0
  (4):\penalty0 51, 2008.

\end{thebibliography}
\clearpage
\markboth{Appendix}{}

\appendix
\section{The Outer Cover Lower Bounds}
\label{sec:outer-cover}
In this section, we prove the lower bounds on the optimal solutions of the MMC problem and the $t$-MMC problem respectively.  
\subsection{The Outer Cover Lower Bound for the MMC Problem}
\label{subsec:outercover-for-MMC}
In this section, we provide the proof, adapted from \cite{BhowmickVX15}, of
\Cref{lem:lbound}. For convenience, we restate the theorem.

\outercover*
\begin{proof}
    Let $B = \{\disk(y, r'(y)) \mid y \in Y\}$ denote the set of balls
    corresponding to the assignment $r'$. We show that it is possible to
    form subsets $B_i \subseteq B$, for 
    each $1 \leq i \leq k$ such that:
    \begin{enumerate}
        \item $\mu_i \leq 3^{\alpha} \cdot \cost(B_i)$.
        \item $B_i \cap B_j = \varnothing$, for each $1 \leq i \neq j \leq k$.
        \item No two balls in $B_i$ intersect, for each $1 \leq i \leq k$.
    \end{enumerate}

    If we show this,~\Cref{lem:lbound} follows because
    \[
        \sum\limits_{i = 1}^k\mu_i \leq 3^{\alpha} \cdot \sum\limits_{i = 1}^k\cost(B_i)
                                  \leq 3^{\alpha} \cdot \cost(B) 
                                  = 3^{\alpha} \cdot  \cost(r').
    \]


    We create the set of balls $B_i$ in a
    top-down manner as described in \Cref{alg:analysis}.

    \begin{algorithm*}[hbt]
     \caption{Compute-Balls}
     \label{alg:analysis}
     \begin{algorithmic}[1]
     \setcounter{ALC@unique}{0}
        \REQUIRE The set of balls $B$ corresponding to a $k$-cover assignment $r'$
        \ENSURE The set of balls $B_i, 1 \leq i \leq k$.
        \FOR {$i = k$ to $1$}
          \STATE Let $\lgt_i(x) \leftarrow $ The largest ball in $B$ that contains $x$.
          \STATE Let ${B_i}' = \{\lgt_i(x) \mid x \in X\}$. \label{lin:Bi}
          \STATE $B_i \leftarrow \varnothing$.
          \WHILE {${B_i}' \neq \varnothing$}
            \STATE Let $b$ be the largest ball in ${B_i}'$.
            \STATE $N \leftarrow $ Set of balls in ${B_i}'$ that intersect $b$.
            \COMMENT {Note: $b \in N$.}
            \STATE $B_i \leftarrow B_i \cup \{b\}$.
            \STATE ${B_i}' \leftarrow {B_i}' \setminus N$.
          \ENDWHILE
          \STATE $B \leftarrow B \setminus B_i$.
        \ENDFOR
     \end{algorithmic}
    \end{algorithm*} 

    We thus have a set of balls $B_i, 1 \leq i \leq k$. It is clear that 
    $B_i \cap B_j = \varnothing$ (Property 2), and no two balls in $B_i$ intersect (Property 3). 
    
    We now verify that each $B_i$ also satisfies Property $1$. For this, consider
    $L_i$, the set of balls obtained by increasing the radius of each ball in
    $B_i$ by a factor of $3$. We argue that $L_i$ is an outer cover of level $i$ for $X$. 

    Fix $x \in X$, and consider the ball $\lgt_i(x)$ in Line 3 of
    iteration $i$. At this point, the balls in $\bigcup_{j= i + 1}^k B_j$ have
    been removed from the original $B$, which had at least $k$ balls containing $x$. Since no two balls in $B_j$ intersect, there is at most one ball in each $B_j$ that contains $x$. Thus, at this point, there are at least $i$ balls left in $B$ that contain $x$. Thus, the radius of
    $\lgt_i(x)$ is at least $d(x, \yd{i}{x})$.
    \begin{enumerate}
       \item If $\lgt_i(x) \in B_i$, then the corresponding ball in $L_i$ has
           radius at least $d(x, \yd{i}{x})$.
       \item If $\lgt_i(x) \notin B_i$, then there is an even larger ball $b$ in $B_i$
           that intersects $\lgt_i(x)$. The ball obtained by multiplying the radius of $b$ by $3$ is in $L_i$; it contains $\lgt_i(x)$ and thus $x$; and it
           has radius at least $d(x, \yd{i}{x})$.
    \end{enumerate}

    Thus, $L_i$ is an outer cover of level $i$ for $X$. We infer that 
    \[
        \mu_i 
        \leq \cost(L_i) 
        \leq 3^{\alpha} \cdot \cost(B_i).
    \]

    Thus, Property~1 holds.
\end{proof}

\subsection{The Outer Cover Lower Bound for the \texorpdfstring{$t$}{t}-MMC Problem}
\label{subsec:outercover-for-tMMC}
In this section, we prove the \Cref{thm:tMMC-outercovers}, which generalizes \Cref{lem:lbound} in the case when the size of the outer covers are restricted to satisfy certain properties. For convenience, we restate the theorem.
\toutercover*
\begin{proof} For simplicity, we prove the theorem only for the case where $k$ is even. The proof of the case where $k$ is odd is similar, and is therefore omitted.

Note that any feasible solution to the $t$-MMC problem is also feasible for the MMC problem. Therefore, we can use \Cref{lem:lbound} to extract from the assignment $r'$, the outer covers $\bar{\rho_i}$ for each $1 \le i \le k$, such that $\sum_{i = 1}^k \cost(\bar{\rho_i}) \le 3^\alpha \cdot \cost(r')$.

These outer covers satisfy the first property, but they may not satisfy the second property. However, if the outer cover $\bar{\rho_i}$ uses $\bar{t_i}$ servers, then it is easy to verify that the proof of \Cref{lem:lbound} ensures that $\sum_{i = 1}^k \bar{t_i} \le t$.

Let us order the above outer covers $\bar{\rho_i}$ in a nondecreasing order of the number of servers used, and rename them according to this ordering as $r_k, r_{k-1}, \ldots, r_1$. Let $t_i'$ denote the number of servers used by the outer cover $r_i$. To transform the outer covers $\bar{\rho_i}$ into the outer covers $\rho_i$ that satisfy the second property of the theorem, we need the following claim.

\begin{claim}
\label{claim:bounded-outercover}
Let $\bar{\rho_i}$ be an outer cover of level $i$, $l \le i \le k$, and $r$ be any $1$-cover that uses at most $t'$ servers. Then there is an outer cover $\rho_i$ of level $i$ that uses at most $t'$ servers, and $\cost(\rho_i) \le 3^\alpha \cdot (\cost(\bar{\rho_i}) + \cost(r))$.
\end{claim}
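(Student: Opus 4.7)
My plan is to construct $\rho_i$ by transporting each active ball of $\bar{\rho_i}$ to a nearby server in the support of $r$, relying on the triangle inequality. Concretely: for each server $y$ with $\bar{\rho_i}(y) > 0$ whose ball serves at least one client in $\bar{\rho_i}$, fix an arbitrary such client $x(y)$, and fix some $z(y) \in \mathrm{supp}(r)$ with $d(z(y), x(y)) \le r(z(y))$; such a server exists because $r$ is a $1$-cover of $X$. I would then define $\rho_i(z) := r(z) + 2 \max_{y\,:\, z(y) = z} \bar{\rho_i}(y)$ for each $z \in \mathrm{supp}(r)$ lying in the image of $y \mapsto z(y)$, and $\rho_i(z) := 0$ otherwise. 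By construction $\rho_i$ is supported on a subset of $\mathrm{supp}(r)$, so it uses at most $t'$ servers.

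Next I would verify that $\rho_i$ is a valid level-$i$ outer cover. Pick any $x' \in X$, let $y$ be a server serving $x'$ in $\bar{\rho_i}$, so that $d(y, x') \le \bar{\rho_i}(y)$ and $\bar{\rho_i}(y) \ge d(x', \yd{i}{x'})$, and set $z := z(y)$. By the triangle inequality,
\[
d(z, x') \le d(z, x(y)) + d(x(y), y) + d(y, x') \le r(z) + 2\bar{\rho_i}(y) \le \rho_i(z),
\]
so $x' \in \disk(z, \rho_i(z))$. Moreover, $\rho_i(z) \ge 2\bar{\rho_i}(y) \ge d(x', \yd{i}{x'})$, so the ball at $z$ serves $x'$ in the level-$i$ outer cover sense.

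For the cost bound, write $M_z := \max_{y\,:\, z(y) = z} \bar{\rho_i}(y)$, so $\rho_i(z) = r(z) + 2 M_z \le 3 \max(r(z), M_z)$ and hence $\rho_i(z)^\alpha \le 3^\alpha (r(z)^\alpha + M_z^\alpha)$. Since the map $y \mapsto z(y)$ is single-valued, each $y$ in its domain contributes to at most one $M_z^\alpha$, so $\sum_z M_z^\alpha \le \sum_y \bar{\rho_i}(y)^\alpha = \cost(\bar{\rho_i})$. Summing the per-$z$ bound over $z \in \mathrm{supp}(r)$ then gives $\cost(\rho_i) \le 3^\alpha (\cost(r) + \cost(\bar{\rho_i}))$, as required.

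The main subtlety is that the anchor $z(y)$ is chosen using only a single client $x(y)$ served by $\disk(y, \bar{\rho_i}(y))$, yet every other client served by that ball must also remain served after the move. The triangle-inequality chain through $x(y)$ in the second paragraph absorbs this at the price of a factor of $2$ on $\bar{\rho_i}(y)$ in the new radius, which then combines with the additive $r(z)$ term via $(a + 2b)^\alpha \le 3^\alpha(a^\alpha + b^\alpha)$ to yield the stated $3^\alpha$ factor.
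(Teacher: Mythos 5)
Your proof is correct and follows essentially the same route as the paper's: both transport each ball of $\bar{\rho_i}$ to a server of the $1$-cover $r$ through a shared client via a three-step triangle-inequality chain, obtain a new radius bounded by $r(z) + 2\bar{\rho_i}(y) \le 3\max(r(z),\bar{\rho_i}(y))$, and charge the cost by letting each ball of $\bar{\rho_i}$ contribute to at most one target server. The only difference is presentational (you define the assignment $y \mapsto z(y)$ ball-by-ball and take a max over each fiber, while the paper groups balls greedily by iterating over the $1$-cover's balls), and the resulting bound is identical.
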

\begin{proof}
Let $B$ and $B'$ be the set of balls corresponding to the outer cover $\bar{\rho_i}$ and the $1$-cover $r$ respectively. We describe an iterative procedure to compute a set $R_i$ of balls, which is initially empty. Each ball in $B$ and $B'$ is initially ``unmarked''. Pick any unmarked ball $b_j' \in B'$, and suppose $b_j' = \delta(y_j', r_{b_j'})$. Let $B_j$ denote the set unmarked of balls from $B$ that serve a client $x \in X$ that is also covered by $b_j'$. Let $r_j'$ be the maximum radius from the set of balls $B_j \cup \{b_j'\}$. Add the ball $\delta(y_j', 3r_j')$ to the set $R_i$. Mark the ball $b_j'$ from $B'$ and the balls $B_j$ from $B$, and repeat the above process until all balls from $B'$ are marked. 

We argue that at the end of this process, the radius assignment $\rho_i$ corresponding to $R_i$ is an outer cover of level $i$ with the claimed properties. Without loss of generality, we assume that each ball $b \in B$ serves some client $x \in X$. Consider a client $x \in X$, and a ball $b \in B$ that serves it. Since $r$ is a valid $1$-cover, there exists a ball $b' \in B'$ that also covers $x$. So, if $b$ was not considered in any iteration before, it will be considered in the iteration when $b'$ is marked, and both $b$ and $b'$ will be marked by the end of that iteration. Thus, at the end of the above process, all balls $b \in B$ will be marked. Also, for each ball $b' \in B'$, we add exactly one ball to $R_i$. Therefore, $|R_i| = |B'| \le t'$.

Now we argue that $\rho_i$ is an outer cover of level $i$. Consider any client $x \in X$. Since $\bar{\rho_i}$ is an outer cover of level $i$, there exists a ball $b \in B$ centered at some $y \in Y$ with radius $r_b \ge d(x, y_i(x))$ that covers $x$. Using the argument from the above paragraph, such a ball $b$ was marked in some iteration. Suppose the ball from the set $B'$ that was marked in that iteration was $b_j' = \delta(y_j', r_{b_j}')$, and the ball $\delta(y_j', 3 r_j')$ was added to $R_i$. Now, 
$$d(x, y_j') \le \ d(x, y) + d(y, x') + d(x', y_j') \le \ r_b + r_b + r_{b_j'} \le 3r_j'$$
Here, $x' \in X$ is a common client served by $b$ in $\bar{\rho_i}$ and covered by $b'$ in $r$. Note that $x'$ may or may not be same as $x$. The last inequality follows because of the choice of $r_j'$. Therefore, the ball $\delta(y_j', 3r_j')$ covers $x$. Also since $d(x, y_i(x)) \le r_b \le r_j' \le 3r_j'$, the ball $\delta(y_j', 3r_j')$ also serves $x$. Thus, it follows that $\rho_i$ is an outer cover of level $i$.

The ball added to $R_i$ in a particular iteration has radius $3r_j'$, where $r_j'$ is the maximum radius from the set $B_j \cup \{b_j'\}$. Now, considering all such balls in $R_i$, the bound on the cost of $\rho_i$ follows.
\end{proof}

Now, we use \Cref{claim:bounded-outercover} for pairs of outer covers
$(\bar{\rho_i}, r_{i})$ to get an outer cover $\rho_i$ of level $i$, for each
$l+1 \le i \le k$ with the desired upper bounds on the cost and the number of
servers used. Now, 
\begin{align*}
\sum_{i = l+1}^k \cost(\rho_i) &\le \sum_{i = l+1}^k 3^\alpha \cdot (\cost(\bar{\rho_i}) + \cost(r_i)) 
\\&\le \sum_{i = 1}^k 3^\alpha \cdot (\cost(\bar{\rho_i}) + \cost(r_i)) 
\\&= 2 \cdot 3^\alpha \sum_{i = 1}^k \cost(\bar{\rho_i}) 
\\&\le 2 \cdot (3 \cdot 3)^\alpha \cost(r')
\end{align*}
The third inequality follows due to the fact that the set of outer covers $\{r_1, r_2, \cdots, r_k\}$ is same as the set of original outer covers $\{\bar{\rho_1}, \bar{\rho_2}, \cdots, \bar{\rho_k}\}$. Note that for each $l+1 \le i \le k$, $\rho_i$ uses at most $t_i'$ servers. Since the outer covers $r_i$ are ordered in a non-decreasing order of the number of servers used, $\sum_{i = l+1}^k 2\cdot t_i' \le \sum_{i = 1}^k t_i' \le t$, and the second property follows.
\end{proof}

\section{The Proof of \texorpdfstring{\Cref{cl:available}}{Claim 3.7}}
\label{sec:server-avail-claim}
In this section we give the proof of \Cref{cl:available} that shows that when \Cref{alg:HC} always succeeds in finding available servers.
\available*
\begin{proof}
    Since $x_c \in X_i$, we infer that $i \leq
    \thr(x_c)$. Using~\Cref{cl:atmost-two}, we have
    \begin{align*}
        |\A{x_c}{i}| & \geq |\A{x_c}{i + 1}| - 2 \\
                       & \geq |\A{x_c}{i + 2}| - 2 -2 \\
                       & \geq \dots \\
                       & \geq |\A{x_c}{\thr(x_c)}| - 2 \cdot (\thr(x_c) - i) \\
                       & \geq k - 2 (k - i)  \tag{$\because |\A{x_c}{\thr(x_c)}| \geq k - 2 (k - \thr(x_c))$} \\
                       & \geq 2     \tag{$\because i \geq l + 1$}
    \end{align*}
    Using an argument from the proof of~\Cref{cl:atmost-two}, none of the
    servers in $\A{x_c}{i}$ are made unavailable in iteration $i$ till $x_c$
    is considered in~\Cref{lin:inner-for-start}. Thus, there are at least two
    servers available when the algorithm executes~\Cref{lin:shared-server}
    corresponding to $x_c$, and~\Cref{cl:available.1} holds.

    The argument for~\Cref{cl:available.2} is similar but requires some case
    analysis. We begin by observing that when the algorithm executes
    \Cref{lin:priv-server} corresponding to $x_c$, there is at least one available
    server  $y \in \YN{x_c}{i}$. Now suppose that in some iteration $i+1 \leq j
    \leq \thr(x_c)$, $\A{x_c}{j} \setminus \A{x_c}{j-1}$ consists of two servers from
    $\YN{x_c}{l}$. By \Cref{cl:atmost-two.2}, a server from $\YN{x_c}{l}$ is the
    farthest server from $x_c$ in $\A{x_c}{j}$. This implies that all servers in
    $\A{x_c}{j-1}$ belong to $\YN{x_c}{l}$, and thus $y \in \YN{x_c}{l}$. This $y$
    is available when the algorithm  executes \Cref{lin:priv-server} corresponding
    to $x_c$. 
     
    We are left with the case that in each iteration $i+1 \leq j \leq
    \thr(x_c)$, $\A{x_c}{j} \setminus \A{x_c}{j-1}$ consists of at most
    one server from $\YN{x_c}{l}$. Using \Cref{cl:loop-invariant.1}, and the fact
    that $\thr(x_c) - i$ iterations have happened since iteration
    $\thr(x_c)$, we have 
    \[ 
       |\A{x_c}{i} \cap \YN{x_c}{l}| \geq
       |\A{x_c}{\thr(x_c)} \cap \YN{x_c}{l}| - (\thr(x_c) - i) \geq l - (k -
       i) \geq 1.
    \] 
    Thus there is at least one server $y' \in \A{x_c}{i} \cap
    \YN{x_c}{l}$. If the server chosen in \Cref{lin:shared-server} corresponding to
    $x_c$ belongs to $\YN{x_c}{l}$, then all available servers
    in $\YN{x_c}{i}$ belong to $\YN{x_c}{l}$. Thus, once again, some server in
    $\YN{x_c}{l}$ is available when the algorithm executes \Cref{lin:priv-server}
    corresponding to $x_c$. If the server chosen in \Cref{lin:shared-server}
    corresponding to $x_c$ does not belong to $\YN{x_c}{l}$, then $y' \in
    \YN{x_c}{l}$ is available when the algorithm  executes \Cref{lin:priv-server}
    corresponding to $x_c$. We have thus shown that~\Cref{cl:available.2}
    holds.
\end{proof}

\section{The Non-uniform MMC Problem}
\label{sec:conc}
In this section, we address the non-uniform version of the metric multi-cover problem, which we refer to as the {\em non-uniform MMC}, and present an $O(1)$ approximation for it. Recall that the input consists of two point sets $Y$(servers) and $X$(clients) in an arbitrary metric space $(X \cup Y, d)$, a constant $\alpha \geq 1$, and a coverage function $\kappa: X \rightarrow
\mathbb{Z}^+$. 

Consider an assignment $r: Y \rightarrow \Real^+$ of radii to each server
in $Y$. This can be viewed as specifying a ball of radius $r(y)$ at each server $y \in Y$. If, for each client $x \in X$, at least $\kappa(x)$ of the corresponding server balls contain $x$, then we say that $X$ is $\kappa$-covered by $r$. That is, $X$ is $\kappa$-covered if for each $x \in X$, 
\[ 
    | \{ y \in Y \ | \ d(x, y) \leq r(y) \} | \geq \kappa(x).
\]

Any assignment  $r: Y \rightarrow \Real^+$ that $\kappa$-covers $X$ is a 
feasible solution to the non-uniform MMC, and the goal is to find a feasible solution that minimizes the cost $\sum_{y
\in Y} \left ( r(y) \right )^{\alpha}$.  We assume that $\kappa(x) \leq |Y|$ for each client $x$, for
otherwise there is no feasible solution. 

To solve the non-uniform MMC problem, our plan is to partition the set of
servers $Y$ into disjoint sets and invoke a $1$-covering algorithm with each
server subset. Unlike the uniform case, each $1$-covering instance thus
generated may only cover a subset of the clients, and not all clients in $X$.
For example, a client $x$ such that $\kappa(x) = 100$ will be involved in $100$
$1$-covering instances, whereas a client $x'$ with demand $50$ would be in $50$
$1$-covering instances.

\subsection{Partitioning Servers}
\label{ssec:partition-n}
Our algorithm for partitioning $Y$ into server subsets uses a criterion that
generalizes that of~\Cref{lem:spec-full}. We adopt terminology for the
non-uniform case from~\Cref{sec:spec}. Let $k$ now denote  $\max_{x \in X}
\kappa(x)$.  For client $x \in X$, let its set of {\em private servers} be
$\YN{x}{l} = \YN{x}{\lceil\kappa(x)/2\rceil}$. For notational convenience, we
denote $\YN{x}{i}$, the set of $i$ nearest servers to $x$ in $Y$, by
$\NN{x}{i}$. 

Before stating the generalized lemma, we need some additional definitions. For
$1 \leq i \leq k$, we define the coverage function $\K{i}: X \rightarrow
\mathbb{Z}^+$ by $\K{i}(x) = \max \{0, \kappa(x) - (i-1)\}$. Thus, $\K{i}$ is
obtained by decreasing the original coverage requirement of each client by $i -
1$, with the proviso that we don't decrease below $0$. For each $1 \leq i \leq
k$, we define an undirected graph $\GL{i}$ with vertex set $X$. We add
$(x,x')$ as an edge in $\GL{i}$ if (a) $i \leq \lceil \kappa(x)/2 \rceil$; (b)
$i \leq \lceil \kappa(x')/2 \rceil$; and (c) $\NN{x}{\kappa(x) - (i-1)} \cap
\NN{x'}{\kappa(x') - (i-1)} \neq \varnothing$. Note that condition if (a) and
(b) hold, condition (c) can also be written as $\NN{x}{\K{i}(x)} \cap
\NN{x'}{\K{i}(x')} \neq \varnothing$. The conditions (a) and (b) ensure that a
client $x$ is isolated in graph $\GL{i}$ 
for $i > \lceil \kappa(x)/2 \rceil$. 

\begin{lem}
  \label{lem:general-interface}
  Let $l = \lceil k / 2 \rceil$. We can efficiently compute a family
  $\F$ of server subsets such that
  \begin{enumerate}
      \item \F contains two subsets \Ys{i} and \Yp{i} for each $1 \leq i < l$.
         For $i = l$, if $k$ is even, \F contains \Ys{l} and \Yp{l}, 
         else \F contains only \Yp{l}.
      \item \F is pairwise disjoint.
      \item Fix $1 \leq i \leq l$. 
      \begin{enumerate}
          \item For any client $x$ with $\kappa(x) \geq 2i - 1$, there is a
              client $x' \in X$ within $3$ hops of $x$ in \GL{i} such that
              $\Yp{i} \cap \NN{x'}{\K{i}(x')} \neq \varnothing$.
          \item If $k$ is even or $i < l$, for any client $x$ with $\kappa(x)
              \geq 2i$, there is a client $x' \in X$ within $3$ hops of $x$ in
              \GL{i} such that $\Ys{i} \cap \NN{x'}{\K{i}(x')} \neq
              \varnothing$.
      \end{enumerate}
  \end{enumerate}
\end{lem}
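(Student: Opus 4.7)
The plan is to extend the approach of \Cref{lem:spec-full} to handle client-specific demands. The algorithm iterates over levels $i = 1, 2, \ldots, l$, analogous to the uniform case's iteration $i = k, k-1, \ldots, l$. At each level $i$, the \emph{active} clients for $\Yp{i}$ are those with $\kappa(x) \geq 2i - 1$, and for $\Ys{i}$ those with $\kappa(x) \geq 2i$; low-demand clients drop out once $i$ exceeds $\lceil \kappa(x)/2 \rceil$. I would maintain a $4$-net $\XL{i}$ of the subgraph of $\GL{i}$ induced by the active clients -- a strengthening of the $3$-net of the uniform case, chosen to match the $3$-hop condition in the lemma statement. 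For each net client $x_c \in \XL{i}$, the algorithm picks a shared server $y_s$ that is farthest available in $\NN{x_c}{\K{i}(x_c)}$ (added to $\Ys{i}$, when $\Ys{i}$ is defined for $x_c$) and a private server $y_p$ that is any available in $\NN{x_c}{\lceil \kappa(x_c)/2 \rceil}$ (added to $\Yp{i}$), mirroring \Cref{lin:shared-server} and \Cref{lin:priv-server} of \Cref{alg:HC}.

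Constructing the $4$-net hierarchy requires care because the $\GL{i}$ do not form a clean chain under subgraph inclusion: as $i$ grows, edges become sparser (since $\K{i}$ shrinks) but low-demand clients also deactivate. The plan is to build the nets inductively so that each client stays in the net within its active window, and so that the analogs of \Cref{cl:chosen-chosen}, \Cref{cl:chosen-unchosen}, and \Cref{cl:subset} carry over for $\GL{i}$: (i) distinct clients in $\XL{i}$ have disjoint $\K{i}$-neighborhoods; (ii) each non-net active client has at most one net neighbor in $\GL{i}$; and (iii) the private neighborhood $\NN{x_j}{\lceil \kappa(x_j)/2 \rceil}$ of an earlier-entering net client is disjoint from $\NN{x_c}{\K{i}(x_c)}$ at $x_c$'s threshold level. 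Property (iii) -- the non-uniform analog of \Cref{cl:subset} -- is what ensures that private picks never touch the current-level neighborhood of a later-arriving net client.

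The main obstacle is the server-availability analysis -- the non-uniform analog of \Cref{cl:loop-invariant,cl:atmost-two,cl:available}. For each client $x_c$ with demand $m = \kappa(x_c)$ and threshold level $i^\star$, properties (i)--(iii) should imply that in each iteration $j < i^\star$ at most one server from $\NN{x_c}{\K{i^\star}(x_c)}$ is stolen (by the shared pick of at most one close net neighbor, if any), and no server from the private neighborhood $\NN{x_c}{\lceil m/2 \rceil}$ is stolen. Hence at the start of iteration $i^\star$ we will have
\[ |\A{x_c}{i^\star}| \ge m - 2(i^\star - 1) \quad \text{and} \quad |\A{x_c}{i^\star} \cap \NN{x_c}{\lceil m/2 \rceil}| \ge \lceil m/2 \rceil - (i^\star - 1). \]
Combined with the activation thresholds $m \ge 2i^\star - 1$ for the private pick and $m \ge 2i^\star$ for the shared pick, these lower bounds yield exactly one and two available servers respectively, matching what the algorithm needs. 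The new subtlety relative to the uniform case is that $m$ varies client-by-client, so the invariants cannot be tied to a single global $k$, and one must carry out the analog of \Cref{cl:atmost-two} (together with the ``farthest available server'' case analysis of \Cref{cl:available}) while respecting each client's individual active window.

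Given server availability, the three conditions of \Cref{lem:general-interface} follow directly: (1) is by construction; (2) from the ``mark unavailable'' invariant; and (3) from the $4$-net property together with the fact that each net client $x_c$ contributes a server in $\NN{x_c}{\K{i}(x_c)}$ to $\Ys{i}$ (when defined) and one in $\NN{x_c}{\lceil \kappa(x_c)/2 \rceil} \subseteq \NN{x_c}{\K{i}(x_c)}$ to $\Yp{i}$, so that for any active client $x$ the $3$-hop reach to some $x' \in \XL{i}$ immediately yields the required non-empty intersection.
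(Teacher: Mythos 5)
There is a genuine gap, and it sits exactly where the paper needed its one new idea. Your availability argument rests on property (iii): that the private picks of a net client never land in $\NN{x_c}{\K{i}(x_c)}$ for a client $x_c$ that enters the net later. In the uniform case this is \Cref{cl:subset}, and it follows because the earlier-entering client $x_j$ is still in the net $X_i$ at $x_c$'s threshold, so net-disjointness applies and $\YN{x_j}{l} \subseteq \YN{x_j}{i}$. In the non-uniform case the dangerous scenario is precisely the one where this reasoning is unavailable: a low-demand client $x_2$ (say $\kappa(x_2)=50$) is in the net for iterations $1,\dots,25$ and then \emph{deactivates}, after which a high-demand client $x_1$ (say $\kappa(x_1)=100$) with $\NN{x_1}{75} \cap \NN{x_2}{25} \neq \varnothing$ enters the net at iteration $26$. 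Nothing in the net structure prevents $x_2$'s private picks (drawn from $\NN{x_2}{25}$) from lying in $\NN{x_1}{\kappa(x_1)-(i-1)}$, so $x_1$ can lose \emph{two} servers per pre-threshold iteration rather than one. The counting in your displayed bounds is exactly tight (e.g., $x_1$ needs $\lceil\kappa(x_1)/2\rceil - (i^\star-1) = 25$ private servers surviving to iteration $26$ to feed its own $25$ remaining private picks), so losing even one extra server breaks the argument. Your proposal acknowledges that the invariants ``cannot be tied to a single global $k$'' but asserts (iii) will ``carry over''; it does not, and no choice of net radius fixes this.

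The paper's resolution is a preprocessing step you are missing: it defines a \emph{threatening} relation ($x_2$ threatens $x_1$ if $\kappa(x_1) > \kappa(x_2)$ and $\NN{x_1}{\kappa(x_1)-\lfloor\kappa(x_2)/2\rfloor} \cap \NN{x_2}{\lceil\kappa(x_2)/2\rceil} \neq \varnothing$) and greedily extracts a representative subset $\oX \subseteq X$ containing no threatening pair; the nets are then $3$-nets of $\GL{i}[\oX]$, and the case $i > \lceil\kappa(x')/2\rceil$ in the availability proof is handled precisely by ``$x'$ does not threaten $x$.'' This also explains the $3$-hop bound differently from your $4$-net: a client of $\oX$ reaches a net client in $2$ hops, and a client outside $\oX$ threatens some representative in $\oX$, contributing one extra edge of $\GL{i}$. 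The rest of your outline (iterating $i=1,\dots,l$ with activation thresholds $2i-1$ and $2i$, farthest-available shared picks, the analogs of \Cref{cl:loop-invariant} and \Cref{cl:atmost-two}) does match the paper, but without the filtering step the construction as you describe it can fail to find an available private server.
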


Going back to the non-uniform MMC problem, \Yp{i} will be used to $1$-cover the
clients $\{ x \in X \mid \kappa(x) \geq 2i - 1\}$, and \Ys{i} will be used to
$1$-cover $\{ x \in X \mid \kappa(x) \geq 2i \}$. Suppose that $\kappa(x_1) =
100, \kappa(x_2) = 50$ for some $x_1, x_2 \in X$. The plan is use each of the sets
\Ys{i}, \Yp{i} for $1 \leq i \leq 25$ to cover both $x_1$ and $x_2$ once. The
additional demand for $x_1$ is met by using each of the server sets \Ys{j},
\Yp{j} for $25 < j \leq 50$ to cover $x_1$ once.

In the remainder of this section, we establish~\Cref{lem:general-interface}.

\subsubsection{Forming Nets from Filtered Clients}
Roughly speaking, our approach is to extend the proof of~\Cref{lem:spec-full},
i.e. (a) compute a hierarchy of nets $\XL{1} \subseteq \XL{2} \subseteq \dots
\subseteq \XL{k}$, and (b) In each iteration $i = 1 \dots k$, let each client in
\XL{i} add one server to \Yp{i} and one server to \Ys{i}. There is one obstacle
that arises in this approach, and this motivates the following definition. 

\begin{defn}
   \label{defn:threaten}
   We say that client $x_2$ threatens client $x_1$ if
   \begin{itemize}
      \item $\kappa(x_1) > \kappa(x_2)$, and
      \item $\NN{x_1}{\kappa(x_1) - \lfloor \kappa(x_2)/2 \rfloor} \cap \NN{x_2}{\kappa(x_2) - \lfloor \kappa(x_2)/2 \rfloor} \neq \varnothing$.
   \end{itemize}
\end{defn}

Observe that $\NN{x_2}{\kappa(x_2) - \lfloor \kappa(x_2)/2 \rfloor} = \NN{x_2}{l}$, and thus the second condition informally says that some private servers of $x_2$ are also ``inner'' servers of $x_1$. 

To help understand the definition, consider the following example: suppose that $\kappa(x_1) = 100$, $\kappa(x_2) = 50$, and $x_2$ threatens $x_1$. Thus, $\NN{x_1}{75} \cap \NN{x_2}{25} \neq \varnothing$. The plan for our algorithm is that will provide the coverage required by $x_2$ in the first $25$ iterations, and the coverage required by $x_1$ in the first $50$ iterations. Now suppose that $x_2$ is chosen in the net in the first $25$ iterations. This precludes $x_1$ being in the net in these first $25$ iterations. However, we would like to allow $x_1$ to enter the net in iteration $26$, since $x_2$ is essentially finished at this point, whereas $x_1$ is not.

In each of the first $25$ iterations, we would choose two servers for $x_2$, one of which would be a server from  $\NN{x_2}{25} = \NN{x_2}{l}$. We would like at most  one of these two servers to belong to $\NN{x_1}{75}$, so that $x_1$ has enough nearby servers when it later enters the net. However, we cannot ensure this, since the condition $\NN{x_1}{75} \cap \NN{x_2}{25} \neq \varnothing$ means that a private server of $x_2$ can belong to $\NN{x_1}{75}$. 

Therefore, as a preprocessing step, we compute a representative subset $\oX \subseteq X$ in which no client threatens another:

\begin{claim}
\label{cl:filter}
We can compute in polynomial time a subset $\oX \subseteq X$ of clients such that 
\begin{itemize}
\item For any two clients $x_1,x_2$ such that $x_2$ threatens $x_1$, $x_1 \in \oX \implies x_2 \not\in \oX$;
\item For any client $x \in X \setminus \oX$, there is an $x' \in \oX$ such that
$x$ threatens $x'$.
\end{itemize}
\end{claim}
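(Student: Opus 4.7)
The plan is to give a simple greedy algorithm. Order the clients in non-increasing order of $\kappa$, breaking ties arbitrarily, and scan through them in this order, maintaining $\oX$ initialized to $\varnothing$. When processing a client $x$, add $x$ to $\oX$ if and only if $x$ does not threaten any client currently in $\oX$. The key structural observation driving the algorithm is that ``threatens'' is asymmetric with respect to $\kappa$: by~\Cref{defn:threaten}, $x_2$ threatens $x_1$ only when $\kappa(x_1) > \kappa(x_2)$. Thus the threatener has strictly smaller demand, so processing in decreasing order of $\kappa$ ensures that every potentially threatened client is considered before any client that would threaten it.

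Polynomial running time is immediate: each client is processed at most once, and the threat-check for a single $x$ amounts to examining, for each $x' \in \oX$ with $\kappa(x') > \kappa(x)$, whether the two nearest-neighbor sets appearing in~\Cref{defn:threaten} intersect.

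I would then verify the two required properties in turn. For the first, suppose $x_1 \in \oX$ and $x_2$ threatens $x_1$; then $\kappa(x_1) > \kappa(x_2)$, so $x_1$ was considered before $x_2$, and since $\oX$ only grows we have $x_1 \in \oX$ at the moment $x_2$ is processed. The rejection rule therefore forbids adding $x_2$. For the second property, any $x \in X \setminus \oX$ must have been rejected, which by construction means that at that moment $x$ threatened some $x' \in \oX$; since $\oX$ is monotone, this $x'$ is still in $\oX$ at termination, so $x$ threatens a client in $\oX$ as required.

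The only conceptual subtlety to flag is the worry that a client $x$ added early could be \emph{retroactively invalidated} by a later addition $y$ that threatens $x$. However, such a $y$ would have strictly smaller $\kappa$ than $x$ and hence be processed after $x$; at that moment $x \in \oX$ and the rule refuses to add $y$, so this scenario never arises. I do not anticipate a deeper obstacle.
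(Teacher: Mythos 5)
Your proposal is correct and is essentially the same greedy algorithm as the paper's: the paper processes clients in non-increasing order of $\kappa$ and marks (i.e., pre-rejects) every client that threatens a newly added one, which is exactly equivalent to your rule of rejecting $x$ at processing time iff it threatens some client already in $\oX$. The verification of both properties matches the paper's (omitted) argument, and your observation that the strict inequality $\kappa(x_1)>\kappa(x_2)$ makes the ordering work is precisely the key point.
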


\begin{proof}
Let $\phi$ be any ordering of the clients $X$ such that the $\kappa(\cdot)$ values are non-increasing. Observe that if $x_2$ threatens $x_1$ then $x_2$ occurs after $x_1$ in $\phi$. We initialize $\oX$ to be empty, and assume all clients are initially unmarked. We process each client in $X$ according to the 
ordering $\phi$ as follows: for each cllient $x$, perform the following actions if $x$ is unmarked:
1) add $x$ to $\oX$ 2) mark all clients of $X$ that threaten $x$. 

It is easily checked that the resultant set of clients $\oX$ satisfies the two properties.
\end{proof}

We compute a hierarchy of nets on \oX, instead of $X$. For any client $x \in X
\setminus \oX$, there is a client $x' \in \oX$ such that $x$ threatens $x'$.
Such an $x'$ will help deal with the coverage requirements of $x$. For each
\GL{i}, we define \HL{i} as the subgraph of \GL{i} induced by $\oX$ i.e. $\HL{i}
= \GL{i}[\oX]$. Recall the definition of \GL{i}, and observe that for $1 \leq j
< i \leq k$, if $(x,x')$ is an edge in $\GL{i}$ it is also an edge in $\GL{j}$.
The same holds for edges in \HL{i}.

We will construct a hierarchy of $3$-nets for clients $\oX$, using the family of
graphs \HL{i}, obtaining an anolog of \Cref{cl:ruling}.   

\begin{claim}
\label{cl:ruling-n}
There is a polynomial time algorithm that computes a hierarchy
\[
    \XL{1} \subseteq \XL{2} \subseteq \dots \subseteq \XL{k},
\]
where each $\XL{i} \subseteq \oX$ is a  $3$-net of $\HL{i}$. 
\end{claim}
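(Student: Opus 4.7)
My plan is to mirror the inductive construction used in \Cref{cl:ruling}, but traverse the hierarchy in the opposite direction. The structural fact I will exploit is the one stated just above this claim: for $j < i$, every edge of $\GL{i}$ is also an edge of $\GL{j}$, and the same holds for $\HL{i}$, so $\HL{i+1}$ is a subgraph of $\HL{i}$. Consequently the hop-distance between any two vertices in $\HL{i+1}$ is at least their hop-distance in $\HL{i}$, and any set whose elements are pairwise at hop-distance at least $3$ in $\HL{i}$ retains this property in $\HL{i+1}$.

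First I would build $\XL{1}$ by the standard greedy $3$-net procedure on $\HL{1}$: initialize $\XL{1} \leftarrow \varnothing$, then repeatedly pick an arbitrary vertex of $\oX$ that is not within hop-distance $2$ in $\HL{1}$ of any vertex currently in $\XL{1}$, and add it. The usual argument certifies that the result is a $3$-net of $\HL{1}$. Given $\XL{i}$, I would then construct $\XL{i+1}$ by setting $\XL{i+1} \leftarrow \XL{i}$ and running the analogous greedy augmentation on $\HL{i+1}$: while some vertex of $\oX$ is not within hop-distance $2$ (measured in $\HL{i+1}$) of $\XL{i+1}$, add such a vertex to $\XL{i+1}$. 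The termination condition supplies the covering property of a $3$-net. For the separation property, consider two vertices in $\XL{i+1}$: if both were inherited from $\XL{i}$, their hop-distance in $\HL{i}$ is at least $3$, hence the same bound holds in the subgraph $\HL{i+1}$; if at least one was added in this round, the greedy rule guarantees it was not within hop-distance $2$ of anything already in $\XL{i+1}$ at insertion time.

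The one point I would verify carefully is the treatment of vertices that become isolated at level $i+1$, namely clients $x$ with $\lceil \kappa(x)/2 \rceil = i$: these have no incident edges in $\HL{i+1}$, so they cannot be within hop-distance $2$ of any other vertex. The augmentation loop therefore inserts every such vertex, which is both necessary for the covering property (an isolated vertex must lie in every $3$-net) and consistent with the desired inclusion $\XL{i} \subseteq \XL{i+1}$. I do not anticipate a serious obstacle here: once the subgraph relation $\HL{i+1} \subseteq \HL{i}$ is invoked, the proof is essentially a mirror image of \Cref{cl:ruling}, and the overall procedure is polynomial since each iteration of the augmentation adds at least one new vertex of $\oX$ and the hop-distance tests are polynomial-time checks on $\HL{i}$.
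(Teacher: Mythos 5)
Your proposal is correct and matches the paper's intent exactly: the paper gives no separate proof for this claim, stating only that it is the analogue of \Cref{cl:ruling}, and your argument is precisely that construction transported to the graphs $\HL{i}$ — start with a greedy $3$-net of the largest graph $\HL{1}$ and greedily augment through the subgraphs $\HL{2} \supseteq$-wise down the chain, using the fact that hop-distances can only grow in a subgraph. Your extra observation about vertices that become isolated at level $i+1$ being forced into the net is a correct and harmless refinement of the same argument.
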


\subsubsection{Computing Disjoint Server Subsets}

Our algorithm for computing the family \F of server subsets, as stated
in~\Cref{lem:general-interface}, is described in~\Cref{alg:HC-N}. In many ways,
it is analagous to \Cref{alg:HC}, so we only highlight the key differences. One
syntactic feature worth drawing attention to is that index $i$ goes up from $1$
in the for loop in Line 3, as opposed to the for loop in
\Cref{lin:outer-for-start-n} of \Cref{alg:HC} where it decreased starting from
$k$. Thus, iteration $i$ in \Cref{alg:HC-N} corresponds to iteration $k - (i-1)$
in \Cref{alg:HC}. 

In iteration $i$, we consider each client $x_c \in \XL{i}$ in the for loop in
Line 5, but we add the farthest available server in
$\NN{x_c}{\kappa(x_c) - (i - 1)}$ to $\Ys{i}$ only if $\kappa(x_c) \geq 2i$, and
any available server from $\NN{x_c}{l}$ to $\Yp{i}$ only if $\kappa(x_c) \geq 2i -
1$.

\begin{algorithm*}[hbt]
 \caption{\specn}
 \label{alg:HC-N}
 \begin{algorithmic}[1]
 \setcounter{ALC@unique}{0}
    \STATE $l \leftarrow \lceil k / 2 \rceil$ \label{line:l-n}
    \STATE Compute $\XL{1} \subseteq \XL{2} \subseteq \cdots \subseteq \XL{k}$ using Claim~\ref{cl:ruling-n}.
    \FOR {$i = 1$ \TO $l$} \label{lin:outer-for-start-n}
        \STATE Let $\Ys{i} \leftarrow \varnothing, \Yp{i} \leftarrow \varnothing$.
        \FORALL {$x_c \in \XL{i}$} \label{lin:inner-for-start-n}
           \IF {$\kappa(x_c) \geq 2i$}
              \STATE $y_s \leftarrow \mbox{ farthest available server in }
                      \NN{x_c}{\kappa(x_c) - (i - 1)}$. \label{lin:shared-server-n}
              \STATE $\Ys{i} \leftarrow \Ys{i} \cup \{y_s\}$. Mark $y_s$ as not
                 available.
           \ENDIF
           \IF {$\kappa(x_c) \geq 2i - 1$}
              \STATE $y_p \leftarrow \mbox{ any available server in }
                     \NN{x_c}{l}$. \label{lin:priv-server-n}
              \STATE $\Yp{i} \leftarrow \Yp{i} \cup \{y_p\}$. Mark $y_p$ as not
                 available.
           \ENDIF \label{lin:phase3-end-n}
        \ENDFOR 
    \ENDFOR
    \STATE $\F \leftarrow \varnothing$
    \FOR {$i = 1$ \TO $l$} \label{lin:family-begin-n}
       \IF{ $k$ is even or $i < l$} 
           \STATE $\F \leftarrow \F \cup \{\Ys{i}\}$
       \ENDIF
       \STATE $\F \leftarrow \F \cup \{\Yp{i}\}$
    \ENDFOR
 \end{algorithmic}
\end{algorithm*}

Assuming that servers are available when the algorithm looks for them, we can
now establish~\Cref{lem:general-interface}. Fix an $i$ such that $1 \leq i \leq
l$, and assume that $k$ is even. Let $Z = \{x \in X \ | \ \kappa(x) \geq 2i\}$.

To establish part ($3$) of~\Cref{lem:general-interface}, we want to show that
for any client in $Z$, there is a client $\bar{x}$ within $3$ hops of this
client in \GL{i} such that \Ys{i} contains a server from
$\NN{\bar{x}}{\K{i}(\bar{x})}$. Let us first consider the case of a client $x
\in Z$ that also belongs to $\oX$, and hence is a vertex in \HL{i}. Since
$\XL{i}$ is a $3$-net in $\HL{i}$, there is a path $\pi$ in $\HL{i}$ with at
most $2$ edges (and $3$ vertices) that connects $x$ to some vertex $\bar{x} \in
\XL{i}$. Let $\delta(y, \RHI{i}(y))$ be the biggest ball in outer cover
$\RHI{i}$ that serves at least one vertex on path $\pi$. Suppose that it serves
vertex $\hat{x} \in \pi$. ($\hat{x}$ could be the same as $x$ or $\bar{x}$.)
Note that vertices $x'$ in $\HL{i}$ with $i > \lceil \kappa(x')/2 \rceil$ are
isolated. Thus, $\kappa(x') \geq 2i - 1$ for any vertex $x'$ on this path. We
claim that in fact $\kappa(x') \geq 2i$ for any vertex $x'$. Otherwise, since
$\kappa(x) \geq 2i$, there is an edge $(x',x'')$ in $\pi$ such that $\kappa(x')
= 2i - 1$, and $\kappa(x'') \geq 2i$.  Since $(x',x'')$ is an edge in $\HL{i}$,
we have \[ \NN{x'}{\kappa(x') - (i-1)} \cap \NN{x''}{\kappa(x'') - (i-1)} \neq
\varnothing.\] As $i - 1 = \lfloor \kappa(x')/2 \rfloor$, we see that $x'$
threatens $x''$, a contradiction. We conclude that $\kappa(x') \geq 2i$ for any
vertex $x'$ on $\pi$. Thus, $\kappa(\bar{x}) \geq 2i$, and~\Cref{alg:HC-N} adds
a server from $\NN{\bar{x}}{\kappa(\bar{x}) - (i-1)}$ to \Ys{i}
in Line 7.

Now consider an arbitrary client $\shat{x} \in Z \setminus \oX$. There is a
client $x \in \oX$ such that $\shat{x}$ threatens $x$. Thus, $\kappa(x) \geq
\kappa(\shat{x})$,  so $x \in Z \cap \oX$. Furthermore,
   \[ \NN{x}{\kappa(x) - \lfloor \kappa(\shat{x})/2 \rfloor} \cap \NN{\shat{x}}{\kappa(\shat{x}) -\lfloor \kappa(\shat{x})/2 \rfloor} \neq \varnothing.\]
Since $i - 1 \leq \lfloor \kappa(\shat{x})/2 \rfloor$, we have
   \[ \NN{x}{\kappa(x) - (i-1)} \cap \NN{\shat{x}}{\kappa(\shat{x}) - (i-1)} \neq \varnothing.\]
This implies that $(x_1, x)$ is an edge in $\GL{i}$. Using the preceeding
argument, we can prove there is a client $\bar{x} \in X$ that is $2$ hops away
from $x$ in \GL{i}, such that \Ys{i} has a server added to it from
$\NN{\bar{x}}{\K{i}(\bar{x})}$. We can thus infer that $\bar{x}$ is $3$ hops aways from
$x_1$ in \GL{i}. Thus, if $k$ is even, for any client $x$ such that $\kappa(x)
\geq 2i$ there is a client $\bar{x} \in X$ within $3$ hops of $x$ in \GL{i},
such that $\Ys{i} \cap \NN{\bar{x}}{\K{i}(\bar{x})} \neq \varnothing$. 
If $k$ is odd, a similar argument can be made for $i < l$. This completes the
proof of part ($3b$) of~\Cref{lem:general-interface}, predicated on server
availability. 

Part $(3a)$ of~\Cref{lem:general-interface} is established in a similar way. The
argument is actually simpler, because we do not need to argue $\kappa(\bar{x})
\geq 2i$; it suffices that $\kappa(\bar{x}) \geq 2i - 1$. Combined, this
establishes~\Cref{lem:general-interface}, assuming server availability, which we
prove subsequently.

\subsubsection{Server Availability}
\label{sec:sa-n}
In this section, we show that \Cref{alg:HC-N} finds available servers when it looks for them in Line 7 and Line 10. We define
the {\em threshold level} of a client $x \in \oX$ (denoted by $\thr(x)$) as the smallest $i$ for which $x$ belongs to the net $\XL{i}$. (Some clients in $\oX$ may not be part of any of the nets; when we refer to the threshold level of a client, we implicitly assume that it is in some net, in particular, $\XL{k}$.) For client $x \in X$ and iteration $1 \leq i \leq \lceil \kappa(x)/2 \rceil$ of the for loop 
in Line 3, we define $\A{x}{i}$
to be the set of available servers within $\NN{x}{\kappa(x) - (i-1)}$ at the {\em beginning} of iteration $i$. Note that $\A{x}{1} = \NN{x}{\kappa(x)}$.

To establish availability, it suffices to consider clients $x \in \oX$ for which
$\thr(x) \leq \lceil \kappa(x)/2 \rceil$. For a client $x \in \oX$ for which
$\thr(x) > \lceil \kappa(x)/2 \rceil$, the algorithm never looks for available
servers in its neighborhood in Line 7 and Line 10. 

We now show that any such client $x$ has enough available
servers at the beginning of iteration $i = \thr(x)$ of the
outer loop of~\Cref{alg:HC-N}. This argument is where the intricacies 
of the non-uniform MMC and the need for resolving ``threats'' show up. 

\begin{claim}
\label{cl:loop-invariant-n}
Let $x$ be any client in $\oX$ such that $\thr(x) \leq \lceil \kappa(x)/2 \rceil$, and let $i = \thr(x)$. Then 
\begin{clenum}
    \item $|\A{x}{i} \cap \NN{x}{l}| \geq \lceil \kappa(x)/2 \rceil - (i-1)$. 
    \item $|\A{x}{i}| \geq \kappa(x) - 2(i-1)$. 
\end{clenum}
\end{claim}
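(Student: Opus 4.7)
The plan is to mimic the uniform analysis of \Cref{cl:loop-invariant} and bound, iteration-by-iteration, the number of servers in $\NN{x}{\kappa(x)-(i-1)}$ that the algorithm renders unavailable in iterations $1 \le j < i = \thr(x)$. Specifically, we aim to show that at most one such server is lost in each such $j$. Granting this, at most $(i-1)$ servers in $\NN{x}{\kappa(x)-(i-1)}$ are unavailable at the start of iteration $i$, which gives (b) immediately because $|\NN{x}{\kappa(x)-(i-1)}| = \kappa(x)-(i-1)$. For (a), the hypothesis $i \le \lceil \kappa(x)/2 \rceil$ yields $\NN{x}{\lceil \kappa(x)/2 \rceil} \subseteq \NN{x}{\kappa(x)-(i-1)}$ (a one-line arithmetic check on $\lfloor \kappa(x)/2 \rfloor + 1 \ge i$), and the same pool of at most $(i-1)$ lost servers then suffices.

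Fix $j < i$ and any $x_c \in \XL{j}$; because $j < \thr(x)$, $x \notin \XL{j}$ and in particular $x_c \neq x$. In iteration $j$, the algorithm may render unavailable two servers for $x_c$: a shared one drawn from $\NN{x_c}{\kappa(x_c)-(j-1)}$ and a private one drawn from $\NN{x_c}{\lceil \kappa(x_c)/2 \rceil}$. The shared half of the argument is the direct non-uniform analogue of \Cref{cl:chosen-unchosen}: any shared server landing in $\NN{x}{\kappa(x)-(i-1)} \subseteq \NN{x}{\kappa(x)-(j-1)}$ forces $(x, x_c) \in E(\HL{j})$ (the edge conditions $j \le \lceil \kappa(x)/2 \rceil$ and $j \le \lceil \kappa(x_c)/2 \rceil$ come for free, using $j < i \le \lceil \kappa(x)/2 \rceil$ together with $\kappa(x_c) \ge 2j-1$), and the $3$-net property of $\XL{j}$ in $\HL{j}$ then limits the number of such $x_c$ to at most one.

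The more delicate half is to show that \emph{no} private server chosen in iteration $j$ lies in $\NN{x}{\kappa(x)-(i-1)}$. The plan is a dichotomy on $\kappa(x_c)$. If $\kappa(x_c) \ge 2i-1$, then $x_c \in \XL{j} \subseteq \XL{i}$ alongside $x$, and the $3$-net property of $\XL{i}$ in $\HL{i}$ gives $\NN{x}{\kappa(x)-(i-1)} \cap \NN{x_c}{\kappa(x_c)-(i-1)} = \varnothing$; the condition $\kappa(x_c) \ge 2i-1$ also yields $\lceil \kappa(x_c)/2 \rceil \le \kappa(x_c)-(i-1)$, so the private neighborhood of $x_c$ sits inside $\NN{x_c}{\kappa(x_c)-(i-1)}$ and inherits the disjointness. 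If instead $\kappa(x_c) \le 2i-2$, then $\kappa(x) > \kappa(x_c)$, and since $x, x_c \in \oX$, \Cref{cl:filter} forbids $x_c$ from threatening $x$; by \Cref{defn:threaten} this yields $\NN{x_c}{\lceil \kappa(x_c)/2 \rceil} \cap \NN{x}{\kappa(x)-\lfloor \kappa(x_c)/2 \rfloor} = \varnothing$, and the arithmetic $\lfloor \kappa(x_c)/2 \rfloor \le i-1$ pulls this back to the required disjointness with $\NN{x}{\kappa(x)-(i-1)}$.

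The main obstacle is exactly this dichotomy. The uniform \Cref{cl:subset} leaned on a single synchronized net hierarchy with identical demands across clients; in the non-uniform setting the demand $\kappa(x_c)$ and the threshold index $i$ no longer move together, so we are forced to pick the right structural tool for each regime. When $\kappa(x_c)$ is large enough (relative to $2i-1$) we can still read the separation off the net $\XL{i}$, but when it drops below $2i-1$ the $\XL{i}$-net no longer contains $x_c$ and we must fall back on the filtering step producing $\oX$ together with the no-threat property of \Cref{defn:threaten}. The integer arithmetic around $\lceil \kappa(x_c)/2 \rceil$ and $\lfloor \kappa(x_c)/2 \rfloor$ at the boundary $\kappa(x_c) = 2i-2$ has to be checked carefully to ensure that no private server escapes the argument.
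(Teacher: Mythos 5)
Your proposal is correct and follows essentially the same route as the paper's proof: per iteration $j < i$, at most one shared server from $\NN{x}{\kappa(x)-(i-1)}$ is lost (via the $3$-net property of $\XL{j}$ in $\HL{j}$), and no private server is lost, established by exactly the paper's case split — when $i \leq \lceil\kappa(x_c)/2\rceil$ (your $\kappa(x_c)\geq 2i-1$) use the $3$-net $\XL{i}$ in $\HL{i}$, and otherwise use the no-threat property of $\oX$. The only cosmetic difference is that you run the private-server dichotomy over all of $\XL{j}$ rather than first isolating the unique neighbor of $x$ in $\HL{j}$; the arithmetic at the boundaries checks out.
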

\begin{proof}
    Consider any iteration $j < i$ of the outer loop in 
    Line 3. The client $x$ itself is not part of the
    net $\XL{j}$. Any client $x' \in \XL{j}$ for which some server
    is chosen in  Line 7 or Line 10
    must satisfy $j \leq \lceil \kappa(x')/2 \rceil$. For such a client 
    $x'$, if $\NN{x}{\kappa(x)-(j-1)} \cap \NN{x'}{\kappa(x')-(j-1)} \neq 
    \varnothing$, then $(x,x')$ is an edge in $\HL{j}$. Since $\XL{j}$ is
    a $3$-net in $\HL{j}$, we conclude that there is at most one
    client $x' \in \XL{j}$ such that (a) some server
    is chosen in  Line 7 or Line 10 for 
    $x'$, and (b) $\NN{x}{\kappa(x)-(j-1)} \cap \NN{x'}{\kappa(x')-(j-1)} \neq \varnothing$. If there is no such client, we can conclude 
    that in iteration $j$, no server in $\NN{x}{\kappa(x)-(j-1)}$ (and 
    thus $\NN{x}{\kappa(x)-(i-1)}$) is made unavailable.

    So let us assume that there is one such client $x'$. Next, we argue that
    $\NN{x}{\kappa(x)-(i-1)} \cap \NN{x'}{l} = \varnothing$. Since server choices
    are made for $x'$ in iteration $j$, we have $j \leq \lceil \kappa(x')/2 \rceil$. 

    First consider the case $i \leq \lceil \kappa(x')/2 \rceil$. Since $x$ and $x'$ are both part of the net $\XL{i}$, $(x,x')$ is not an edge in $\HL{i}$. As $i \leq \lceil \kappa(x')/2 \rceil$ and $i \leq \lceil \kappa(x)/2 \rceil$,
we may conclude that $\NN{x}{\kappa(x)-(i-1)} \cap \NN{x'}{\kappa(x')-(i-1)}
= \varnothing$. Also, since $i \leq \lceil \kappa(x')/2 \rceil$, we have $\NN{x'}{l}
\subseteq \NN{x'}{\kappa(x')-(i-1)}$. Thus, $\NN{x}{\kappa(x)-(i-1)} \cap \NN{x'}{l} = \varnothing$.

    Next, consider the case $i > \lceil \kappa(x')/2 \rceil$. Since $\lceil \kappa(x)/2 \rceil \geq i$, we have that $\kappa(x) > \kappa(x')$. Now, since $x'$ does not theraten $x$, we conclude that $\NN{x}{\kappa(x) - \lfloor \kappa(x')/2 \rfloor} \cap \NN{x'}{\kappa(x') - \lfloor \kappa(x')/2 \rfloor} = \varnothing$. Since $\NN{x}{\kappa(x)-(i-1)} \subseteq \NN{x}{\kappa(x) - \lfloor \kappa(x')/2 \rfloor}$, and $\NN{x'}{\kappa(x') - \lfloor \kappa(x')/2 \rfloor} = \NN{x'}{l}$, we conclude that $\NN{x}{\kappa(x)-(i-1)} \cap \NN{x'}{l} = \varnothing$.

   Thus, in iteration $j$, the server choice made for $x'$ in Line 10 is not from $\NN{x}{\kappa(x)-(i-1)}$, whereas the server choice made for
$x'$ in Line 7 may be from $\NN{x}{\kappa(x)-(i-1)}$. 

Since at most one server from $\NN{x}{\kappa(x)-(i-1)}$ is made unavailable in 
each of the $i- 1$ iterations before iteration $i$, we conclude that $\A{x}{i}
\geq \kappa(x) - (i-1) - (i-1)$. The first assertion of the lemma also follows.

\end{proof}

The next claim says that before every iteration $\thr(x) \leq i \leq \lceil \kappa(x)/2 \rceil$, there are enough available servers in $\NN{x}{\kappa(x)- (i-1)}$. These are iterations in which $x$ itself is part of the net, and the argument is identical to that of \Cref{cl:atmost-two}.

\begin{claim}
\label{cl:atmost-two-n}
Let $x \in \oX$, and let $\thr(x) \leq i < \lceil \kappa(x)/2 \rceil$. 
Then 
\begin{clenum}
    \item $|\A{x}{i + 1}| \geq |\A{x}{i}| - 2$
        \label{cl:atmost-two.1-n}
    \item If $|\A{x}{i + 1}| = |\A{x}{i}| - 2$, then one of the servers 
        in $\A{x}{i} \setminus \A{x}{i+1}$ is the farthest server in $\A{x}{i}$
        from $x$. \label{cl:atmost-two.2-n}
\end{clenum}
\end{claim}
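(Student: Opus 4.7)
The plan is to follow the proof of \Cref{cl:atmost-two} essentially verbatim, replacing the net $X_i$ of the uniform intersection graph $G_i$ with the net $\XL{i}$ of the filtered intersection graph $\HL{i}$, and replacing $\YN{x}{i}$ with the shrinking support $\NN{x}{\kappa(x) - (i-1)}$. The only genuine new ingredient is tracking the single server by which this support contracts between consecutive iterations.

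First I would argue that no client $x_c \in \XL{i} \setminus \{x\}$ can remove a server from $\A{x}{i}$ during iteration $i$. For $x_c$ to reach Line 7 (respectively Line 10), one needs $\kappa(x_c) \geq 2i$ (respectively $\kappa(x_c) \geq 2i - 1$), and in either case $i \leq \lceil \kappa(x_c)/2 \rceil$. Together with the hypothesis $i < \lceil \kappa(x)/2 \rceil$, conditions (a) and (b) in the definition of an edge of $\GL{i}$ are satisfied for the pair $\{x, x_c\}$, and therefore also for edges of $\HL{i}$. The Line 7 pick lies in $\NN{x_c}{\kappa(x_c) - (i-1)}$, and the Line 10 pick, which belongs to $\NN{x_c}{\lceil \kappa(x_c)/2 \rceil}$, also lies in $\NN{x_c}{\kappa(x_c) - (i-1)}$ because $i \leq \lceil \kappa(x_c)/2 \rceil$ forces $\lceil \kappa(x_c)/2 \rceil \leq \kappa(x_c) - (i-1)$. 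Since both $x$ and $x_c$ belong to the $3$-net $\XL{i}$ of $\HL{i}$, they are not adjacent in $\HL{i}$, so $\NN{x}{\kappa(x) - (i-1)} \cap \NN{x_c}{\kappa(x_c) - (i-1)} = \varnothing$, which rules out any interference from $x_c$.

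Next I would identify the two ways a server can leave $\A{x}{i}$ on the way to $\A{x}{i+1}$. Since $i < \lceil \kappa(x)/2 \rceil$ implies $\kappa(x) \geq 2i + 1$, the client $x$ enters both Line 7 and Line 10 in iteration $i$, producing $y_s$, the farthest available server in $\NN{x}{\kappa(x) - (i-1)}$, and $y_p$, a distinct available server in $\NN{x}{\lceil \kappa(x)/2 \rceil}$; both belong to $\A{x}{i}$. Apart from these two picks, the only other way a server can drop out is through the shrinking of the support: $\NN{x}{\kappa(x) - (i-1)} \setminus \NN{x}{\kappa(x) - i} = \{\yd{\kappa(x)-(i-1)}{x}\}$. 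Combining the two sources,
\[
\A{x}{i} \setminus \A{x}{i+1} \subseteq \{y_s, y_p, \yd{\kappa(x)-(i-1)}{x}\}.
\]

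I would close with the same case split as in \Cref{cl:atmost-two}. If $\yd{\kappa(x)-(i-1)}{x} \in \A{x}{i}$, then it is the farthest available server and coincides with $y_s$ by Line 7, collapsing the right-hand side to at most two distinct elements; otherwise $\yd{\kappa(x)-(i-1)}{x} \notin \A{x}{i}$ and again only $y_s$ and $y_p$ contribute. In both cases $|\A{x}{i+1}| \geq |\A{x}{i}| - 2$, giving part (a). For part (b), when equality holds, $y_s$ must lie in $\A{x}{i} \setminus \A{x}{i+1}$, and by its definition $y_s$ is the farthest server in $\A{x}{i}$, which is the desired conclusion. The main obstacle, as in the uniform version, is handling the Line 10 pick of a neighboring net client; unlike the uniform case one cannot appeal to a single sharp disjointness lemma, but must invoke the containment $\NN{x_c}{\lceil \kappa(x_c)/2 \rceil} \subseteq \NN{x_c}{\kappa(x_c) - (i-1)}$ noted above, which is precisely why conditions (a) and (b) were built into the edge definition of $\HL{i}$.
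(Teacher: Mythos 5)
Your proof is correct and is exactly the adaptation of the paper's proof of \Cref{cl:atmost-two} that the authors assert (and omit) here: you correctly verify the one genuinely new point, namely that a neighboring net client's private-server pick from $\NN{x_c}{\lceil\kappa(x_c)/2\rceil}$ is still contained in $\NN{x_c}{\kappa(x_c)-(i-1)}$ and hence cannot touch $\A{x}{i}$, and you handle the shrinking support via the farthest-server argument just as in the uniform case. No gaps.
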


We can now assert our final claim about server availability. The proof 
follows from \Cref{cl:loop-invariant-n} and \Cref{cl:atmost-two-n} using 
arguments very similar to  \Cref{cl:available}.

\begin{claim}
    \label{cl:available-n}
\Cref{alg:HC-N} finds an available server whenever it executes  Line 10 or Line 7.
\end{claim}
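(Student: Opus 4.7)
The plan is to mirror the proof of Claim~\ref{cl:available} from the uniform setting, substituting in the non-uniform analogues Claim~\ref{cl:loop-invariant-n} and Claim~\ref{cl:atmost-two-n}. Fix an iteration $i$ of the outer loop of~\Cref{alg:HC-N} and a client $x_c \in \XL{i}$; then $\thr(x_c) \leq i \leq \lceil \kappa(x_c)/2 \rceil$, so both supporting claims are applicable with $x = x_c$. The overall strategy is to first show that Line 7 always finds an available server in $\NN{x_c}{\kappa(x_c) - (i-1)}$, and then refine the argument to show that Line 10 always finds an available server within the more restrictive set $\NN{x_c}{l}$.

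First I would handle Line 7, which fires only when $\kappa(x_c) \geq 2i$. Telescoping Claim~\ref{cl:atmost-two-n}(a) from iteration $\thr(x_c)$ up through iteration $i$ and combining with Claim~\ref{cl:loop-invariant-n}(b) gives
\[ |\A{x_c}{i}| \;\geq\; |\A{x_c}{\thr(x_c)}| - 2(i - \thr(x_c)) \;\geq\; \kappa(x_c) - 2(i-1) \;\geq\; 2. \]
As in the proof of Claim~\ref{cl:atmost-two-n}, the $3$-net property of $\XL{i}$ in $\HL{i}$ prevents any other client $x' \in \XL{i} \setminus \{x_c\}$ from removing a server in $\NN{x_c}{\kappa(x_c)-(i-1)}$ during iteration $i$; hence at least two members of $\A{x_c}{i}$ remain available when Line 7 is invoked for $x_c$, and the ``farthest available server'' is well-defined.

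For Line 10, which fires whenever $\kappa(x_c) \geq 2i - 1$, I would follow the case analysis of Claim~\ref{cl:available.2}. Suppose first there exists an iteration $\thr(x_c) \leq j < i$ in which both servers of $\A{x_c}{j} \setminus \A{x_c}{j+1}$ lie in $\NN{x_c}{l}$. Then Claim~\ref{cl:atmost-two-n}(b) forces the farthest server of $\A{x_c}{j}$ to be in $\NN{x_c}{l}$; by the nested structure of neighborhoods this forces all of $\A{x_c}{j}$, and hence every subsequent $\A{x_c}{j'}$, to lie in $\NN{x_c}{l}$, so any server surviving Line 7 suffices. Otherwise at most one server of $\NN{x_c}{l}$ is lost per iteration in $[\thr(x_c),i-1]$, and Claim~\ref{cl:loop-invariant-n}(a) yields
\[ |\A{x_c}{i} \cap \NN{x_c}{l}| \;\geq\; \lceil \kappa(x_c)/2 \rceil - (\thr(x_c) - 1) - (i - \thr(x_c)) \;=\; \lceil \kappa(x_c)/2 \rceil - (i-1) \;\geq\; 1, \]
using $\kappa(x_c) \geq 2i - 1$.

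The delicate point, and the main obstacle, is that within iteration $i$ itself Line 7 may consume a server of $\NN{x_c}{l}$ before Line 10 is reached. I would resolve this by a sub-case split on whether the farthest server in $\A{x_c}{i}$ (i.e.\ the Line 7 choice) lies in $\NN{x_c}{l}$: if it does, then the nested structure of neighborhoods forces all of $\A{x_c}{i}$ into $\NN{x_c}{l}$, so the second-farthest server survives Line 7 (using $|\A{x_c}{i}| \geq 2$) and witnesses Line 10; if it does not, then the witness in $\A{x_c}{i} \cap \NN{x_c}{l}$ produced above is untouched by Line 7. The ``farthest available server'' rule encoded in Line 7 of~\Cref{alg:HC-N} is precisely what makes this dichotomy go through, exactly as in the uniform setting.
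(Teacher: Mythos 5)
Your proposal is correct and follows exactly the route the paper intends: the paper omits the proof of this claim, stating only that it "follows from Claim~\ref{cl:loop-invariant-n} and Claim~\ref{cl:atmost-two-n} using arguments very similar to Claim~\ref{cl:available}," and your write-up is a faithful instantiation of that plan, with the telescoping bound $|\A{x_c}{i}| \ge \kappa(x_c) - 2(i-1)$ for Line~7 and the same two-case (plus within-iteration sub-case) analysis for Line~10 that appears in the appendix proof of Claim~\ref{cl:available}. The only cosmetic remark is that when $\kappa(x_c) = 2i-1$ Line~7 does not execute, so the "server consumed by Line~7" sub-case is vacuous and the bound $|\A{x_c}{i}| \ge 1$ suffices there; your argument covers this case a fortiori.
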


This completes our proof of~\Cref{lem:general-interface}.

\subsection{Solving the Non-uniform MMC Problem}
\label{ssec:nummc-algo}
In this section, we describe a constant factor approximation for the non-uniform
MMC problem. Recall that our input consists of two point sets $X$ (clients) and
$Y$ (servers) in an arbitrary metric space $(X \cup Y, d)$, a function $\kappa$
representing the coverage demand of each client, and the constant $\alpha \geq
1$.

\begin{algorithm*}[hbt]
 \caption{\mmcn}
 \label{alg:NUMMC}
 \begin{algorithmic}[1]
 \setcounter{ALC@unique}{0}
    \STATE $k \leftarrow \max_{x \in X}\kappa(x), l \leftarrow \lceil k/2 \rceil$.
    \STATE $\F \leftarrow \specn$.
    \COMMENT{Note that $\F = \{\Ys{1}, \Yp{1}, \Ys{2}, \Yp{2},\dots \}$.}
    \STATE For each $y \in Y$, assign $r(y) \leftarrow 0$.
    \FOR {$i = 1$ \TO $l$} \label{lin:cover-begin-n}
       \IF{ $k$ is even or $i < l$} 
          \STATE Let $r_s$ be obtained by invoking $\text{Cover}(\cdot, \Ys{i}, \alpha)$ for clients $\{ x \in X \ | \ \kappa(x) \geq 2i\}$.
          \STATE Let $r(y) \leftarrow r_s(y)$ for each $y \in \Ys{i}$.
       \ENDIF
       \STATE Let $r_p$ be obtained by invoking $\text{Cover}(\cdot, \Yp{i}, \alpha)$ for clients $\{ x \in X \ | \ \kappa(x) \geq 2i - 1\}$.
       \STATE Let $r(y) \leftarrow r_p(y)$ for each $y \in \Yp{i}$.
    \ENDFOR
    \RETURN The assignment $r: Y \rightarrow \Real^+$ \label{lin:cover-end-n}
 \end{algorithmic}
\end{algorithm*}

Our algorithm first computes a family $\F$ consisting of $k$ pairwise disjoint
subsets of $Y$, using the algorithm of \Cref{lem:general-interface}. It then invokes
$\text{Cover}(\cdot,Y',\alpha)$ using a server subset from \F and a selected
subset of clients as follows. Note that $\F = \{ \Ys{1}, \Yp{1}, \Ys{2},
\Yp{2}, \dots \}$. In the $i$-th iteration of the for loop in
Line 4, we use servers in $\Ys{i}$ to $1$-cover the clients
with coverage demand at least $2 i$, and servers in $\Yp{i}$ to $1$-cover
the clients with coverage demand at least $2i - 1$. Notice that if $k$ is odd
and $i = l$, there are no clients with coverage demand at least $2i$. 

The algorithm then returns $r$, the union of the $k$ covers thus formed, 
which satisfies the coverage demand of each client (as the server
subsets in \F are pairwise disjoint). This union can be thought of as the combined
assignment $r: Y \rightarrow \Real^+$; for a server $y$ not belonging to any
subset in \F, we simply set $r(y)$ to $0$.    

\subsection{Outer Covers}
We generalize the notion of an outer cover as used in the uniform MMC. Let
$\kappa': X \rightarrow \mathbb{Z}^+$ be a coverage function where as usual we
assume $\kappa'(x) \leq |Y|$ for any client $x$. For notational convenience, we
denote $\yd{\kappa'(x)}{x}$, the $\kappa'(x)$-th nearest server of client $x$,
by $\ns{x}{\kappa'}$. 

A $\kappa'$-outer cover is 
an assignment $\rho: Y \rightarrow \Real^+$ of radii to the servers such that 
for each client $x \in X$ for which $\kappa'(x) > 0$, there is a server 
$y \in Y$ such that
\begin{enumerate}
        \item The ball $\delta(y, \rho(y))$ contains $x$ i.e.\ $d(y, x) \leq
            \rho(y)$.
        \item Radius of the ball at $y$ is large, that is, $\rho(y) \geq d(x,
           \ns{x}{\kappa'})$.
\end{enumerate}

Given a level $\kappa'$-outer cover $\rho$, and a client $x \in X$ with $\kappa'(x) > 0$, any server $y$ that 
satisfies the two conditions in the definition above is said to {\em serve} $x$;
we also say that the corresponding ball $\delta(y, \rho(y))$ serves $x$. Observe that we do not require that a client $x$ with $\kappa'(x) = 0$ be covered or served. Also observe that an outer cover of level $i$ is a special case of a
$\kappa'$-outer cover where $\kappa'$ is the constant function that takes on the value $i$.

The following lemma gives a lower bound on the cost of the optimal solution for the non-uniform MMC. It is analogous to \Cref{lem:lbound} and its proof follows by similar arguments. Recall that for $1 \leq i \leq k$, the coverage function $\K{i}: X \rightarrow \mathbb{Z}^+$ is defined by by $\K{i}(x) = \max \{0, \kappa(x) - (i-1)\}$.

\begin{lemma}
    \label{lem:lbound-n}
    Let $r': Y \rightarrow \Real^+$ be any assignment that constitutes a
    feasible solution to the non-uniform MMC. For each 
    $1 \leq i \leq k$, let $\MU{i}$ denote the cost of an optimal $\K{i}$-outer cover. Then
 \[ \sum\limits_{i = 1}^k \MU{i} \leq 3^{\alpha} \cdot
            \cost(r').\]

\end{lemma}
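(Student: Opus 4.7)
The plan is to adapt the proof of \Cref{lem:lbound} from \Cref{subsec:outercover-for-MMC}. Let $B = \{\disk(y, r'(y)) : y \in Y\}$ denote the set of balls of the feasible solution. I will construct pairwise disjoint subsets $B_1, B_2, \ldots, B_k \subseteq B$ such that (i) $\MU{i} \leq 3^{\alpha} \cost(B_i)$ for each $i$, and (ii) no two balls within any $B_i$ intersect. Given such a construction, summing gives
\[ \sum_{i=1}^k \MU{i} \leq 3^{\alpha} \sum_{i=1}^k \cost(B_i) \leq 3^{\alpha} \cost(B) = 3^{\alpha} \cost(r'), \]
which is the desired bound.

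The construction mirrors \Cref{alg:analysis} but iterates in the opposite direction, $i = 1, 2, \ldots, k$. This reversal is natural in the non-uniform setting because the most stringent radius requirement at a client $x$ occurs at $i = 1$, where $\K{1}(x) = \kappa(x)$. In iteration $i$, for each client $x$ with $\K{i}(x) > 0$ (equivalently, $\kappa(x) \geq i$), let $\lgt_i(x)$ denote the largest ball currently in $B$ that contains $x$, and let $B_i'$ be the collection of these balls. Extract $B_i$ from $B_i'$ by the greedy procedure of \Cref{alg:analysis}: repeatedly pick the largest remaining ball in $B_i'$, add it to $B_i$, and discard every ball in $B_i'$ intersecting it. Finally, remove $B_i$ from $B$ before moving to iteration $i+1$.

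The central step is showing that the set $L_i$ obtained by tripling the radius of every ball in $B_i$ is a $\K{i}$-outer cover. Fix a client $x$ with $\K{i}(x) > 0$. At the start of iteration $i$, at most $i-1$ of the $\geq \kappa(x)$ balls originally covering $x$ have been removed (one per previous iteration, since the balls inside each $B_j$ are pairwise non-intersecting), so at least $\kappa(x) - (i-1) = \K{i}(x)$ balls still cover $x$. At most $\K{i}(x) - 1$ of these can be centered at the $\K{i}(x) - 1$ closest servers of $x$, so at least one is centered at some $\yd{j}{x}$ with $j \geq \K{i}(x)$, and that ball has radius at least $d(x, \yd{j}{x}) \geq d(x, \ns{x}{\K{i}})$. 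Hence $\lgt_i(x)$ has radius at least $d(x, \ns{x}{\K{i}})$. If $\lgt_i(x) \in B_i$, its tripled version in $L_i$ directly serves $x$; otherwise $\lgt_i(x)$ was discarded in favor of a larger intersecting ball $b \in B_i$, and the tripled $b$ contains $x$ and has radius at least that of $\lgt_i(x)$, so it serves $x$ in $L_i$. This gives $\MU{i} \leq \cost(L_i) = 3^{\alpha} \cost(B_i)$.

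I do not anticipate a significant obstacle: the counting argument only relies on the pairwise disjointness of the balls inside each $B_j$, which remains valid. The one conceptual difference from the uniform case is that clients drop out of consideration once $i > \kappa(x)$, but this is precisely what the definition of a $\K{i}$-outer cover permits (no requirement is imposed on clients with $\K{i}(x) = 0$), so the bookkeeping goes through cleanly.
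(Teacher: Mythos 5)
Your proof is correct and is exactly the adaptation of the proof of \Cref{lem:lbound} that the paper intends (the paper omits the details, saying only that the lemma ``follows by similar arguments''). The one genuine adjustment needed---running the extraction loop upward from $i=1$ so that exactly $i-1$ balls per client have been consumed when the $\K{i}$-outer cover is built---is identified and handled correctly; iterating downward as in \Cref{alg:analysis} would not leave enough balls for the early levels.
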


\subsection{Approximation Guarantee}
To obtain an approximation guarantee for \Cref{alg:HC-N}, we first upper bound the cost of the covers returned in iteration $i$ of the for loop in Line 4. 

\begin{claim} 
   Assume that either (a) $k$ is even and $1 \leq i \leq l$, or (b) $k$ is odd
and $1 \leq i < l$. Let $\RHI{i}$ be any  $\K{i}$-outer cover. There is 
a $1$-cover of the clients $\{x \in X \ | \ \kappa(x) \geq 2i\}$ that uses 
servers from $\Ys{i}$ and has cost at most $16^{\alpha} \cdot \cost(\RHI{i})$.
\label{cl:Bi-n}
\end{claim}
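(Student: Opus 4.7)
The plan is to adapt the proof of \Cref{cl:Bi} to the non-uniform setting, where the two key technical changes are: (i) by part (3b) of \Cref{lem:general-interface}, we can only locate a nearby client $\bar{x}$ meeting $\Ys{i}$ via a path of up to $3$ hops in $\GL{i}$ (rather than $2$ hops in the uniform case), and (ii) we work with a $\K{i}$-outer cover, so the relevant distance bound on the serving ball is with respect to $\yd{\K{i}(x')}{x'}$ rather than $\yd{i}{x'}$. These two changes push the expansion factor from $6$ up to $8$ before translating, leading to the claimed factor of $16^{\alpha} = (8 \cdot 2)^{\alpha}$.

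First I would form the set $B$ of balls obtained by expanding each ball of $\RHI{i}$ by a factor of $8$, and define $B'$ exactly as in the proof of \Cref{cl:Bi}: for each $b \in B$ that contains at least one server in $\Ys{i}$, translate $b$ to be centered at such a server $\bar{y}$, double the radius, and add the resulting ball to $B'$; concentric duplicates are collapsed to the largest. The cost of $B'$ is then at most $(8 \cdot 2)^{\alpha} = 16^{\alpha}$ times $\cost(\RHI{i})$, so it suffices to establish the non-uniform analog of \Cref{cl:assert}: for every client $x$ with $\kappa(x) \geq 2i$, some ball in $B$ contains both $x$ and a server in $\Ys{i}$.

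To prove this analog, fix such an $x$ and apply part (3b) of \Cref{lem:general-interface} (whose hypothesis matches the two cases assumed in the claim) to obtain a client $\bar{x}$ within $3$ hops of $x$ in $\GL{i}$ and a server $\bar{y} \in \Ys{i} \cap \NN{\bar{x}}{\K{i}(\bar{x})}$. Let $\pi$ denote the corresponding path. Every edge $(x',x'')$ of $\GL{i}$ forces $\kappa(x'), \kappa(x'') \geq 2i - 1 \geq i$, so $\K{i}(x') > 0$ for every vertex $x'$ on $\pi$; hence the $\K{i}$-outer cover $\RHI{i}$ serves every vertex of $\pi$. Among the balls of $\RHI{i}$ serving some vertex of $\pi$, pick the largest, say $\delta(y,\RHI{i}(y))$ serving vertex $\hat{x} \in \pi$. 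The natural non-uniform analogs of \Cref{ineq:rhi} and \Cref{ineq:edge-in-Gi} then give $d(x',\yd{\K{i}(x')}{x'}) \leq \RHI{i}(y)$ for every $x' \in \pi$ and $d(x',x'') \leq 2\RHI{i}(y)$ for every edge of $\pi$, while $d(\bar{x},\bar{y}) \leq \RHI{i}(y)$ by the choice of $\bar{y}$.

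Walking from $y$ through $\hat{x}$ out to either endpoint of $\pi$, and using that $\pi$ has at most $3$ edges, I would derive $d(y,x) \leq \RHI{i}(y) + 3 \cdot 2\RHI{i}(y) = 7\RHI{i}(y)$ and $d(y,\bar{y}) \leq \RHI{i}(y) + 3 \cdot 2\RHI{i}(y) + \RHI{i}(y) = 8\RHI{i}(y)$, the worst cases occurring when $\hat{x}$ lies at the opposite endpoint of $\pi$. Hence $\delta(y,8\RHI{i}(y)) \in B$ contains both $x$ and $\bar{y}$, proving the analog of \Cref{cl:assert} and thus the claim. The main obstacle beyond the uniform argument is ensuring that every vertex of $\pi$ is served by $\RHI{i}$; this is exactly where the extra conditions (a) and (b) in the definition of the edges of $\GL{i}$, together with $2i-1 \geq i$ for all $i \geq 1$, come into play to force $\K{i}(x') > 0$ along the entire path.
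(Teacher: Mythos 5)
Your proposal is correct and follows essentially the same route as the paper's proof: expand each ball of $\RHI{i}$ by a factor of $8$, use the $3$-hop guarantee of \Cref{lem:general-interface} together with the non-uniform analogs of \Cref{ineq:rhi} and \Cref{ineq:edge-in-Gi} to get $d(y,x) \leq 7\RHI{i}(y)$ and $d(y,\bar y)\leq 8\RHI{i}(y)$, and then translate-and-double as in \Cref{cl:Bi} for the factor $16^{\alpha}$. Your explicit check that every vertex of $\pi$ has $\K{i}(x')>0$ (so that $\RHI{i}$ actually serves it) is a detail the paper leaves implicit, and it is a worthwhile addition.
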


\begin{proof}
Let $Z = \{x \in X \ | \ \kappa(x) \geq 2i\}$. 
Consider the set $B$ of balls obtained by expanding each ball in the outer cover
$\RHI{i}$ to $8$ times its original radius. It suffices, as in the proof of
\Cref{cl:Bi}, to show the following claim.

\begin{claim}
For any client $x \in Z$, there is some ball in $B$ that contains $x$ as well as at least one server in $\Ys{i}$.
\label{cl:assert-n}
\end{claim}

%
%
%
%

We now turn to the proof of \Cref{cl:assert-n}. Consider an arbitrary client $x
\in Z$. By~\Cref{lem:general-interface}, there is a path $\pi$ in $\GL{i}$ with
at most three edges that connects $x$ to $\bar{x}$, such that $\Ys{i} \cap
\NN{\bar{x}}{\K{i}(\bar{x}}) \neq \varnothing$. Let $\delta(y,
\RHI{i}(y))$ be the biggest ball in outer cover $\RHI{i}$ that serves at least
one vertex on path $\pi$. Suppose it serves vertex $\hat{x}$. Using the
definition of $\K{i}$, and the way we pick the ball $\delta(y, \RHI{i}(y))$, we
have that for any $x' \in \pi$,
\[
   d(x', \ns{x'}{\K{i}}) \leq \RHI{i}(y).
\]

From the definition of $\GL{i}$, we have that for any edge $(x',x'')$ in $\pi$, 
\begin{equation}
    \label[ineq]{ineq:edge-in-Gi-n}
    d(x',x'') \leq d(x', \ns{x'}{\K{i}}) + d(x'',\ns{x''}{\K{i}}) \leq
    2\RHI{i}(y)
\end{equation} 

By~\Cref{lem:general-interface}, $\NN{\bar{x}}{\K{i}(\bar{x})} \cap
\Ys{i} \neq \varnothing$. Let $\bar{y}$ be an arbitrary server in
$\NN{\bar{x}}{\K{i}(\bar{x})} \cap \Ys{i}$. Clearly, 
\[
   d(\bar{x}, \bar{y}) \leq d(\bar{x}, \ns{\bar{x}}{\K{i}}) \leq \RHI{i}(y).
\]

%

We calculate
\[ d(y,x) \leq d(y,\hat{x}) + \left( \sum\limits_{(x',x'') \in \pi[\hat{x},x]} d(x',x'') \right)  \leq 7 \RHI{i}(y),\]
and
\[ d(y,\bar{y}) \leq d(y,\hat{x}) + \left( \sum\limits_{(x',x'') \in \pi[\hat{x},\bar{x}]} d(x',x'') \right) + d(\bar{x},\bar{y}) \leq 8 \RHI{i}(y).\]

Thus, the ball $\delta(y, 8 \RHI{i}(y))$ contains both $x$ and $\bar{y}
\in \Ys{i}$, completing the proof of \Cref{cl:assert-n}.
\end{proof}

The following claim addresses the cost of the cover obtained using the server set $\Yp{i}$. Its proof is very similar to that of \Cref{cl:Bi-n}. 
\begin{claim} 
   Let $1 \leq i \leq l$ and $\RHI{i}$ be any $\K{i}$-outer cover. 
There is  a $1$-cover of the clients $\{x \in X \ | \ \kappa(x) \geq 2i - 1\}$ 
that uses servers from $\Yp{i}$ and has cost at most $16^{\alpha} \cdot \cost(\RHI{i})$.
\label{cl:Bi-n-v1}
\end{claim}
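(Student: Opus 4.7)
The plan is to mimic the proof of \Cref{cl:Bi-n} almost verbatim, replacing $\Ys{i}$ with $\Yp{i}$ and the condition $\kappa(x) \geq 2i$ with $\kappa(x) \geq 2i - 1$. The reason the proof carries over is that part (3a) of \Cref{lem:general-interface} provides, for every client $x$ with $\kappa(x) \geq 2i - 1$, the exact same kind of ``there is a $\bar{x}$ within $3$ hops of $x$ in $\GL{i}$ with $\Yp{i} \cap \NN{\bar{x}}{\K{i}(\bar{x})} \neq \varnothing$'' guarantee that part (3b) gave us in the $\Ys{i}$ case.

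Concretely, let $Z' = \{ x \in X \mid \kappa(x) \geq 2i - 1 \}$ and consider the set $B$ of balls obtained by expanding each ball in $\RHI{i}$ by a factor of $8$. I would first prove the analog of \Cref{cl:assert-n}: for every $x \in Z'$, some ball in $B$ contains both $x$ and at least one server of $\Yp{i}$. Given a client $x \in Z'$, invoke \Cref{lem:general-interface}(3a) to get a path $\pi$ in $\GL{i}$ of at most three edges from $x$ to some $\bar{x}$ with $\Yp{i} \cap \NN{\bar{x}}{\K{i}(\bar{x})} \neq \varnothing$. Pick the largest ball $\delta(y, \RHI{i}(y))$ in $\RHI{i}$ that serves some vertex $\hat{x}$ on $\pi$. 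Exactly as in the proof of \Cref{cl:Bi-n}, for every edge $(x',x'')$ of $\pi$ we have $d(x',x'') \leq 2\RHI{i}(y)$, so summing the at most three edge lengths of $\pi[\hat{x},x]$ (resp.\ $\pi[\hat{x},\bar{x}]$) along with the $\RHI{i}(y)$ from $\hat{x}$ to $y$ and the $\RHI{i}(y)$ from $\bar{x}$ to any $\bar{y} \in \Yp{i} \cap \NN{\bar{x}}{\K{i}(\bar{x})}$ yields $d(y,x) \leq 7 \RHI{i}(y)$ and $d(y,\bar{y}) \leq 8 \RHI{i}(y)$. Thus the ball $\delta(y, 8\RHI{i}(y)) \in B$ covers both $x$ and $\bar{y}$.

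Given this, I build the $1$-cover $B'$ exactly as in \Cref{cl:Bi}: for each ball $b \in B$ that contains some server of $\Yp{i}$, pick an arbitrary such server $y \in \Yp{i}$, translate $b$ to be centered at $y$, and double its radius; then discard all but the largest ball centered at each given server. The resulting $B'$ covers every client in $Z'$, uses only servers in $\Yp{i}$, and each ball in $B'$ is a scaling by $16$ of some ball in $\RHI{i}$, giving $\cost(B') \leq 16^{\alpha} \cdot \cost(\RHI{i})$.

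I do not expect any real obstacle here: the only places the $\Ys{i}$ argument used the condition ``$\kappa(x) \geq 2i$'' were the appeals to \Cref{lem:general-interface}(3b), and the statement of (3a) matches (3b) verbatim except for relaxing that inequality to $\kappa(x) \geq 2i - 1$ and removing the ``$k$ even or $i < l$'' parity restriction. Consequently no case splitting on the parity of $k$ is needed, and the constant of $16^{\alpha}$ is unchanged.
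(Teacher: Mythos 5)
Your proposal is correct and matches the paper's intent exactly: the paper omits the proof of \Cref{cl:Bi-n-v1}, stating only that it is ``very similar to that of \Cref{cl:Bi-n}'', and your substitution of part (3a) of \Cref{lem:general-interface} for part (3b) (together with the relaxed condition $\kappa(x)\geq 2i-1$ and the absence of any parity restriction) is precisely the intended adaptation. The distance bookkeeping ($7\RHI{i}(y)$ and $8\RHI{i}(y)$, then doubling after translation to get $16^{\alpha}$) is carried over correctly.
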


We can now establish the approxmation guarantee for \Cref{alg:HC-N} and the main result of this section.
 
\begin{theorem}
    \label{thm:non-uniform-result}
    Given point sets $X$ and $Y$ in a metric space $(X \cup Y, d)$ and a 
    coverage function $\kappa$, \Cref{alg:HC-N} runs in polynomial 
    time and returns a $\kappa$-cover of $X$ with cost at most $2 \cdot {(16 \cdot 9)}^{\alpha}$ times that of an optimal $\kappa$-cover.
\end{theorem}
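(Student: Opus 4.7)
The approach mirrors the proof of Theorem~\ref{thm:result} for the uniform MMC, combining the outer-cover lower bound of Lemma~\ref{lem:lbound-n} with the per-iteration costing bounds of Claims~\ref{cl:Bi-n} and~\ref{cl:Bi-n-v1}. Polynomial running time is immediate, since \Cref{alg:NUMMC} invokes the polynomial-time partitioning procedure of \Cref{lem:general-interface} followed by at most $k$ calls to $\text{Cover}$.

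First I would verify that the returned assignment $r$ is a feasible $\kappa$-cover. Fix any client $x$ and let $m = \kappa(x)$. The server subsets in $\F$ are pairwise disjoint, so the balls chosen in different iterations of the for loop starting at Line~\ref{lin:cover-begin-n} are centered at distinct servers, and hence contribute independently to the coverage count of $x$. In each iteration $i$ with $2i \leq m$, both the $\Ys{i}$-call and the $\Yp{i}$-call include $x$ among their clients and thus contribute one ball each containing $x$; when $m$ is odd and $i = \lceil m/2 \rceil$, only the $\Yp{i}$-call includes $x$ and contributes an additional ball. Counting gives at least $m$ distinct balls covering $x$, as required.

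Next, for the cost bound, let $r'$ be an optimal $\kappa$-cover. Apply Lemma~\ref{lem:lbound-n} to obtain $\K{i}$-outer covers $\RHI{i}$ for $1 \leq i \leq k$ satisfying $\sum_{i=1}^{k} \cost(\RHI{i}) \leq 3^{\alpha} \cdot \cost(r')$. Now fix a relevant iteration $i$ of the for loop. For the shared set, Claim~\ref{cl:Bi-n} (invoked when $\Ys{i}$ is used, i.e. either $k$ even or $i < l$) supplies a $1$-cover of $\{x : \kappa(x) \geq 2i\}$ using only servers in $\Ys{i}$ whose cost is at most $16^{\alpha} \cdot \cost(\RHI{i})$; Claim~\ref{cl:Bi-n-v1} supplies an analogous $1$-cover of $\{x : \kappa(x) \geq 2i-1\}$ from $\Yp{i}$ of the same cost. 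Since $\text{Cover}(\cdot, \cdot, \alpha)$ is a $3^{\alpha}$-approximation for $1$-covering, the actual $1$-cover returned in iteration $i$ on either server subset has cost at most $(16 \cdot 3)^{\alpha} \cdot \cost(\RHI{i})$.

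Summing over iterations, and using that at most two $1$-covers (one from $\Ys{i}$, one from $\Yp{i}$) contribute in each iteration, I obtain
\[
\cost(r) \leq 2 \cdot (16 \cdot 3)^{\alpha} \sum_{i=1}^{l} \cost(\RHI{i}) \leq 2 \cdot (16 \cdot 3)^{\alpha} \sum_{i=1}^{k} \cost(\RHI{i}) \leq 2 \cdot (16 \cdot 9)^{\alpha} \cdot \cost(r'),
\]
which is the claimed bound. There is no substantive obstacle at this stage, as the hard work already lies in Lemma~\ref{lem:general-interface} (disjoint subset extraction with the hop-distance guarantee) and in the costing Claims~\ref{cl:Bi-n} and~\ref{cl:Bi-n-v1}. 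The only point requiring mild care is the bookkeeping when $k$ is odd: then iteration $i = l$ contributes only through $\Yp{l}$, no $\Ys{l}$ is produced, and the set $\{x : \kappa(x) \geq 2l\}$ is empty, so this case still fits within the factor-of-$2$-per-iteration accounting used above.
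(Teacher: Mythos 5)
Your proof is correct and follows essentially the same route as the paper's: combine \Cref{lem:lbound-n} with \Cref{cl:Bi-n} and \Cref{cl:Bi-n-v1}, pay the $3^{\alpha}$ factor for the $\Cover$ black box, and use the factor-of-two per-iteration accounting. Your explicit coverage-counting argument for feasibility (including the odd-$\kappa(x)$ and odd-$k$ bookkeeping) just fills in a step the paper declares ``easy to check.''
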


\begin{proof}
It is evident that the algorithm runs in polynomial time. It is also easy to check that the assignment $r$ that it returns is a $\kappa$-cover, that is, each client $x$ is covered at least $\kappa(x)$ times. Let $r'$ be any optimal $\kappa$-cover. By \Cref{lem:lbound-n}, there exists a $\K{i}$-outer cover $\RHI{i}$, for $1 \leq i \leq k$ such that 
\[\sum\limits_{i = 1}^k  \cost(\RHI{i}) \leq 3^{\alpha} \cost(r').\]

From \Cref{cl:Bi-n} and \Cref{cl:Bi-n-v1}, and the fact that $\Cover(\cdot,\cdot,\alpha)$ returns a $3^{\alpha}$ approximation, we conclude that the cost of a $1$-cover that is computed in iteration $i$ of the for loop in Line 4 is at most
$(16 \cdot 3)^{\alpha} \cost(\RHI{i})$. At most two $1$-covers are computed in iteration $i$. Thus,
 
\[
    \cost(r) 
    \leq 2 \cdot (16 \cdot 3)^{\alpha} \cdot \sum\limits_{i = 1}^l \cost(\RHI{i})
    \leq 2 \cdot (16 \cdot 3)^{\alpha} \cdot \sum\limits_{i = 1}^k \cost(\RHI{i})
         \leq 2 \cdot {(16 \cdot 9)}^{\alpha} \cdot \cost(r').
\]
\end{proof}

\end{document}